\title{Stochastic Multi-round Submodular Optimization with Budget\thanks{This work is partially supported by: GNCS-INdAM; the PNRR MIUR project FAIR - Future AI Research (PE00000013), Spoke 9 - Green-aware AI; the Project SERICS (PE00000014) under the NRRP MUR program funded by the EU – NGEU; MUR - PNRR IF Agro@intesa; the Italian MIUR PRIN 2017 project ALGADIMAR - Algorithms, Games, and Digital Markets (2017R9FHSR\_002).}}
\author[1]{ Vincenzo Auletta\thanks{auletta@unisa.it}}
\author[2]{ Diodato Ferraioli\thanks{dferraioli@unisa.it}}
\author[3]{Cosimo Vinci\thanks{cosimo.vinci@unisalento.it (corresponding author)}}
\affil[1,2]{Department of Information Engineering, Electrical Engineering and Applied Mathematics, University of Salerno, Italy.}
\affil[3]{Department of Mathematics and Physics ``Ennio De Giorgi'', University of Salento, Italy.}
\date{}
\def\RP{{\mathbb{R}}_{\geq 0}}
\newcommand{\fT}{\bar{f}}
\newcommand{\SGR}{SingleGr}
\newcommand{\MGR}{MultiGr}
\def\SRM{\text{SRm}(t,b)}
\newtheorem{theorem}{Theorem}
\newtheorem{lemma}{Lemma}
\newtheorem{corollary}{Corollary}
\newtheorem{proposition}{Proposition}
\theoremstyle{definition}
\newtheorem{definition}{Definition}
\newtheorem{remark}{Remark}
\begin{document}

\maketitle

\begin{abstract}
In this work, we study the Stochastic Budgeted Multi-round Submodular Maximization (SBMSm) problem, where we aim to adaptively maximize the sum, over multiple rounds, of a monotone and submodular objective function defined on subsets of items. The objective function also depends on the realization of stochastic events, and the total number of items we can select over all rounds is bounded by a limited budget.
This problem extends, and generalizes to multiple round settings, well-studied problems such as (adaptive) influence maximization and stochastic probing.

We show that, if the number of items and stochastic events is somehow bounded, there is a polynomial time dynamic programming algorithm for SBMSm. Then, we provide a simple greedy $1/2(1-1/e-\epsilon)\approx 0.316$-approximation algorithm for SBMSm, that first non-adaptively allocates the budget to be spent at each round, and then greedily and adaptively maximizes the objective function by using the budget assigned at each round. Finally, we introduce the {\em budget-adaptivity gap}, by which we measure how much an adaptive policy for SBMSm is better than an optimal partially adaptive one that, as in our greedy algorithm, determines the budget allocation in advance. We show that the budget-adaptivity gap lies between $e/(e-1)\approx 1.582$ and $2$.


\end{abstract}

{\normalfont \bf Keywords:} Combinatorial Optimization; Stochastic Adaptive Optimization; Submodularity.

\section{Introduction}
Consider the well-known problem of influence maximization: here we are given a (social) network, with edge probabilities encoding the strength of the relationships among network's components; these probabilities describe how easily influence (e.g., an information or an innovation) spreads over the network starting from a limited set of nodes, usually termed \emph{seeds}; the goal is then to find the set of seeds that maximizes the expected number of ``influenced'' nodes.
The non-adaptive version of the problem, in which all the seeds are chosen before that influence starts, has been widely studied, with numerous successful algorithms \cite{kempe2003maximizing,chen2009approximability,borgs2014maximizing} and applications ranging from viral marketing \cite{kempe2003maximizing}, to election manipulation 
\cite{wilder2018controlling,coro2019exploiting,castiglioni2021election}, campaigns to contrast misinformation \cite{AmorusoAACFR20} to infective disease prevention \cite{yadav2017influence}.
Recently, focus moved to the adaptive variant of this problem (in which seeds can be added even when the influence is spreading from previously placed seeds) \cite{golovin2011adaptive,peng2019adaptive,d2021better}. This variant happens to have strong relationship with some \emph{Stochastic Probing} problems \cite{asadpour2008stochastic,asadpour2016maximizing}, in which there are $n$ items whose states are stochastically defined according to (not necessarily equal) independent distributions and the goal is to choose a feasible subset of items maximizing the expected value of a non-negative, monotone and submodular set function. Stochastic probing is relevant in various applications, such as the optimal placement of sensors for environmental monitoring \cite{DBLP:conf/uai/KrauseG05} or energy resources within electrical distribution networks \cite{OLADEJI2022100897,9543204}. In these scenarios, items represent potential locations within a network, where each item's state models the (uncertain) efficiency of a sensor or energy resource placed in the corresponding location. Feasible solutions correspond to sets of locations where sensors or resources can be deployed and/or activated, subject to cost or pollution constraints. The submodular set function models either the coverage provided by the sensors or the energy efficiency achieved. Other stochastic probing settings can be also encountered in expert hiring \cite{hoefer2021stochastic}, online dating and kidney exchange \cite{gupta2013stochastic,Bradac0Z19}, Bayesian mechanism design and sequentially posted pricing \cite{chawla2010multi,adamczyk2017sequential}.

Research both in influence maximization and stochastic probing focused almost exclusively on the single-round setting. Recently, there is an increasing interest in those cases in which the optimization is run in a multi-round setting. This is highlighted both by the increasing amount of literature on the topic (see Related Works), and by the numerous applications that are evidently deployed over multiple rounds, as it is the case, e.g., for many real-word marketing,   electoral and misinformation/disinformation contrasting campaigns. Indeed, it is often the case that in these contexts we are interested both in maximizing the total effectiveness of the campaign, and in limiting the total budget spent. Note that, in order to achieve this goal it may be insufficient to limit the budget spent at each round, since this may lead in general to very inefficient outcomes (see Remark \ref{remalimit} below for more details). However, this makes the problem much more complex: we do not only need to care about which actions to take at each round, but also we need to care about how large must be the portion of the budget spent at that round. Similarly, in contexts involving the optimal use of sensors or energy resources over time, it's possible that the cost or pollution constraints are not imposed daily, but rather on an average basis over the entire time-horizon. This means that while certain days may exceed these limits, the overall usage throughout the planning period must meet the constraints on average.

In this work, we aim to extend the study of above cited problems (and their generalizations) by jointly considering all these properties: multiple-rounds, total budget, and adaptivity.

\paragraph{Related Works.}
Previous works indeed never consider all these three properties jointly.

The interest in multi-round variants of the above discussed problems is testified by the increasing amount of papers focusing on this aspect: for example, \cite{sun2018multi} considered an influence maximization problem over multiple and independent rounds of diffusion, where the goal is to maximize the number of nodes selected in at least one round, subject to the selection of at most $k$ seeds for each round (thus, in contrast to our setting, they have a limit only on per-round budget); in the work of \cite{tong2020time}, instead, a round denotes a time step in which either a node is selected, or the already active nodes influence some of their neighbors, and the goal is to select at most $k$ seeds that maximize the influence spread within $T$ rounds.

Moreover, running a campaign over multiple rounds can be very useful (maybe necessary) to the designer of the campaign in order to gain the necessary knowledge about the environment to address the desired goal. This approach has been widely studied under the lens of online learning \cite{gittins1979bandit}.
Specifically, online learning algorithms for some submodular maximization processes (including the simplest form of our multi-round influence maximization problem) have been proposed \cite{vaswani2015influence,chen2016combinatorial,wen2017online}. In particular, these works introduced an approach, named \emph{Combinatorial Upper Confidence Bound} (CUCB), that extended the well-known UCB bandit algorithm \cite{AuerCF02} with the aim of designing online learning algorithms with good guarantees for a wide class of optimization problems. However, such approach has been defined only for non-adaptive optimization frameworks, e.g., standard influence maximization problems in which the diffusion is not observed after each seed selection, but only at the end of the selection process.  \cite{gabillon2013adaptive} presented CUCB algorithms for a class of adaptive optimization problems simpler than the one considered in this work, since, e.g., the seeding is subject to a limit on per-round budget and the underlying combinatorial structure is not sufficiently rich to represent influence maximization problems. Budget constraints without any limit on per-round quotas have been considered by \cite{das2022budgeted}, that still achieves good guarantees, but considers an even simpler setting than the one by  \cite{gabillon2013adaptive}, that in fact does not apply to influence maximization.

The analysis of optimization advantages of adaptive w.r.t. non-adaptive algorithms has been initiated by \cite{dean2008approximating,dean2005adaptivity} on classical packing problems, mainly trough the {\em adaptivity gap}, a metric defined as the worst-case ratio between the value achieved by an optimal adaptive algorithm and the optimal non-adaptive value. Only recently, this kind of analysis has been applied to stochastic maximization problems, including \emph{single-round} influence maximization and stochastic probing. In particular,  \cite{golovin2011adaptive} showed that whenever the objective function satisfies a property named \emph{adaptive submodularity} (see Definition~\ref{def:ad_subm} below), an approximate greedy algorithm achieves a $(1-1/e-\epsilon)$-approximation of the optimal greedy algorithm, with $e\approx 2.718$ denoting the Nepero number (that is, $1-1/e\approx 0.632$) and $\epsilon>0$ being an arbitrarily small constant.  \cite{peng2019adaptive,ChenP19} also proved constant bound on the adaptivity gap for some influence maximization problems, i.e., by using a non-adaptive optimal algorithm we lose only a constant approximation factor w.r.t. the optimal adaptive algorithm. Some bounds on the approximation ratio of adaptive algorithms and on the adaptivity gap have been recently improved by  \cite{d2021better,d2021improved,ChenPST22}. The adaptivity gap of  stochastic probing problems has been also extensively studied  \cite{asadpour2008stochastic,asadpour2016maximizing,GuptaNS16,GuptaNS17,Bradac0Z19}.

While our model and results are clearly inspired by these works, machineries defined therein cannot be directly applied to our setting since we are also interested in allocating the budget among multiple rounds (for further details see, for instance, Remark \ref{remaadaptive} below). 

\paragraph{Our Contribution.}
In order to provide results that hold for both influence maximization and stochastic probing (and for variants or generalizations of them), we will first introduce the framework of \emph{Stochastic Budgeted Multi-round Submodular Maximization} (SBMSm) problems. These problems are defined by the following quantities: a number $T$ of rounds; a ground set $V$ of $n$ items, whose state is drawn at each time step from an item-specific set according to some time-dependent distributions; a monotone submodular objective function $f_t$ at each round $t$, defined on subsets of $V$; a budget $B$ on the possible selectable items within the time-horizon $T$. The problem is to choose which items to select in each of the $T$ rounds, subject to the budget limit, in order to maximize the expected total value of the objective functions over the $T$ rounds. Furthermore, the items can be selected in an {\em adaptive} way, that is, after selecting an item, one observes the related state and the new value of function $f_t$, and based on this information one can decide whether to select other items in the current round, or whether to save the budget to select items in subsequent rounds.

This problem extends and generalizes both the influence maximization and stochastic probing problems (more details will be provided below). In the first case, an item is a node $v$ that can be selected as a seed, the state of an item $v$ defines which nodes would be influenced if $v$ is a seed, with the distribution of the state of $v$ depending on edges' probabilities, and the objective function counts the number or the total weight of influenced nodes. In the second case, the state of an item can be active or inactive, and we can observe the state of selected items only; the value of the submodular set function depends on the set of selected items which are active.
Note that SBMSm not only generalizes influence maximization and stochastic probing problems
to multiple round settings, but it also allows to model more realistic settings where states can change over time: e.g., in the influence maximization setting, we may allow edge probabilities or node weights to change in different rounds or be discounted over time.

We will start by characterizing when the problem can be optimally solved. In particular, we first define a certain optimization problem parametrized by a round $t\leq T$ and a budget $b\leq B$, called {\em Single-Round Maximization} ($\SRM$). Then we show that, whenever $\SRM$ can be optimally solved in polynomial time, there is a polynomial time dynamic programming algorithm that returns the optimal solution for the general multi-round problem SBMSm. Note that the optimal adaptive selection of items over a multiple-round window, as required by SBMSm, is assumed to be very strong: at each time step, based on observations both in the past and in the current round, the algorithm decides if to spend a further unit of budget in the current round to select a new item, or to stop the selection of items for the current round, and save budget for next rounds. Note that this will allow the optimum to react immediately to very negative events (e.g., the currently chosen seeds have influenced much less nodes than expected) or to very positive events (e.g., the currently chosen seeds have influenced much more nodes than expected). Hence, we find it is quite surprising that the optimality of the single-round problem straightly extends to multiple rounds.

Unfortunately, $\SRM$ turns out to usually be a hard problem to solve. Indeed, it can be easily reduced from the (deterministic) monotone submodular maximization problem, that is known to be APX-hard \cite{Feige98}, and there is no algorithm that approximates this problem in general instances within a factor better than $1-1/e$ (unless P=NP). Nevertheless, we show that there are non-trivial instances for which the problem can be effectively solved in polynomial time, essentially corresponding to the cases in which there are a few items and the possible states of the underlying stochastic environment are polynomial in the size of the input instance. The obtained findings imply that both $\SRM$ and SBMSm are {\em fixed-parameter tractable} (FTP) problems \cite{DowneyF99} w.r.t. to the number of items, if the outcomes of the stochastic environment are polynomially bounded. 

A natural direction would be to extend our framework so that it would work also with similar but more general instances where, for example, one may compute an arbitrarily good approximation in polynomial time. However, within our dynamic programming framework, this approximation ratio would explode with the number of rounds, and hence fails to provide good approximation guarantees.

For this reason, we need to resort to a different framework. Interestingly, we will show that a greedy polynomial time algorithm, that is arguably simpler than the previous one, provides a $1/2(1-1/e-\epsilon)\approx 0.316$-approximation of the optimal adaptive solution for every instance, where $\epsilon>0$ is an arbitrarily small constant due to the use of some randomized sampling procedures. Even more surprisingly, we can achieve such result in two phases: (i) we first preallocate the budget to rounds greedily, i.e., by assigning each unit of budget to the round in which it is expected to produce the largest expected value of the objective function, given the amount of budget still available; (ii) then, we apply the adaptive-greedy algorithm \cite{golovin2011adaptive} that chooses, at each round, the items to observe up to the allocated budget in a greedy adaptive way, i.e., it sequentially chooses an item that is expected to achieve the approximate maximum value of the objective function with high probability (with both approximation and probability depending on the choice of $\epsilon>0$), observes the realization of the state for this item, and then it chooses the next item among the ones that are expected to maximize the objective function given the state realization for the first item, and so on. To show the $1/2(1-1/e-\epsilon)$-approximation factor, we first prove that in a SBMSs problem there exists a non-adaptive allocation of budget to rounds that guarantees a $1/2$-approximation of the optimal adaptive solution. Then, we show that the adaptive greedy algorithm (considered in the phase (ii) described above) applied to such budget allocation further lowers the approximation guarantee by a factor of $(1-1/e-\epsilon)$. Finally, we show that the best non-adaptive allocation of budget for the adaptive greedy algorithm is that determined by preallocating the budget to rounds greedily, as in the phase (i) described above. 

In other words, our result shows that the powerful optimal adaptive solution for an SBMSs problem, that is adaptive both in the allocation of budget to rounds and in the choice of selected items within rounds, can be well-approximated by an algorithm that is only partially adaptive, i.e., it is adaptive in the choice of selected items, but not in the allocation of budget to rounds. 

Apart from the good approximation guarantees achieved by our greedy algorithm, the non-adaptive budget allocation is of independent interest, as it makes sense for several multi-round settings arising, for instance, from marketing or political campaigns. In fact, assuming that the considered campaign must be planned in multiple periods (i.e. rounds), it may be preferable to allocate the budget to be spent for each period in advance, due to organizational and management reasons. Therefore, for this and other settings, it would be good to measures how much an optimal policy for SBMSm, that is adaptive in both the budget allocation and the choice of the selected items, is much better than a simpler partially adaptive policy that, instead, determines the budget to be spent for each round in a non-adaptive way (that is, before starting the adaptive selection of the items and without observing their realizations).

To this aim, we introduce a metric called {\em budget-adaptivity gap} (see Definition \ref{def:bud_ad_gp} below), defined as the worst-case ratio between the fully  optimal adaptive value and that guaranteed by a partially adaptive policy that non-adaptively allocates the budget. The notion of budget-adaptivity gap can be seen as a novel and interesting variant of the standard adaptivity gap \cite{dean2008approximating,dean2005adaptivity}, that compares optimal adaptive strategies with fully non-adaptive ones, mainly for single-round optimization problems. Indeed, our variant still estimates the impact of non-adaptivity, but on the budget allocation only, that is a fundamental aspect to be taken into account to adaptively maximize the objective functions in multi-round settings.

We show that the budget-adaptivity gap is between $e/(e-1)\approx 1.582$ and $2$, that is, the non-adaptive budget allocation, in general, cannot be used to reach the fully adaptive optimum but still constitutes a good approximation of the fully adaptive ones; this result also implies that the above greedy algorithm essentially guarantees the best approximation among all partially adaptive policies. We point out that the effectiveness of a partially adaptive approach for multi-round adaptive optimization has been left as prominent research direction by \cite{auletta2022augmented}, and to the best of our knowledge, our results are the first that consider such an approach and guarantee, at the same time, a good approximation to the optimal solution in polynomial time.


\section{Model and Definitions}
\label{sec:model}

Before describing the framework of Stochastic Budgeted Multi-round Submodular Maximization, we introduce some preliminary notation, a general class of maximization problems over multiple rounds, and the main tools of adaptive optimization.

\paragraph{Preliminary Notation.}
Given two integers $k_1,k_2$, let $\llbracket k_1,k_2\rrbracket:=\{k_1,\ldots, k_2\}$ be the discrete interval from $k_1$ to $k_2$ (that is empty if $k_1>k_2$); furthermore, let $[k]$ denote the discrete interval $\llbracket 1,k\rrbracket$.

\subsection{A Setting of Multi-round Maximization}
We consider a setting $I:=(T, B, V, H, (\mathcal{P}_t)_{t\in [T]},(f_t)_{t\in [T]})$ whose parameters and connected properties are described as follows.

$T\geq 2$ is a finite and integral {\em horizon}, where each $t\in [T]$ denotes the $t$-th {\em round}. $B \geq 0$
is a given {\em budget}.
$V$ is a {\em ground set} of $n\geq 2$ {\em items}. For any fixed $v\in V$, let $H(v)$ be a set of possible {\em local states} for item $v$, and let $H:=\times_{v\in V}H(v)$ be the set of {\em global states}, that is, a global state $\eta\in H$ is a collection of local states $(\eta(v))_{v\in V}$ (with $\eta(v)\in H(v)$ for any $v\in V$). In the following, the global states may be simply called states. 

For any round $t$, a state $\eta_t\in H$ is picked according to a probability distribution $\mathcal{P}_t$ over $H$, that is called {\em (global) state distribution} at round $t$. The state distributions $\mathcal{P}_1,\ldots, \mathcal{P}_T$ are assumed to be independent; however, for any fixed round $t\in [T]$, the probability distributions induced by local states $\eta_t(v)$ from $\mathcal{P}_t$ may be dependent. We observe that, even if we assumed that each tuple  $\eta:=(\eta(v))$ with $\eta(v)\in H(v)$ is a global state, we can impose the unfeasibility of $\eta$ at some round $t$ by setting $\mathcal{P}_t(\eta)=0$ (i.e., $\eta$ never occurs at round $t$). 
In the following, we will adopt bold letters to denote the random variables, and non-bold letters to denote their realization (e.g., $\bm \eta(v)$ is a random local-state, and $\eta(v)$ denotes a possible realization of $\bm \eta(v)$).

Finally, for any round $t\in [T]$, $f_t:2^V\times H \rightarrow \RP$ is the {\em objective function at round $t$}, that associates a non-negative real-value $f_t(S_t,\eta_t)$ with each subset of items $S_t\subseteq V$ and state $\eta_t$.
Note that objective functions are allowed to be different at each time step.
Such a variability in both the state distribution and objective function, immediately makes inefficient to allocate the same budget to each round: indeed, it is immediate to build a setting in which many rounds may lead to very poor outcomes, and only few of them may lead to a very large outcome, and hence all the budget allocated to the former rounds is wasted (see Remark \ref{remalimit} for further details). 

Let $\fT:(2^V\times H)^T\rightarrow \RP$ be the {\em aggregated objective function}, that assigns the value $\fT(\vec{S},\vec{\eta}):=\sum_{t=1}^T f_t(S_t,\eta_t)$ to each {\em aggregated subset} $\vec{S}:=(S_1,\ldots, S_T)$ and {\em aggregated state} $\vec{\eta}:=( \eta_1,\ldots,  \eta_T)$, i.e., it returns the cumulative value of the objective function $f_t$ over all rounds $t\in [T]$.

We consider then the problem of selecting an aggregated subset $\vec{ S}=(S_1,S_2,\ldots, S_T)$ such that $\fT(\vec{S},\vec{\eta})$ is maximized subject to the budget constraint $\sum_1^T|S_t|\leq B$, for a fixed aggregated state $\vec{ \eta}$ drawn from the aggregated state distributions $(\bm \eta_1,\ldots, \bm \eta_T)$. Unfortunately, we need to solve this problem without knowing the aggregated state, and we have access to the state distributions and objective functions only.

\subsection{Maximization via Adaptive Policies}
We here focus on algorithms for selecting the aggregated subset $\vec{ S}$ achieving (an approximation of) the optimal value of $\fT(\vec{S},\vec{\eta})$ that are \emph{adaptive}, i.e., they select items over subsequent rounds, observe the partial states  of the probed items and use these observations to guide the future choices in an adaptive fashion.

We next provide a formal definition for this adaptive behavior. Let us first introduce some further notation.
For a given state $\eta_t$ at round $t$, we let $\langle S_t,\eta_t\rangle = (\eta_t(v))_{v \in S_t}$ be the \emph{partial state} observed when selecting the items in $S_t$ at round $t$.
As the state $\eta_t$ is generally unknown, we will simply denote $\langle S_t, \eta_t \rangle$ as $\langle S_t\rangle$.
An {\em aggregated partial state} is a vector $\langle \vec{S}\rangle:=(\langle S_1\rangle,\ldots, \langle S_T\rangle)$ of partial states, one at each round, and collects the observations done at each round.

We look for a {\em multi-round (adaptive) policy} $\pi$, that on input the (unknown) aggregated state $\vec{\eta}$ returns an aggregated subset $\vec{S}:=\pi(\vec{\eta})$, according to the following procedure:\newline
\begin{enumerate}
\item It initially starts from the first round and an empty aggregated set, and it initializes the adaptive strategy by setting
$t=1$ (the actual round), $\vec{S}:=(S_1,\ldots, S_T):=(\emptyset,\ldots, \emptyset)$ (the actual aggregated set), and $i=1$ (the actual \emph{step} at round $t$)\footnote{We highlight that we will use the term \emph{round} to define the $T$ phases over which our optimization problem is defined. In each round, the adaptive policy can be in turn  executed over multiple phases: these will be termed \emph{steps}. Finally, we use the term \emph{time} only with reference to computational time.}.
\item At each round $t$ and step $i$, either it (a) selects an item $v_{t,i}\in V$ not included in $S_t$ yet, or (b) it decides to stop and proceed to next round. \label{item:step2}
\begin{itemize}
\item[(a)] If it selects an item $v_{t,i}$, it adds it to $S_t$ and sets $i\leftarrow i+1$, i.e., the policy is ready for the next item selection in the same round.
Having selected the item $v_{t,i}$, it can observe the local state $\eta_t(v_{t, i})$.
Hence, the policy knows the partial state $\langle S_t,\eta_t\rangle$ and it can use such knowledge to select the (eventual) next item to be probed and included in $S_t$.
\item[(b)] If the policy decides to cease in selecting items in the current round, then it sets $t\leftarrow t+1$ and $i\leftarrow 1$.
\end{itemize}
\item Repeat step~\ref{item:step2} until $t>T$ (i.e., the horizon has been exceeded) or $\sum_{t'=1}^t|S_{t'}|=B$ (i.e., all the budget has been used).
\end{enumerate}

Among the possible multi-round policies, we are interested in the ones that return (an approximation of) the maximum $\sigma(\pi):=\mathbb{E}_{\vec{\bm \eta}\sim \mathcal{P}^T}\left[\fT(\pi(\vec{\bm \eta}),\vec{\bm \eta})\right]$, (where $\pi(\vec{\bm \eta})$ is the aggregated set returned by $\pi$ and $\mathcal{P}^T$ denotes the product distribution $\times_{t\in [T]}\mathcal{P}_t$). Note that the optimal value achievable by a multi-round adaptive policy can be not so high as that obtained if we knew the input aggregated state, but it is the best we can hope by inferring the aggregated state via the adaptive probing of the items, based on the knowledge of the state distribution.

An interesting class of the above policies is that of {\em partially adaptive policies}, in which the number of items that must be selected at each round (i.e., the budget per round) is determined non-adaptively in advance, and then the adaptive selection of items is performed within each round independently. Partially adaptive policies are computationally simpler than general adaptive ones, but it might have worse performance guarantees. Actually, we will show that this worsening is limited for our problem.


Next,
with a little abuse of notation, we
will often
use
the notation
$\pi(I,a_1,\ldots)$ to denote a policy applied to $I$, where $a_1,\ldots$ are further parameters defining the policy, and the (unknown) aggregated state $\vec{\eta}$ will be included as further input parameter.

\subsection{Properties of the Objective Functions}\label{subsec: properties}
In the following, we list some additional properties that the objective functions must verify in our setting.

The first subset of properties will focus on the complexity of the objective function.
Specifically, we assume that each objective function $f_t$ is representable and computable in polynomial time.
Moreover, we also constraint the maximum value of an objective function, as follows.
\begin{definition}
An objective function  $f_t$ is {\em polynomially bounded} if there exist real values $\lambda,\Lambda>0$, with $\Lambda$ and $1/\lambda$ being polynomially bounded in the size of $I$, such that the following conditions hold: (i) $f_t(\{v\}\cup S_t,\eta_t)-f_t(S_t,\eta_t)\leq \Lambda$ for any $v\in V$, $S_t\subseteq V$ and state $\eta_t$; (ii) there exist $t\in [T]$ and $v\in V$ such that $\lambda\leq \mathbb{E}_{\bm \eta_t\sim \mathcal{P}_t}[f_t(\{v\},\bm \eta_t)]$.
\end{definition}
The second subset of properties instead extends the typical feature of monotonicity and submodularity to our multi-round adaptive setting.
Specifically, we assume that, for any fixed state $\eta_t$, the function $f_t$ satisfies the following {\em monotonicity} property: $f_t(S_t,\eta_t)\leq f_t(S_t',\eta_t)$ if $S_t \subseteq S_t'$.


We further assume that objective functions satisfy a certain property connected with the notion of submodularity. Given two partial states $\langle S_t\rangle:=\langle S_t,\eta_t\rangle$ and $\langle S'_t\rangle:=\langle S'_t,\eta'_t\rangle$, we say that $\langle S_t\rangle$ is a {\em sub-state} of $\langle S'_t\rangle$ (equivalently, we say $\langle S_t\rangle\preceq \langle S'_t\rangle$)  if $S_t\subseteq S'_t$,
and $\eta_t(v)=\eta'_t(v)$ for any $v\in S_t$, 
that is, the local states observed in $\langle S_t\rangle$ coincide with those observed in the restriction of $\langle S'_t\rangle$ to set $S_t$; we observe that $\langle V,\eta_t\rangle=\eta_t$ holds for any state $\eta_t$ (by definition).
Given a partial state $\langle S_t\rangle$ at round $t$ and $v\in V$, let $
\Delta_t(v \mid \langle S_t\rangle):=\mathbb{E}_{\bm \eta_t\sim\mathcal{P}_t}\left[f_t(S_t\cup \{v\},\bm \eta_t)-f(S_t,\bm \eta_t)|\langle S_t\rangle\preceq\bm \eta_t\right]$ i.e., the expected increment of
$f_t$ obtained when adding item $v$ to $S_t$, conditioned by the observation of partial state $\langle S_t\rangle$.

We then have the following definition, that adapts the notion of adaptive submodularity introduced by \cite{golovin2011adaptive} to our setting.
\begin{definition}[\cite{golovin2011adaptive}]\label{def:ad_subm}
An objective function $f_t:2^V\rightarrow \RP$ is {\em adaptive submodular} if, for all $\langle S_t\rangle\preceq \langle S'_t\rangle$ and $v\in V$ we get $\Delta_t(v \mid \langle S_t\rangle)\geq \Delta_t(v \mid \langle S'_t\rangle)$.
\end{definition}
In this work, we strengthen the adaptive submodularity by introducing the {\em strong adaptive submodularity}, that is defined as follows.
\begin{definition}\label{def:strong_ad_subm}
An objective function $f_t:2^V\rightarrow \RP$ is {\em strongly adaptive submodular} if the following properties hold: (i) It is adaptive submodular. (ii) Given an integer $k\geq 1 $, let $\pi$ be an optimal adaptive policy that selects $k$ items. Then, for any partial state $\langle S_t\rangle$ and any adaptive selection $\tilde{\bm S}_t\supseteq S_t$ of at most $k+|S_t|$ items conditioned by the observation of $\langle S_t\rangle$, we have
\begin{equation}
\sigma(\pi)\geq \mathbb{E}_{\bm \eta_t\sim \mathcal{P}_t}[f_t(\tilde{\bm S}_t,\bm \eta_t)-f_t(S_t,\bm \eta_t)|\langle S_t\rangle\prec \bm \eta_t].
\end{equation}
\end{definition}

Intuitively, adaptive submodularity (i.e., property (i) of strong adaptive submodularity) requires that the marginal increment of the expected value of \( f_t \) due to the selection of the same item \( v \) does not increase as the observations become richer. In contrast, the second property of strong adaptive submodularity requires that the expected increment of \( f_t \), achieved by the adaptive selection of \( k \) items after observing a given partial state, is not better than the optimal adaptive value for selecting \( k \) items, without any prior item selection and observation.
%

From this point forward, we assume that the objective functions satisfy all the properties described above.
\subsection{The Stochastic Budgeted Multi-round Submodular Maximization Problem}
We denote the setting described above as {\em Stochastic Budgeted Multi-round Submodular Maximization Problem} (SBMSm), that looks for a multi-round policy $\pi$ maximizing $\sigma(\pi)=\mathbb{E}_{\vec{\bm \eta}\sim \mathcal{P}^T}\left[\fT(\pi(\vec{\bm \eta}),\vec{\bm \eta})\right]$
on an input instance $I=(T,B,V,H,(\mathcal{P}_t)_{t\in [T]},(f_t)_{t\in [T]})$.

In particular, we will focus on policies with guaranteed approximation,
defined as follows.
Given an instance $I=(T,B,V,H,(\mathcal{P}_t)_{t\in [T]},(f_t)_{t\in [T]})$ of SBMSm, $\alpha\in [0,1]$,
and a multi-round policy $\pi$ for $I$, we say that $\pi$ is \emph{$\alpha$-approximate} (or, equivalently, $\alpha$ is the {\em approximation factor} of $\pi$), if $\sigma(\pi)\geq \alpha\cdot \sigma(\pi^*)$, where $\pi^*$ denotes an optimal multi-round policy for $I$; 


We observe that the size of the input instance $I$ generally depends on the succinct representation of all the parameters defining $I$. For the sake of simplicity, in the remainder of the paper we will implicitly assume that the size of the input instance is polynomial in $n$ and $T$. The polynomial dependence in the number of items $n$ is a standard assumption/property of problems connected with the optimization of functions defined on subset of items. The polynomial dependence in $T$ holds since the input instance $I$ can be seen as a collection of $T$ single-round sub-instances (one for each round $t\in [T]$), which are assumed to be explicitly listed. Furthermore, we will implicitly assume that $B<n T$, where $B$ is the budget of the input instance. Indeed, if $B\geq nT$, we could select a set $V$ at each round so that the objective function is trivially maximized at each round. Thus, the assumption $B<nT$ implies that the budget $B$ is also polynomial in the size of $I$.
Finally, we allow the size of the remaining parameters defining $I$ (i.e., local and global states, probability distributions and objective functions) to be exponential when explicitly represented, although the size of their succinct representation is not; this is the case, for instance, of influence maximization problems, which are described below. 

Compatibly with several stochastic optimization problems, we also assume the existence of a polynomial-time oracle that works as follows: (i) for each $t\in [T]$, $S\subseteq V$ and state $\eta\in H$ given in input, it is able to compute $f_t(S,\eta)$; (ii) for each $t\in [T]$, sets $S,S'$ with $S\subseteq S'\subseteq V$ and partial state $\langle S\rangle$, the oracle is able to pick a partial state $\langle S'\rangle\succeq \langle S\rangle$, drawn from the distribution $\mathcal{P}_t$ conditioned by the observation of $\langle S\rangle$.

Before providing our results, we will provide a description on how to apply our framework to multi-round and adaptive variants of influence maximization and stochastic probing.

\subsection{Applications}\label{sec:appl}
\subsubsection{Influence Maximization}
The Influence Maximization problem has been extensively studied in the last years under different models of information diffusion. Here we focus on the Independent Cascade diffusion model but a similar approach can be used to model the Influence Maximization under different diffusion models. Before to show how this problem can be fitted in the class of SBMSm, let us first define the problem of Influence Maximization under the Information Cascade model and introduce its adaptive and multi-round variants.

\paragraph{Definition.}
The \emph{Independent Cascade diffusion model} (IC) \cite{kempe2003maximizing}, given a graph $G = (V, E)$, a probability $p \colon E \rightarrow [0, 1]$ defined on edges and a set of \emph{seeds} $S_0 \subseteq V$, returns a set $A=IC(G,p,S_0)$ of \emph{active nodes} iteratively computed over multiple (discrete) diffusion rounds\footnote{Here, ``diffusion round" has a different meaning from the term ``round" considered in SBMSm.} as follows: (i) at diffusion round $0$, the set of activated nodes $A_0$ coincides with $S_0$; (ii) given a diffusion round $h\geq 1$, the set of nodes $A_h$ activated at time step $h$ is determined by including in $A_h$, for each edge $(u,v)$ with $u\in A_{h-1}$, the node $v$ with probability $p_{(u,v)}$, independently of the other edges; (iii) when the diffusion process reaches a diffusion round $k$ with $A_k=\emptyset$, that is, no further node can be activated, the diffusion process ends and the overall set of active nodes is $A:=\bigcup_{h=0}^{k-1} A_h$. We observe that, in the IC model, each active node has only one chance to activate each of its neighbors, the activation process associated with each edge is independent from the others.

In the problem of \emph{Influence Maximization under Independent Cascade diffusion model} (IM-IC) \cite{kempe2003maximizing} we are given a graph $G = (V, E)$,
a weight function $w \colon V \rightarrow \mathbb{R}_{\geq 0}$ defined on nodes,
a probability $p \colon E \rightarrow [0, 1]$ defined on edges, and a budget $B$, and we are asked to select the set $S^*$ of at most $B$ nodes that maximizes the expected sum of weights of active nodes returned by the IC diffusion model on input $G$ and $p$ i.e., $S^* = \arg \max_{S \subseteq V \colon |S| \leq B} \mathbb{E}_p \left[
\sum_{v \in IC(G, p, S)} w(v)
\right]$.

Recently, an adaptive variant of the problem has been proposed where seeds are introduced one at a time, using an adaptive policy. To define the adaptive variant of the problem we need  first to define the concept of adaptive policy. For any graph $G = (V, E)$ and set $U\subseteq V$ of nodes, let $G[U]$ denote the subgraph of $G$ induced by the nodes in $U$. Given a graph $G = (V, E)$, a probability $p \colon E \rightarrow [0, 1]$ defined on edges, and a budget $B$, an \emph{adaptive policy} $\pi$ for IM-IC works as follows: it selects a first vertex $v_1$, observes the set $A_1$ of nodes activated by seed $v$, i.e., $A_1 = IC(G, p, \{v_1\})$, then selects a new vertex $v_2 \in V \setminus A_1$, observes the set $A_2$ of newly activated nodes, i.e., $A_2 = IC(G[V\setminus A_1], p, \{v_2\})$, then selects a new node $v_3 \in V \setminus (A_1 \cup A_2)$, observes the set $A_3$ of newly activated nodes, i.e., $A_3 = IC(G[V\setminus (A_1\cup A_2)], p, \{v_3\})$, and so on until either there is no node that can be selected or $B$ nodes have been already selected. The set of nodes activated through an adaptive policy $\pi$ is $A(\pi) = \bigcup_{i \in [B]} A_i$.

In the problem of \emph{Adaptive Influence Maximization under Independent Cascade diffusion model} \cite{golovin2011adaptive} we are given a graph $G = (V, E)$,
a weight function $w \colon V \rightarrow \mathbb{R}_+$ defined on nodes,
a probability $p \colon E \rightarrow [0, 1]$ defined on edges, and a budget $B$, and we are asked to select the adaptive policy $\pi^*$ that maximizes the expected sum of weights of activated nodes, i.e., $\pi^* = \arg \max_{\pi} \mathbb{E}_p \left[
\sum_{v \in A(\pi)} w(v)
\right]$.

In this paper we are interested to multi-round settings. Hence, we need to extend previous concepts to the multi-round setting.
Given a graph $G = (V, E)$, a probability $p_t \colon E \rightarrow [0, 1]$ defined on edges (for each round $t$), a budget $B$, and a time horizon $T$, a \emph{multi-round adaptive policy} $\pi$ for IM-IC works as follows: at each round $t \in [T]$, let $S^t(\pi)$ be the set of seeds selected by $\pi$ in round $t$ and $A^t_v(\pi)$ be the set of nodes activated by $v \in S^t(\pi)$ (we will next omit the reference to $\pi$ when clear from the context). Note that both $S^t$ and $A^t_v$ are initially empty. Then, the policy at each round $t \in [T]$, if the total number of selected nodes is lower than $B$, works as follows: 
\begin{enumerate}
 \item It decides whether to choose another seed in the current round or to move to the next round (note that if all nodes have been activated in the current round, then the policy must necessarily choose to move to the next round). \label{it:step1}
 \item In the first case, the policy selects a node $v$ that has not been yet activated in the current round, i.e., $v \notin A^t_{S^t} := \bigcup_{u \in S^t} A^t_u$, observes the set $A^t_v$ of newly activated nodes, i.e., $A^t_v := IC(G[V\setminus A^t_{S^t}], p, \{v\})$, adds $v$ to $S^t$, and repeat from step~\ref{it:step1}. 
\end{enumerate}

Finally, in the problem of \emph{Multi-Round Adaptive Influence Maximization under Independent Cascade diffusion model} (MR-IM-IC) we are given a graph $G = (V, E)$, a time horizon $T$,
a weight function $w_t \colon V \rightarrow \mathbb{R}_+$ defined on nodes for every $t \in [T]$,
a probability $p_t \colon E \rightarrow [0, 1]$ defined on edges for each round $t\in [T]$, and a budget $B$, and we are asked to select the multi-round adaptive policy $\pi^*$ that maximizes the expected sum of weights of activated nodes over $t$ rounds, i.e., $\pi^* = \arg \max_{\pi} \varsigma(\pi)$ where $\varsigma(\pi) = \mathbb{E}_{(p_t)_{t\in [T]}} \left[\sum_{t \in [T]}
\sum_{v \in A^t_{S^t}(\pi)} w_t(v)
\right]$.

We stress that our definition allows that both nodes' weights and edge probabilities may change from one round to the next: this can be useful to implement a variable discount factor (e.g., we care more about the number of people influenced in the last phases of the campaign) or to implement a dynamic environment in which the members of a population are known to appear and disappear over time, or the relevance of population members changes over time (e.g., in liquid democracy settings \cite{blum2016liquid} at the beginning we are more interested in influencing a large portion of the population for manipulating the choice of delegates, while later in the campaign we are more concerned in influencing those members that have more chances to be chosen as delegates, in order to manipulate the outcome of their voting process).

In Appendix \ref{apppendix:applications} we show that MR-IM-IC can be modeled as an SBMSm problem satisfying all required properties.

\subsubsection{Stochastic Probing}
\paragraph{Definition.} In general \emph{stochastic probing} problems, we have $n$ items, each one associated with a random state independent from the others, and the goal is to choose a feasible subset of items maximizing the value of a certain function depending on the states of the selected items. Several classes of stochastic probing problems arise from particular choices of the states and the function to be maximized.

The stochastic probing problem we consider here, denoted as {\em binary submodular stochastic probing} (BSS), is defined as follows: We have a set $V$ of $n$ items, and a non-negative, monotone and submodular set function $g:2^V\rightarrow \RP$ that we aim at maximizing. Each item $v$ has a binary state $\phi_v$ in the set $\{active, inactive\}$, the probability that $v$ is a active is $p(v)$, and the distributions associated to all items are independent. Given a subset $S\subseteq V$ and a realization $\phi=(\phi(v))_{v\in V}$ of the states, let $act(S,\phi)\subseteq S$ denote the subset of active items in $S$. For a given budget $B\geq 0$, the goal is to find a subset $S^*\subseteq V$ of at most $B$ items that maximizes $\mathbb{E}_{\phi}[g(act(S^*,\phi))]$ (i.e., the expected value of $g$ applied to the selected items which are active).

The multi-round adaptive variant of this problem, denoted as {\em multi-round binary submodular stochastic probing (MR-BSS)}, assumes that the probabilities and the submodular set functions are time/round specific (i.e., we have a probability distribution $p_t$ and a function $g_t$ for any round $t$), the budget (i.e., the number of items we can select) can be adaptively allocated over all possible rounds, and the goal is to maximize the sum of the values achieved by all functions $g_t$. 
In the adaptive variant of the problem, one considers adaptive multi-round policies that adaptively select the items at each round. Furthermore, we assume that there are two quantities $\lambda,\Lambda>0$ that are polynomial in the input instance such that: (i) the increment caused to $g_t$, when probing a new item $v$, is upper bounded by $\Lambda$; (ii) there exists one round $t$ and an item $v$ such that the expected value of $g_t$ when probing item $v$ at round $t$ is at least $\lambda$.

In Appendix \ref{apppendix:applications} we show that MR-BSS can be modeled as an SBMSm problem satisfying all required properties.


\section{Optimal Algorithm}
In this section we first describe a single-round optimization problem of interest, and how its solution is used to provide a dynamic programming algorithm that will return the optimal policy in polynomial time as long as the single-round optimization problem can be efficiently addressed. Then, we will discuss a few of cases in which the single-round optimization problem can be efficiently addressed.

These results will imply that, if the number of states $|H|$ is polynomially bounded\footnote{In the model definition, we already observed that $|H|$ might be exponential (e.g., for influence maximization problems).} in the size $|I|$ of the input instance, SBMSm is {\em fixed-parameter tractable} (FTP) \cite{DowneyF99} w.r.t. the number of items $n$, that is, the complexity of the (optimal) dynamic programming algorithm is $O(F(n)\cdot |I|^{p})$ for some computable function $F$ and $p>0$.

Finally, we will highlight the limits of this dynamic programming framework, by discussing when it can be extended to provide a good approximation of the optimal policy, even if the single-round optimization problem can only be approximated.

We stress that the properties given in Section \ref{subsec: properties} are not required to hold in the results provided in this section (differently from the subsequent sections).

\subsection{A Dynamic Programming Algorithm based on Single-round Optimization}
Let us fix a round $t$ and a budget $b$ available. Given $i\in \llbracket 0,\min\{n,b\}\rrbracket$, let $R(t+1,i)$ denote the optimal expected value of $\sum_{t'=t+1}^T f_{t'}$ achievable in rounds $t+1,\ldots, T$ by a multi-round policy $\pi_{\geq t+1}$ that is restricted to such rounds only and uses at most $i$ units of budget; we observe that $R(t+1,i)=0$ for $t=T$. We assume that $R(t+1,i)$ has been already determined for any $i\in \llbracket 0,b\rrbracket$. The {\em single-round maximization problem} at round $t$ and budget $b$ ($\SRM$) asks to find a policy $\pi_t$ at round $t$ that adaptively computes the budget $i(\pi_t,\bm \eta_t)\in \llbracket 0,b\rrbracket$ to be allocated at round $t$ in such a way that the expected value 
\begin{equation}\label{optimum_singleround}
\mathbb{E}_{\bm \eta_t\sim \mathcal{P}_t}[f_t(\pi_t(\bm \eta_t),\bm \eta_t)+R(t+1,b-i(\pi_t,\bm \eta_t))]
\end{equation}
of objective function $f_t$ plus $R(t+1,b-i(\pi_t,\bm \eta_t))$ is maximized, where the random budget $i(\pi_t,\bm \eta_t)$ used at round $t$ depends on the chosen policy $\pi_t$ and the state $\bm \eta_t$. 

We observe that the optimal value of \eqref{optimum_singleround} coincides with $R(t,b)$, that is, the optimal expected value of $\sum_{t'=t}^T f_{t'}$ achievable in rounds $t,\ldots, T$. Thus, the following recursive relation holds:
\begin{equation}\label{recurrence}
R(t,b)=
\begin{cases}
0,&\text{ if }t=T+1\\
\max_{\pi_t}\mathbb{E}_{\bm \eta_t\sim \mathcal{P}_t}[f_t(\pi_t(\bm \eta_t),\bm \eta_t)+R(t+1,b-i(\pi_t,\bm \eta_t))],& \text{ if }t\in \llbracket 0,T\rrbracket.
\end{cases}
\end{equation}
Suppose that there is an algorithm for $\SRM$, that is, by resorting to the recursive relation \eqref{recurrence}, we can efficiently compute $R(t,b)$ as a function of $R(t+1,i)$ for any $i\in \llbracket 0,b\rrbracket$, together with the optimal single-round policy. Such algorithm can be used as a sub-routine for a more general dynamic programming algorithm that computes an optimal policy for the general SBMSm problem. 

Specifically, the dynamic programming algorithm first sets $R(T+1, b) = 0$ for every $b \in \llbracket 0, B\rrbracket$. By resorting to the recursive relation \eqref{recurrence}, we can compute $R(T, b)$ for any $b \in \llbracket 0, B\rrbracket$, as a function of the values $R(T+1,b-i)$ with $i\in \llbracket 0,\min\{n,b\}\rrbracket$. Iteratively, going from $t=T-1$ to $t=1$, we can compute $R(t, b)$ for any $b \in \llbracket 0, B\rrbracket$, as a function of the values $R(t+1,b-i)$ with $i\in \llbracket 0,\min\{n,b\}\rrbracket$. Finally, we observe that $R(1,B)$ is the optimal value for the SBMSm problem. Thus, starting from round $t=1$ and budget $B$, we can reconstruct the optimal multi-round policy achieving the optimal value (this can be done by resorting to standard dynamic programming techniques).

\subsection{How to Solve Single-round Optimization Efficiently?}
In the above dynamic-programming algorithm, we assumed the existence of a sub-routine that solves $\SRM$ for any round $t$ and budget $b$, i.e., that allows to compute recurrence \eqref{recurrence}. Unfortunately, the $\SRM$ problem can be easily reduced from the submodular monotone optimization problem, that is NP-hard \cite{Feige98}.
To address the intractability of the single-round optimization problem, in Appendix \ref{sec_fpt} we describe a tree-based algorithm that, while generally inefficient, can be run in polynomial time in specific cases. In particular, we show that this algorithm is FPT with respect to $n$, provided the number of states $|H|$ is polynomially bounded in the input instance. Consequently, by incorporating this algorithm as a subroutine in the dynamic programming approach described above, we obtain an FPT algorithm for the more general SBMSm problem, as outlined in the following theorem.
\begin{theorem}[Corollary \ref{corollary_fpt} of Appendix \ref{sec_fpt}]
Both $\SRM$ and SBMSm are FPT problems w.r.t. to $n$, if $|H|$ is polynomially bounded in the size of the input instance.
\end{theorem}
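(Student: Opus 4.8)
The plan is to establish the theorem by building a tree-based algorithm for $\SRM$ that enumerates all relevant adaptive decision paths, bounding its running time, and then composing it with the dynamic programming framework already described. The key observation driving everything is that an adaptive single-round policy $\pi_t$ can be represented as a decision tree: each internal node corresponds to a selection of an item $v_{t,i}$, and the branches out of that node correspond to the possible observed local states $\eta_t(v_{t,i}) \in H(v_{t,i})$. Since at most $b \leq B < nT$ items can be selected at a single round, and at each node the branching factor is bounded by $\max_v |H(v)| \leq |H|$, the total number of nodes in such a tree is bounded by something like $(n \cdot |H|)^{O(n)}$, which is $F(n) \cdot |I|^{O(1)}$ whenever $|H|$ is polynomially bounded in $|I|$.

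\textbf{First} I would make precise the claim that an optimal policy for $\SRM$ need only select at most $\min\{n,b\} \leq n$ distinct items within a round, so the depth of the decision tree is $O(n)$; this follows because items already in $S_t$ contribute nothing to a further selection and the budget spent at round $t$ is capped. \textbf{Next} I would describe how to evaluate the tree bottom-up: at each leaf one knows the realized partial state and the residual budget $b - i$, so one can read off $f_t(S_t,\eta_t) + R(t+1, b-i)$ using the oracle and the previously computed table of $R$ values; at each internal node one takes either the expectation over the children (weighted by the conditional state probabilities drawn from $\mathcal{P}_t$) or, at a decision node, the maximum over the choice of which item to probe versus stopping. This backward induction computes exactly the optimum of \eqref{optimum_singleround}, hence $R(t,b)$, together with the optimal single-round policy.

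\textbf{The main obstacle} will be bounding the tree size carefully enough to certify the FPT form $O(F(n) \cdot |I|^p)$ rather than a merely exponential bound. The subtlety is that the branching must be taken over the conditional distributions induced by $\mathcal{P}_t$ restricted to partial states consistent with the path so far, and one must argue that the number of distinct reachable partial states along any branch is controlled by $|H|$ and $n$, not by anything growing with the budget beyond $n$. I would handle this by indexing tree nodes by (selected set, observed partial state) pairs and showing that the number of such pairs is at most $\sum_{k=0}^{n} \binom{n}{k} \cdot |H| = 2^n \cdot |H|$, giving a clean separation of the $n$-dependence (absorbed into $F(n) = 2^{O(n)}$) from the polynomial dependence on $|I|$ when $|H|$ is polynomially bounded.

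\textbf{Finally}, to conclude the theorem for SBMSm itself, I would invoke the dynamic programming composition already established in this section: the table $R(t,b)$ has $O(T \cdot B) = O(n T^2)$ entries, each computed by one call to the $\SRM$ subroutine, so the overall running time is $O(nT^2) \cdot F(n) \cdot |I|^{O(1)} = F(n) \cdot |I|^{O(1)}$, which is precisely the FPT form with respect to $n$. This yields the FPT result for both $\SRM$ and SBMSm and matches the corollary cited from Appendix~\ref{sec_fpt}.
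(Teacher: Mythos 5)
Your proposal is correct and follows essentially the same route as the paper: Appendix~\ref{sec_fpt} likewise represents the single-round adaptive policy as a game/decision tree with policy nodes alternating with nature nodes, evaluates it by backward induction against the precomputed table $R(t+1,\cdot)$, bounds its size by a function of $n$ times $|H|$, and composes with the $O(nT^2)$-entry dynamic program. The only difference is bookkeeping: the paper counts leaves by ordered sequences of distinct items, obtaining $O(2^n n!\,|H|)$ nodes, whereas your memoized count over (selected set, observed partial state) pairs gives the slightly sharper $2^n|H|$ --- both of the required form $F(n)\cdot|I|^{O(1)}$.
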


%

While the single-round optimization problem is known to be NP-hard, it would be still possible to efficiently find arbitrarily good approximation of their optimal solution in many settings. Hence, it would be natural to replace in our dynamic programming framework the optimal solution to the single-round optimization problem with its approximation. However, it is not hard to observe that this causes an explosion on the approximation guarantee: indeed, an $\alpha$-approximation of the optimal policy at round $t$ evaluated on a game tree in which the values of $R(t+1, b)$ for the different choices of $b$ are only $\beta$-approximations of the maximum expected revenue achievable in round $t+1, \ldots, T$ with budget $b$, may be not able to return a policy guaranteeing an approximation better than $\alpha \cdot \beta$ of the the maximum expected revenue achievable in round $t+1, \ldots, T$ with the given budget. Hence, by running a $\rho$-approximation algorithm for all $T$ steps, may only allows to compute a $\rho^T$-approximation of the optimal multi-round adaptive policy, that is clearly an unsatisfying approximation guarantee for every constant $\rho$.

\section{Partially Adaptive Approximation Algorithm}
\label{sec:opt}
In this section, we will design and analyze a partially adaptive policy as $1/2(1-1/e-\epsilon)\approx 0.316$-approximation algorithm for SBMSm, where $e\approx 2.718$ denotes the Nepero number and $\epsilon>0$ is an arbitrarily small error due to some sampling procedures. This policy is based on the greedy approach, and its approximation guarantee and time-complexity hold under the properties given in Section \ref{subsec: properties}. 

In general, a {\em partially adaptive policy}, in a first part, 
determines how to
distribute the budget $B$ over all rounds by computing a {\em budget vector} $\vec{b}=(b_1,\ldots, b_T)$, so that $b_t$ is the budget assigned to round $t$; this budget assignment is done in a non-adaptive way, that is, without knowing the particular realization of the stochastic environment, but only based on the knowledge of the input instance $I$. Subsequently, for each round $t\in [T]$, the policy adaptively selects $b_t$ items to be observed in round $t$; differently from the the first part, now the realization of the stochastic environment is partially revealed after each observation. It is worth to notice that an optimal solution for SBMSm, in general,
dynamically allocates the budget after each item selection.
Instead a partially adaptive policy allocates all the budget to the rounds at the beginning. 

The specific partially adaptive policy we will consider is based on the greedy approach: (i) it first determines the budget allocation by repeatedly assigning each unit of budget to the rounds in a non-adaptive greedy way, and then (ii) greedily selects the items in an adaptive way (according to the budget previously allocated to each round). 

Despite partially adaptive policies are in general weaker than fully adaptive ones (because of the relaxation on budget adaptivity), we will show that our policy still guarantees a good approximation of the optimal adaptive value. Indeed, as a consequence of the hardness result of \cite{Feige98} (holding for non-adaptive single-round submodular maximization), we have that SBMSm cannot be approximated within a factor better than $0.632\approx 1-1/e$ (unless P=NP). Thus, the approximation guaranteed by our partially adaptive policy is close to the known inapproximability bound.


\begin{remark}\label{remalimit}
Despite a partially adaptive policy decides the budget allocation in advance, it must strongly takes into account the heterogeneity of the input instance over all rounds, in order to obtain a good approximation of the optimal adaptive value. Indeed, if the budget was allocated without considering the structure of the input instance (e.g., by distributing uniformly the budget), the resulting approximation could be non-constant. 

To show this, assume that the budget allocation is determined by a budget vector $\vec{b}=(b_1,\ldots, b_T)$ that depends on the number of rounds $T$ and the budget $B$ only, and let $t^*$ be the index $t$ minimizing $b_t$ (among $b_1,b_2,\ldots, b_T$). Now, consider an instance $I= (T, B, V, H, (\mathcal{P}_t)_{t\in [T]}, (f_t)_{t \in [T]})$ where $|V|=B=T$ and $|H|=1$. As $|H|=1$, the underlying problem is deterministic, thus we can avoid at all the presence of states. Each $f_t$ is defined as follows: $f_t(S)=|S|$ for any $S\subseteq V$ if $t=t^*$ (that is, the objective function at round $t^*$ counts the number of selected items) and $f_t(S)=0$ if $t\neq t^*$ (that is, the objective function at any round $t\neq t^*$ is zeroed).

We have that the optimal policy $\pi^*$ for $I$ simply selects all items at round $t^*$ (that is, the unique round that produces a positive value of the aggregated objective function) and does not select any item at rounds $t\neq t^*$; the resulting value of $\pi^*$ is $\sigma(\pi^*)=B=T$. Instead, the value guaranteed by any policy $\pi$ that allocates the budget according to vector $\vec{b}$ is $\sigma(\pi)=b_{t^*}=\min_{t\in [T]}b_t\leq B/T=T/T=1$. Thus, policy $\pi$ is only $\alpha$-approximate with $\alpha=b_{t^*}/B \leq 1/T$ (that is non-constant due to dependence on $T$).
\end{remark}

Below, we provide the detailed description of such policy and the analysis of its approximation factor. 


\subsection{Estimating Objective Functions}
\label{sec:sampling}
In order to solve the desired maximization problems through our greedy partially adaptive policy, we first need to estimation the expected value of the objective functions achieved when selecting an item.
However, this is not an easy task due to the stochastic nature of the problem.
We first introduce a standard Monte-Carlo algorithm, called ${\sf Oracle1}$,
that can be
used to give a good estimation, with high probability, of the expected value of the objective function when selecting an item:
such algorithm can be used to find an item that approximately maximizes the expected influence spread of an objective function, under a given partial state.

Then, we provide another Monte-Carlo algorithm, called ${\sf Oracle2}$,
that can be
used to estimate the expected increment of the objective function achieved by a given adaptive policy $\pi$ when selecting the $j$-th item (in order of selection) at any round $t$; as in {\sf Oracle1}, the precision of such estimation will be good with high probability. By exploiting the Chernoff-Hoeffding's inequality \cite{hoeffding1963probability}, we show that, for any $\delta,\xi>0$ both algorithms can be set to guarantee a precision with an additive error of at most $\delta$ with probability at least $1-\xi$, and to run in polynomial time w.r.t. to the size of the input instance, $1/\delta$ and $1/\xi$.

\subsubsection{Oracle1}
\label{def_oracle}
${\sf Oracle1}$ takes in input an integer $q\geq 1$, an instance $I$ of the SBMSm problem, a round $t\in [T]$, a partial state $\langle S_t\rangle$ for round $t$, an item $v$, and picks $q$  samples $X_{t,v,\langle S_t\rangle}^1,\ldots, X_{t,v,\langle S_t\rangle}^q$ of the random variable $\bm X_{t,v,\langle S\rangle}:=f_t(S_t\cup \{v\}, \bm \eta_t)-f_t(S_t,\bm \eta_t)$ conditioned by the observation of $\langle S_t\rangle$. Then, ${\sf Oracle1}$ returns the value ${\sf Oracle1}(q,I,t,\langle S_t\rangle,v):=\tilde{\Delta}_t(v|\langle S_t\rangle):=\sum_{j=1}^q X_{t,v,\langle S_t\rangle}^j/q$. 

Given $\delta,\xi>0$, ${\sf Oracle1}$ satisfies the following property, whose proof is deferred to Appendix \ref{appendix:oracle}.
\begin{proposition}
\label{prop:oracle_property}
 If $q\geq q(\delta,\xi):=2\frac{\Lambda^2}{\delta^2}\ln(2n/\xi)$, then:
 \begin{itemize}
 \item[(i)] any item $\tilde{v}$ maximizing $\tilde{\Delta}_t(\tilde{v}|\langle S_t\rangle)$ verifies $
\mathbb{P}\left[\max_{v\in V}\Delta_t(v|\langle S_t\rangle)-{\Delta_t}(\tilde{v}|\langle S_t\rangle)\leq \delta\right]\geq 1-\xi$;
\item[(ii)] If $q\leq P(q(\delta,\xi))$ for some polynomial $P$, ${\sf Oracle1}$ runs in polynomial time w.r.t. the size of the input instance, $1/\delta$ and $1/\xi$.
\end{itemize}
\end{proposition}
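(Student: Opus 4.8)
The plan is to treat this as a standard empirical-mean concentration argument: apply the Chernoff--Hoeffding inequality \cite{hoeffding1963probability} to each item individually, combine the per-item guarantees via a union bound, and then transfer the resulting uniform estimation accuracy into near-optimality of the empirical maximizer $\tilde{v}$.

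First I would check that the sampled random variable $\bm X_{t,v,\langle S_t\rangle}=f_t(S_t\cup\{v\},\bm\eta_t)-f_t(S_t,\bm\eta_t)$ (conditioned on $\langle S_t\rangle\preceq\bm\eta_t$) is confined to the bounded interval $[0,\Lambda]$: the upper bound $\Lambda$ is precisely condition (i) of polynomial boundedness, while the lower bound $0$ follows from the monotonicity of $f_t$. By definition its expectation equals $\Delta_t(v\mid\langle S_t\rangle)$, so the returned quantity $\tilde\Delta_t(v\mid\langle S_t\rangle)$ is the mean of $q$ independent copies of a bounded, unbiased estimator of $\Delta_t(v\mid\langle S_t\rangle)$. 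Hoeffding's inequality for $q$ i.i.d. samples on a range of length $\Lambda$ then yields, for each fixed $v$, the bound $\mathbb{P}\left[|\tilde\Delta_t(v\mid\langle S_t\rangle)-\Delta_t(v\mid\langle S_t\rangle)|\geq\delta/2\right]\leq 2\exp\!\left(-q\delta^2/(2\Lambda^2)\right)$. Substituting $q=q(\delta,\xi)=2\frac{\Lambda^2}{\delta^2}\ln(2n/\xi)$ makes the exponent exactly $-\ln(2n/\xi)$, so the per-item failure probability is at most $\xi/n$; a union bound over the $n$ items guarantees that, with probability at least $1-\xi$, all estimates are simultaneously within $\delta/2$ of their true values.

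Conditioning on this good event, I would fix a true maximizer $v^\star\in\arg\max_{v\in V}\Delta_t(v\mid\langle S_t\rangle)$ and decompose the optimality gap of the empirical maximizer $\tilde v$ as the telescoping sum $\Delta_t(v^\star\mid\langle S_t\rangle)-\Delta_t(\tilde v\mid\langle S_t\rangle)=\big[\Delta_t(v^\star\mid\langle S_t\rangle)-\tilde\Delta_t(v^\star\mid\langle S_t\rangle)\big]+\big[\tilde\Delta_t(v^\star\mid\langle S_t\rangle)-\tilde\Delta_t(\tilde v\mid\langle S_t\rangle)\big]+\big[\tilde\Delta_t(\tilde v\mid\langle S_t\rangle)-\Delta_t(\tilde v\mid\langle S_t\rangle)\big]$. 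The first and third bracketed terms are each at most $\delta/2$ by the good event, and the middle term is nonpositive because $\tilde v$ maximizes the empirical estimates. Summing gives a total gap of at most $\delta$, which establishes claim (i). For claim (ii), each of the $q$ samples costs one invocation of the model's polynomial-time oracle (to draw a conditioned partial state and evaluate $f_t$), so the total running time is $O(q)$ times a polynomial; since $\Lambda$ is polynomially bounded in $|I|$, the threshold $q(\delta,\xi)$ is polynomial in $|I|$, $1/\delta$, and $1/\xi$, and hence so is any $q\leq P(q(\delta,\xi))$.

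I do not expect a substantial obstacle here; the only points requiring care are establishing the two-sided boundedness of $\bm X_{t,v,\langle S_t\rangle}$ (monotonicity for the lower bound, condition (i) of polynomial boundedness for the upper bound) and tracking the Hoeffding constants so that the prescribed $q(\delta,\xi)$ drives the per-item failure probability down to exactly $\xi/n$ before the union bound is applied.
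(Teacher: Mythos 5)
Your proposal is correct and follows essentially the same route as the paper's proof: Chernoff--Hoeffding applied per item with accuracy $\delta/2$, a union bound over the $n$ items driven by the choice $q(\delta,\xi)=2\frac{\Lambda^2}{\delta^2}\ln(2n/\xi)$, and then the standard decomposition of $\Delta_t(v^\star\mid\langle S_t\rangle)-\Delta_t(\tilde v\mid\langle S_t\rangle)$ using the nonpositivity of the middle term from the empirical optimality of $\tilde v$. The only cosmetic difference is that you write the gap as an explicit three-term telescoping sum while the paper absorbs the middle term into a single inequality, and you make the lower bound $\bm X_{t,v,\langle S_t\rangle}\geq 0$ via monotonicity explicit where the paper leaves it implicit.
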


\subsubsection{Oracle2}
\label{def_oracle2}
${\sf Oracle2}$ takes in input an integer $q\geq 1$, an instance $I$ of the SBMSm problem, a polynomial adaptive policy $\pi(t)$ that only works at round $t$ and selects all the items in $n$ steps, an integer $i\in [n]$, and picks $q$ samples $Y_{t,i}^1,\ldots, Y_{t,i}^q$ of the random variable
$\bm Y_{t,i}:=f_t(\{\bm v_1,\ldots, \bm v_{i}\},\bm \eta_t)-f_t(\{\bm v_1,\ldots, \bm v_{i-1}\},\bm \eta_t)$, that denotes the increment of $f_t$ achieved by $\pi(t)$ when selecting the $i$-th item $\bm v_i$ at round $t$, where $\bm v_1,\ldots, \bm v_n$ are the random nodes chosen by $\pi(t)$ in order of selection. Then, ${\sf Oracle2}$ returns the empirical mean ${\sf Oracle2}(q,I,\pi(t),i):=\tilde{\Delta}_{t,i}:=\sum_{j=1}^q Y^j_{t,i}/q$. We have that ${\sf Oracle2}$ satisfies the following property, whose proof is deferred to Appendix \ref{appendix:oracle}.
\begin{proposition}\label{prop:oracle2_property}
 If $q\geq q'(\delta,\xi):=\frac{\Lambda^2}{2\delta^2}\ln(2Tn/\xi)$, then:
 \begin{itemize}
  \item[(i)] $
\mathbb{P}\left[|\Delta_{t,i}-\tilde{\Delta}_{t,i}|\leq \delta,\ \forall t\in [T],i\in [n]\right]\geq 1-\xi,$
where $\Delta_{t,i}$
is
the (exact) expected increment of $f_t$ achieved by $\pi(t)$ when selecting the $i$-th item;
\item[(ii)] if $q\leq P(q'(\delta,\xi))$ for some polynomial $P$, ${\sf Oracle2}$ runs in polynomial time w.r.t. the size of the input instance, $1/\delta$ and $1/\xi$.
 \end{itemize}

\end{proposition}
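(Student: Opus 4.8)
The plan is to establish both parts by exactly the Hoeffding-concentration-plus-union-bound scheme already used for \texttt{Oracle1} in Proposition~\ref{prop:oracle_property}; the only genuine differences are that I must first certify the range of the sampled increments and that the union bound now ranges over the $Tn$ pairs $(t,i)$ rather than over the $n$ items. First I would pin down the range of each sampled variable. For a fixed round $t$ and index $i$, the variable $\bm Y_{t,i}=f_t(\{\bm v_1,\ldots,\bm v_i\},\bm\eta_t)-f_t(\{\bm v_1,\ldots,\bm v_{i-1}\},\bm\eta_t)$ is a single marginal increment of $f_t$. By the monotonicity of $f_t$ it is nonnegative, and by part~(i) of the polynomial-boundedness property it is at most $\Lambda$; hence $\bm Y_{t,i}\in[0,\Lambda]$ almost surely, so the $q$ samples $Y_{t,i}^1,\ldots,Y_{t,i}^q$ all lie in an interval of width $\Lambda$. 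Because the oracle performs $q$ independent executions of the policy $\pi(t)$ (each drawing a fresh $\bm\eta_t\sim\mathcal{P}_t$ together with any internal randomness of the policy), these samples are i.i.d.\ with common mean $\Delta_{t,i}=\mathbb{E}[\bm Y_{t,i}]$, and $\tilde{\Delta}_{t,i}$ is precisely their empirical mean.

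Next I would apply the Chernoff--Hoeffding inequality \cite{hoeffding1963probability} to the empirical mean of these bounded i.i.d.\ variables, obtaining for each fixed $(t,i)$
\begin{equation*}
\mathbb{P}\!\left[|\Delta_{t,i}-\tilde{\Delta}_{t,i}|>\delta\right]\leq 2\exp\!\left(-\frac{2q\delta^2}{\Lambda^2}\right).
\end{equation*}
Imposing the threshold $q\geq q'(\delta,\xi)=\frac{\Lambda^2}{2\delta^2}\ln(2Tn/\xi)$ makes the right-hand side at most $\xi/(Tn)$, since then $\frac{2q\delta^2}{\Lambda^2}\geq\ln(2Tn/\xi)$. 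A union bound over the $Tn$ pairs $(t,i)\in[T]\times[n]$ then shows that the probability that some estimate deviates by more than $\delta$ is at most $\xi$, which is exactly the complement of the event in part~(i). I would stress here that the union bound requires no independence across distinct pairs, so the correlation between $\bm Y_{t,i}$ and $\bm Y_{t,i'}$ arising from a shared execution of $\pi(t)$ is harmless.

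For part~(ii), I would observe that $q'(\delta,\xi)$ is polynomial in $\Lambda$, $1/\delta$ and $\ln(Tn/\xi)$, hence polynomial in the size of $I$, in $1/\delta$ and in $1/\xi$; therefore $q\leq P(q'(\delta,\xi))$ is polynomial as well. Each of the $q$ samples costs one polynomial-time execution of $\pi(t)$ plus two evaluations of $f_t$ through the assumed polynomial-time oracle, so the total running time is polynomial in the stated parameters.

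I do not expect a real obstacle here: the argument is routine. The only points demanding care are obtaining the correct two-sided range $[0,\Lambda]$ for $\bm Y_{t,i}$ — which is what forces the simultaneous use of monotonicity and of polynomial boundedness — and the bookkeeping that drives the per-pair failure probability down to $\xi/(Tn)$, so that after the union bound the sample threshold lines up exactly with the stated $q'(\delta,\xi)$ (the factor $1/2$ in $q'$ reflecting the width-$\Lambda$ interval, in contrast to the slightly different constant appearing for \texttt{Oracle1}).
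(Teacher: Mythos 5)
Your proposal is correct and follows essentially the same route as the paper's proof: bound each $\bm Y_{t,i}$ in $[0,\Lambda]$, apply Chernoff--Hoeffding to the empirical mean of the $q$ i.i.d.\ samples, and take a union bound over the $Tn$ pairs $(t,i)$ so that the threshold $q'(\delta,\xi)$ drives the total failure probability down to $\xi$. The only (harmless) difference is that you explicitly justify the lower end of the range via monotonicity, whereas the paper simply asserts $\bm Y_{t,i}\in[0,\Lambda]$.
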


\subsection{Single-Round Adaptive Policy}
In order to define our approximate policy we first introduce the greedy framework adopted at each round, given the available budget for that round. The
{\em greedy single-round (adaptive) policy} ${\sf \SGR}_{\delta,\xi}(I,t,b)$  (see Algorithm~\ref{unique_round_alg}
for the pseudo-code)
greedily selects $b$ items (in $b$ steps) at round $t$ only, and at each step selects an item maximizing the expected increment of $f$ up to an addend $\delta$ and with prob. at least $1-\xi$,
conditioned by the previous observations;
the item selected at each step can be determined by calling {\sf Oracle1}
for each item $v$ not yet included, and then choosing that maximizing the value ${\sf Oracle1}(q(\delta,\xi),I,t,\langle S_t\rangle,v)$.
\begin{algorithm}[H]
	\caption{${\sf \SGR}_{\delta,\xi}$}
	\label{unique_round_alg}
	\begin{algorithmic}[1]
		\REQUIRE an input instance $I$ of SBMSm, a round $t$, a budget $b\geq 0$, an (unknown) state $\eta_t$ for round $t$;
		\ENSURE a subset $S_t:={\sf \SGR}_{\delta,\xi}(I,t,b,\eta_t)$;
		\STATE let $S_t\leftarrow \emptyset$;
		\WHILE{$|S_t|\leq b$}
		\STATE $v\leftarrow \arg\max_{v'\in V}{\sf Oracle1}(q(\delta,\xi),I,t,\langle S_t\rangle, v')$;\\
		\textcolor{gray}{\% $ \Delta_t(v|\langle S_t,\eta_t\rangle)\geq \max_{v'\in V\setminus S_t} \Delta_t(v'|\langle S_t,\eta_t\rangle)-\delta$ with prob. at least $1-\xi$ by Proposition~\ref{prop:oracle_property}.}
		\STATE $S_t\leftarrow S_t\cup \{v\}$;
		\ENDWHILE
		\RETURN $S_t$.
	\end{algorithmic}
\end{algorithm}

Notice that ${\sf \SGR}_{\delta,\xi}$ is equivalent to an approximate variant of the adaptive greedy policy defined by \cite{golovin2011adaptive}, that
provides a $\left(1-1/e-\epsilon\right)$-approximation to the best
policy when considering
a unique round.


\subsection{Budget Allocation Algorithm}
In order to allocate the budget over rounds, we consider an algorithm called ${\sf BudgetGr}_{\delta,\epsilon}(I)$ (see Algorithm~\ref{BudgetGr_alg} for the pseudo-code), that greedily assigns (in a non-adaptive way) each unit of budget to the round $t$ that maximizes the expected increment of the objective function achieved by the greedy single-round (adaptive) policy; the above expected increment is well-approximated with high probability by algorithm {\sf Oracle2}, in which the single-greedy policy is executed many times to collect a sufficiently high number of samples on the objective function increments.
\begin{algorithm}[H]
	\caption{${\sf BudgetGr}_{\delta,\xi}$}
	\label{BudgetGr_alg}
	\begin{algorithmic}[1]
		\REQUIRE an input instance $I$ of SBMSm;
		\ENSURE a vector $\vec{b}:=(b_1,\ldots, b_T)={\sf BudgetGr}_{\delta,\xi}(I)$;
		\FOR{$t=1,\ldots, T$}
				\STATE 	 $\overline{\Delta}_{t,0}\leftarrow \infty$;
		\FOR{$i=1,\ldots, n$}
\STATE $\tilde{\Delta}_{t,i}\leftarrow {\sf Oracle2}(q'(\delta,\xi),I,{\sf \SGR}_{\delta,\xi}(I,t,n),i)$;
		\STATE $\overline{\Delta}_{t,i}\leftarrow \min\{\overline{\Delta}_{t,i-1},\tilde{\Delta}_{t,i}\}$;
		\ENDFOR
		\STATE 	 $\overline{\Delta}_{t,n+1}\leftarrow -\infty$;\textcolor{gray}{\ \% This is done to impose the unfeasibility of selecting $n+1$ items at each round $t$.}
		\ENDFOR
		\STATE \textcolor{gray}{\% By Proposition~\ref{prop:oracle2_property} applied to ${\sf \SGR}_{\delta,\xi}(I,t,n)$, the event ``$|\Delta_{t,i}-\tilde{\Delta}_{t,i}|\leq \delta,\ \forall t\in [T],i\in [n]$'' holds with prob. at least $1-\xi$, where $\Delta_{t,i}$ denotes the (exact) expected increment of objective function $f_t$, achieved by ${\sf \SGR}_{\delta,\xi}(I,t,n)$ when selecting the $i$-th item, in order of selection. Furthermore, as
		$\Delta_{t,i}$ is
		non-increasing in $i$ up to a small addend (by adaptive-submodularity), we also have that the distance $|\Delta_{t,i}-\overline{\Delta}_{t,i}|$ is small with prob. at least $1-\xi$.}
		\STATE $(b_1,\ldots, b_T)\leftarrow (0,\ldots, 0)$;
		\FOR{$h=1,\ldots, B$}
		\STATE $t^*\leftarrow \arg\max_{t\in [T]}\overline{\Delta}_{t,b_t+1}$;
		\STATE $b_{t^*}\leftarrow b_{t^*}+1$;
		\ENDFOR
		\RETURN $\vec{b}:=(b_1,\ldots, b_T)$;
	\end{algorithmic}
\end{algorithm}

\subsection{Multi-Round Adaptive Policy}\label{subsec:multi}
Let $\vec{b}:=(b_1,\ldots, b_T)$ be the output of ${\sf BudgetGr}_{\delta,\epsilon}(I)$.
The
{\em greedy multi-round (adaptive) policy} ${\sf \MGR}_{\delta,\xi}(I)$ (see Algorithm \ref{multi_round_alg} for the pseudo-code) applies the greedy single-round policy with budget $b_t$ at each round $t\in [T]$ to determine a subset $S_t$, and returns the aggregated subset $\vec{S}=(S_1,\ldots, S_T)$.
 \begin{algorithm}[ht]
	\caption{${\sf \MGR}_{\delta,\xi}$}
	\label{multi_round_alg}
	\begin{algorithmic}[1]
		\REQUIRE an input instance $I$ of SBMSm, an (unknown) aggregated state $\vec{\eta}:=(\eta_1,\ldots, \eta_t)$;
		\ENSURE an aggregated subset $\vec{S}:={\sf \MGR}_{\delta,\xi}(I,\vec{\eta})$;
		\STATE $\vec{b}:=(b_1,\ldots, b_T)\leftarrow {\sf BudgetGr}_{\delta.\xi}(I)$;
		\FOR{$t=1,\ldots, T$}
		\STATE $S_t\leftarrow {\sf \SGR}_{\delta,\xi}(I,t,b_t,\eta_t)$;
		\STATE wait the end of round $t$;
		\ENDFOR
		\RETURN $\vec{S}:=(S_1,\ldots, S_T)$.
	\end{algorithmic}
\end{algorithm}

\subsection{Approximation Factor and Time-Complexity of the Policy}\label{subsec:approx}
Theorem \ref{thm1} shows that, for any fixed $\epsilon>0$ and opportune choices of $\delta$ and $\xi$, ${\sf \MGR}_{\delta,\xi}(I)$ achieves a $1/2(1-1/e-\epsilon)$ approximation to the optimal adaptive policy and runs in polynomial-time. In the following remark, we first explain why the approaches previously adopted to deal with the single-round case \cite{golovin2011adaptive} could not be directly applied to the general multi-round framework to obtain a constant approximation, thus showing that a new approach like ours was somehow necessary.
\begin{remark}\label{remaadaptive}
As the adaptive greedy algorithm of \cite{golovin2011adaptive} returns a $(1-1/e-\epsilon)$-approximation for (adaptive) single-round submodular maximization problems, one could reasonably believe that their approach guarantees the same approximation for the multi-round generalization considered in this work; if this was the case, our partially adaptive algorithm would not guarantee a better approximation than their fully adaptive algorithm, and it would be of interest only because partial adaptivity is easier to implement in various realistic contexts. However, a direct application of the adaptive greedy algorithm of \cite{golovin2011adaptive} would require to represent the input multi-round instance as a union of $T$ independent single-round sub-instances (i.e., one for each round) and, similarly as in the greedy-single round policy (see Algorithm \ref{unique_round_alg}), would greedily and adaptively select $B$ items (in $B$ steps), possibly moving back and forth between rounds. Despite our model allows to move from one round to the subsequent ones, it forbids to return to selecting items in previously abandoned rounds (as this would mean go back in time and cancel the choices previously made). Thus, due to this constraint, the greedy choices suggested by the algorithm of \cite{golovin2011adaptive} may not be valid when applied to our multi-round framework. 

Furthermore, if we apply the algorithm of \cite{golovin2011adaptive} by restricting the greedy choice, at each step, to items of the current or the subsequent rounds, the resulting approximation could be arbitrarily low (i.e., inefficient). To show this, we consider a simple instance $I$ defined as follows: we have two rounds, a set of $n$ items and a budget $B:=n+1$; a special item $v^*$ has weight weight 1 for each round, and the remaining items have weight $1/2$ at round $1$ and weight $0$ at round $2$; the objective function, at each round, simply counts the total weight of selected items. We observe that the considered instance can be easily modelled as a (fully deterministic) binary submodular stochastic probing problem (see Section~\ref{sec:appl}). An optimal policy for $I$ requires to select all items at round $1$ and the special item at round $2$, and then guarantees a value of $(n-1)/2+2=n/2+3/2$. However, the solution returned by the considered adaptation of the  greedy algorithm of \cite{golovin2011adaptive} selects the special item at round 1, moves to the subsequent round and then selects all the available $n$ items (as it cannot go back to round 1); in such a case the obtained value is $2$, and then only guarantees a $2/(n/2+3/2)=4/(n+3)=O(1/n)$-approximation of the optimal value, that becomes arbitrarily low as $n$ increases.
\end{remark}
\begin{theorem}\label{thm1}
Let $I=(T,B,V,H,(\mathcal{P}_t)_{t\in [T]},R,(f_t)_{t\in [T]})$ be an input instance of SBMSm.
The greedy multi-round policy ${\sf \MGR}_{\delta,\xi}(I)$ is a $1/2(1-1/e-\epsilon)$-approximate policy for $\delta:=\xi:=\frac{\lambda c\epsilon}{B(4+3\Lambda)}$ and any constant $c$ such that $0<c\leq 1$, and it is polynomial in the size of $I$ and $1/\epsilon$.
\end{theorem}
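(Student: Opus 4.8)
The plan is to factor the approximation ratio as a product of a $1/2$, coming from committing to the budget split non-adaptively, and a $(1-1/e-\epsilon)$, coming from the adaptive-greedy selection inside each round, and then to verify that the two-phase policy ${\sf \MGR}_{\delta,\xi}$ realizes this product. Throughout, write $v_t(b)$ for the optimal single-round adaptive value at round $t$ with budget $b$ (the optimum of $\SRM$ when the continuation term $R$ is dropped), and $G_t(b):=\mathbb{E}[f_t({\sf \SGR}_{\delta,\xi}(I,t,b,\bm\eta_t),\bm\eta_t)]$ for the expected value of the greedy single-round policy. By adaptive submodularity both $v_t$ and $G_t$ are non-decreasing and concave with $v_t(0)=G_t(0)=0$; this concavity is exactly what makes phase (i) behave like separable concave maximization.

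First I would prove the budget-adaptivity lemma: some non-adaptive $\vec b$ with $\sum_t b_t\le B$ satisfies $\sum_t v_t(b_t)\ge \tfrac12\,\sigma(\pi^*)$. This is the factor $1/2$ and, being precisely the upper bound on the budget-adaptivity gap, I expect it to be the main obstacle. Let $\pi^*$ be optimal adaptive and let $K_t$ be the random budget it spends in round $t$, so $\sum_tK_t\le B$ surely. The lever is property (ii) of strong adaptive submodularity applied with empty prior state: every adaptive round-$t$ selection of at most $k$ items has expected value at most $v_t(k)$, so each round's contribution is dominated by the concave profile $v_t(\cdot)$. The difficulty is that $\pi^*$ chooses $K_t$ adaptively, correlated with $\bm\eta_t$ and, through the shared budget, across rounds; one therefore cannot condition on the realized count, since the event $\{K_t=k\}$ biases $\bm\eta_t$ and breaks the naive bound $\mathbb{E}[f_t(S^*_t,\bm\eta_t)\mid K_t=k]\le v_t(k)$. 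My plan is to fix the deterministic split by a threshold/balancing rule on the single-round marginal increments $D_{t,j}:=v_t(j)-v_t(j-1)$ (non-increasing by adaptive submodularity), charging the value $\pi^*$ extracts in each round against the chosen budget and using the concavity identity $v_t(2\beta)\le 2v_t(\beta)$; the delicate point is to make this charging respect global feasibility $\sum_t b_t\le B$ while still recovering a constant, which is where a prophet-inequality-type $1/2$ emerges.

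Next I would combine the two phases. The single-round guarantee of \cite{golovin2011adaptive}, recorded just after Algorithm~\ref{unique_round_alg}, gives $G_t(b)\ge (1-1/e-\epsilon_1)\,v_t(b)$ for all $t,b$, where $\epsilon_1$ absorbs the {\sf Oracle1} error. Since the greedy marginals $G_t(i)-G_t(i-1)$ are non-increasing in $i$, the inner loop of ${\sf BudgetGr}_{\delta,\xi}$, which hands each unit of budget to the round of largest current marginal $\overline\Delta_{t,b_t+1}$, is the greedy rule for maximizing a separable concave function under a cardinality budget and is therefore optimal among non-adaptive allocations for the objective $\sum_t G_t(b_t)$ (here the prefix-minima $\overline\Delta_{t,i}$ enforce the required monotonicity even under sampling noise). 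Writing $\vec b$ for its output and $\vec b^{\circ}$ for the allocation from the previous paragraph, I then chain, up to the oracle slack discharged below,
\begin{equation*}
\sigma({\sf \MGR}_{\delta,\xi})=\sum_t G_t(b_t)\ \ge\ \sum_t G_t(b^{\circ}_t)\ \ge\ (1-1/e-\epsilon_1)\sum_t v_t(b^{\circ}_t)\ \ge\ \tfrac12(1-1/e-\epsilon_1)\,\sigma(\pi^*).
\end{equation*}

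Finally I would discharge the sampling errors and bound the running time, which is where the exact choice $\delta=\xi=\frac{\lambda c\epsilon}{B(4+3\Lambda)}$ is pinned down. On the success event of Propositions~\ref{prop:oracle_property}--\ref{prop:oracle2_property} each of the at most $B$ greedy steps (and each budget-allocation comparison) loses only an additive $\delta$, while the failure event, of probability at most $\xi$, costs at most the trivial bound $B\Lambda$; summing over the run yields a total additive slack of order $B(\Lambda+1)\delta+B\Lambda\,\xi$, which the stated choice forces below $c\,\epsilon\,\lambda$. Dividing by $\sigma(\pi^*)\ge\lambda$ — the latter holding because the polynomial-boundedness assumption exhibits a single item with expected value at least $\lambda$, hence a feasible policy of value $\ge\lambda$ — turns this additive slack into a multiplicative $\epsilon$, upgrading $\tfrac12(1-1/e-\epsilon_1)$ to the claimed $\tfrac12(1-1/e-\epsilon)$. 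Polynomiality follows since $\Lambda$ and $1/\lambda$ are polynomially bounded, so $1/\delta,1/\xi$ are polynomial in $|I|$ and $1/\epsilon$; hence the sample sizes $q(\delta,\xi),q'(\delta,\xi)$ and the $O(BnT)$ oracle calls performed by ${\sf BudgetGr}_{\delta,\xi}$ and ${\sf \MGR}_{\delta,\xi}$ are all polynomial.
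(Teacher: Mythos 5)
Your overall architecture matches the paper's: a factor $1/2$ from committing to a non-adaptive budget split, a factor $(1-1/e)$ from the per-round adaptive greedy, optimality of the greedy budget allocation for the separable objective $\sum_t G_t(b_t)$ via non-increasing \emph{greedy} marginals, and an additive error budget of order $B(\delta+\xi\Lambda)$ converted into a multiplicative $\epsilon$ through $OPT\ge\lambda$. Those last three ingredients are essentially the paper's Lemmas \ref{lem3}--\ref{lem7}, and your treatment of them is sound.

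The gap is in the first and hardest step, the factor-$2$ lemma. You correctly identify the obstruction --- the optimal policy's per-round budget $K_t$ is adaptive and correlated with $\bm\eta_t$, so one cannot condition on $\{K_t=k\}$ and invoke $v_t(k)$ --- but you then only announce a plan (``threshold/balancing rule \dots prophet-inequality-type $1/2$'') without giving an argument; this is precisely the step that cannot be waved through, since it carries the entire constant. The paper resolves it differently: set $d^*_t:=\mathbb{E}[K_t]$ (a real number with $\sum_t d^*_t\le B$); bound the round-$t$ contribution of $\pi^*$ by $\overline{OPT}_t(d^*_t)$, the optimum over single-round policies whose \emph{expected} number of selections is $d^*_t$; prove $\overline{OPT}_t(d)\le 2\,OPT_t(d)$ (Lemma~\ref{lem1}) by chopping such a policy's selection into consecutive blocks of at most $\lceil d\rceil$ items, bounding each nonempty block by $OPT_t(\cdot)$ via property (ii) of \emph{strong} adaptive submodularity, and showing that the expected number of nonempty blocks is at most $2$ by a Markov-type count; and finally derandomize the fractional vector $\vec d^*$ into an integral $\vec b^*$ by the exchange argument of Lemma~\ref{lem2}. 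None of this appears in your proposal. Relatedly, you assert that $v_t(\cdot)$ is concave ``by adaptive submodularity'' and lean on non-increasing $D_{t,j}=v_t(j)-v_t(j-1)$; the paper never establishes (and does not need) discrete concavity of the \emph{optimal} value curve --- only the greedy value curve has provably non-increasing marginals --- and the comparison of adaptive-count to fixed-count optima goes through strong adaptive submodularity, not concavity. Until you supply a complete proof of the existence of an integral $\vec b^{\circ}$ with $\sum_t b^{\circ}_t= B$ and $\sum_t v_t(b^{\circ}_t)\ge\tfrac12\,\sigma(\pi^*)$, the theorem is not proved.
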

\begin{proof}[Proof Sketch]
We first show the polynomial-time complexity of the algorithm. First of all, by Propositions~ \ref{prop:oracle_property} and \ref{prop:oracle2_property} we have that both {\sf Oracle1} and {\sf Oracle2} run in polynomial time w.r.t. the size of $I$, $1/\delta$ and $1/\xi$. Then, as $1/\epsilon=O\left(\frac{nT\Lambda}{\lambda\delta}\right)=O\left(\frac{nT\Lambda}{\lambda\xi}\right)$, by the polynomial boundedness property (that polynomially relates quantities $\Lambda$ and $1/\lambda$ to the size of $I$) we conclude that {\sf Oracle1} and {\sf Oracle2} are polynomial in $1/\epsilon$ and the size of $I$. Finally, the polynomial-time complexity of the whole greedy multi-round policy follows from the fact that such policy applies $n\cdot T$ times algorithm {\sf Oracle2} to compute the budget vector (via algorithm ${\sf BudgetGr}_{\delta,\xi}(I)$) and $\sum_{t\in [T]}b_t=B < n\cdot T$ times the algorithm {\sf Oracle1} to select $b_t$ items at each round $t$ (via algorithm ${\sf \SGR}_{\delta,\xi}(I,t,b_t)$); the other steps of the greedy multi-round policy are less significant in terms of time-complexity.

In the remainder of the proof, we focus on the analysis of the approximation factor. For simplicity, we first  show that the greedy multi-round policy achieves a $1/2(1-1/e)$-approximation (that is, with $\epsilon=0$), assuming that the probabilistic oracles are exact, i.e., if the algorithms {\sf Oracle1} and {\sf Oracle2} provided in Section \ref{sec:sampling} can be run with $\delta:=\xi:=0$. The proof of the approximation factor for arbitrary $\delta=\xi>0$ extends the proof arguments presented below, and is given in Appendix~\ref{appendix:thm1}.

Let $\pi^*$ be an optimal adaptive policy for $I$, and let $OPT$ denote its expected value. Given a round $t\in [T]$ and a real number $d\geq 0$, let $\overline{OPT}_t(d)$ denote the optimal expected value achievable by a single-round policy at round $t$ assuming that the expected number of selected items is $d$. Instead, for an integer $b\geq 0$, let $OPT_t(b)$ denote the optimal expected value at round $t$ assuming that the number of selected items is equal to $b$, independently on the particular realization; we extend the above definition to any real number $d\geq 0$ by setting $OPT_t(d):=(d-\lfloor d\rfloor)OPT_t(\lceil d\rceil )+(1-(d-\lfloor d\rfloor))OPT_t(\lfloor d\rfloor )$, i.e., $OPT_t(d)$ is the expected value of an adaptive policy that optimally selects $\lceil d\rceil$ with probability $p:=d-\lfloor d\rfloor$, and optimally selects $\lfloor d\rfloor$ items with probability $1-p$. We observe that the above policy selects $d$ items in expectation, as $p\lceil d\rceil+(1-p)\lfloor d\rfloor=(d-\lfloor d\rfloor)\lceil d\rceil+(\lceil d\rceil-d)\lfloor d\rfloor=d(\lceil d\rceil-\lfloor d\rfloor)=d$ if $d\notin \mathbb{Z}_{\geq 0}$, and $p\lceil d\rceil+(1-p)\lfloor d\rfloor=pd+(1-p)d=d$ if $d\in \mathbb{Z}_{\geq 0}$; furthermore, we observe that $OPT_t(d)\leq \overline{OPT}_t(d)$ holds for any $d\geq 0$.

For any $t\in [T]$, let $d^*_t$ denote the expected number of items selected by $\pi^*$ at round $t$, and let $OPT_t$ denote the expected value of objective function $f_t$ under policy $\pi^*$. By definition of $\overline{OPT}_t(d^*_t)$, and since states and objective functions of distinct rounds are independent, we have that
\begin{equation}\label{thm1:eq1}
OPT\leq \sum_{t\in [T]}\overline{OPT}_t(d^*_t).
\end{equation}
Given an integer $b\geq 0$, let $GR_t(b)$ denote the expected value of the greedy single-round policy ${\sf \SGR}_{\delta,\xi}$ (see Algorithm \ref{unique_round_alg}) applied to round $t$, budget $b$ and $\delta=\xi=0$ (i.e., by resorting to exact oracles). Let $\vec{b}=(b_1,\ldots, b_T)$ denote the budget vector computed by algorithm ${\sf BudgetGr}_{\delta,\xi}$ (see Algorithm \ref{BudgetGr_alg}) with $\delta=\xi=0$, and let $GR(\vec{b})$ denote the expected value of the greedy multi-round greedy policy ${\sf \MGR}_{\delta,\xi}$ applied to budget vector $\vec{b}$, again with $\delta=\xi=0$. We observe that
\begin{equation}\label{thm1:eq2}
GR(\vec{b})=\sum_{t\in [T]}GR_t(b_t).
\end{equation}
To show the desired approximation factor, we break down the proof into four main parts:
\begin{enumerate}
\item By resorting to \eqref{thm1:eq1} and Lemma \ref{lem1} below, we have
\begin{equation}\label{thm1:eq3}
\frac{1}{2}\cdot OPT\leq \sum_{t\in [T]}\frac{1}{2}\cdot \overline{OPT}_t(d^*_t)\leq \sum_{t\in [T]}OPT_t(d^*_t).
\end{equation}
We stress that the proof of this step uses the strong submodularity property (in particular, it is used in the proof of Lemma \ref{lem1} below). 
\item By derandomizing vector $\vec{d}^*:=(d^*_1,\ldots, d^*_T)$, we have that there exists a budget vector $\vec{b^*}=(b_1^*,\ldots, b_T^*)$ of non-negative integers with $\sum_{t\in [T]}b^*_t=B$ such that 
\begin{equation}\label{thm1:eq4}
\sum_{t\in [T]}OPT_t(d^*_t)\leq \sum_{t\in [T]}OPT_t(b^*_t)
\end{equation}
(see Lemma \ref{lem2} below). 
\item Then, by exploiting the $(1-1/e)$-approximation of the standard adaptive greedy algorithm \cite{golovin2011adaptive} (that resorts to the adaptive submodularity property), we have 
\begin{equation}\label{thm1:eq5}
\sum_{t\in [T]}\left(1-\frac{1}{e}\right)OPT_t(b_t^*)\leq \sum_{t\in [T]}GR_t(b^*_t)=GR(\vec{b}^*)
\end{equation}
(see Lemma \ref{lem3} below). 
\item Finally, by exploiting the greedy assignment of ${\sf BudgetGr}_{0,0}$ and the adaptive submodularity, we have 
\begin{equation}\label{thm1:eq6}
GR(\vec{b}^*)\leq GR(\vec{b}).
\end{equation}
(see Lemma \ref{lem4} below).
\end{enumerate}
By combining the above inequalities we obtain 
\begin{align}
\frac{1}{2}\left(1-\frac{1}{e}\right)OPT&\leq \left(1-\frac{1}{e}\right)\sum_{t\in [T]}OPT_t(d^*_t)\leq GR(\vec{b}^*)\leq GR(\vec{b}),
\end{align}
and this shows the claim.
\end{proof}

\subsection{Technical Lemmas} Below are the lemmas cited in the proof sketch of Theorem \ref{thm1}.
\begin{lemma}\label{lem1}
For any round $t\in [T]$ and real number $d\geq 0$, we have $\overline{OPT}_t(d)\leq 2\cdot OPT_t(d)$. 
\end{lemma}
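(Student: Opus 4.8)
The plan is to take an optimal single-round policy $\pi$ witnessing $\overline{OPT}_t(d)$, let $\bm K$ be the (random) number of items it selects, so that $\mathbb{E}[\bm K]=d$, and to charge its value against fixed-cardinality policies through a block decomposition whose block sizes are randomized so as to match the interpolated definition of $OPT_t(d)$. Throughout I assume the standard normalization $f_t(\emptyset,\cdot)=0$. For the main case $d\ge 1$, write $k=\lfloor d\rfloor\ge 1$ and draw i.i.d.\ block sizes $\bm s_1,\bm s_2,\dots$ with $\bm s_l\in\{k,k+1\}$ and $\mathbb{P}[\bm s_l=k+1]=d-k$, so that $\mathbb{E}[\bm s_l]=d$; let $\bm B_0=0$ and $\bm B_l=\bm s_1+\dots+\bm s_l$ be the induced cut points, chosen independently of the execution of $\pi$. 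Denoting by $S_i$ the set selected by $\pi$ after $i$ steps and by $\Phi(m):=\mathbb{E}[f_t(S_{\min(\bm K,m)},\bm\eta_t)]$ the value of $\pi$ truncated to $m$ items, the identity $\sigma(\pi)=\sum_{l\ge 1}\big(\Phi(\bm B_l)-\Phi(\bm B_{l-1})\big)$ holds for every realization of the blocks (the sum telescopes to $\Phi(\infty)=\overline{OPT}_t(d)$).

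The heart of the argument is to bound one block. On the event $\bm K>\bm B_{l-1}$, after $\pi$ has selected $\bm B_{l-1}$ items and observed the partial state $\langle S_{\bm B_{l-1}}\rangle$, the items selected inside block $l$ form an adaptive continuation of $\pi$ that adds at most $\bm s_l$ further items. Strong adaptive submodularity (Definition~\ref{def:strong_ad_subm}(ii)), applied with the integer $\bm s_l$ in place of $k$ and with $S_t=S_{\bm B_{l-1}}$, then yields $\Phi(\bm B_l)-\Phi(\bm B_{l-1})\le \mathbb{P}[\bm K>\bm B_{l-1}]\cdot OPT_t(\bm s_l)$ conditionally on the blocks. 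Since $\bm s_l$ is independent of $\bm B_{l-1}$ and of $\bm K$, taking expectations and using $\mathbb{E}[OPT_t(\bm s_l)]=(1-(d-k))OPT_t(k)+(d-k)OPT_t(k+1)=OPT_t(d)$ gives
\[
\overline{OPT}_t(d)\ \le\ OPT_t(d)\cdot\mathbb{E}\big[\,\#\{j\ge 0:\bm B_j<\bm K\}\,\big].
\]

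It then remains to show that the expected number of cut points strictly below $\bm K$ is at most $2$. Writing $\bm N=\#\{j\ge 1:\bm B_j<\bm K\}$, the count equals $\mathbf 1[\bm K\ge 1]+\bm N$, and $\bm N+1=\min\{j\ge 1:\bm B_j\ge \bm K\}$ is a stopping time for the block sequence (conditionally on $\bm K$, which is independent of the blocks). Wald's identity gives $\mathbb{E}[\bm B_{\bm N+1}]=d\,\mathbb{E}[\bm N+1]$, while integrality of item counts yields $\bm B_{\bm N}\le \bm K-1$ and $\bm s_{\bm N+1}\le k+1$, hence $\bm B_{\bm N+1}\le \bm K+k$; taking expectations, $d\,\mathbb{E}[\bm N+1]\le d+k$, i.e.\ $\mathbb{E}[\bm N]\le k/d\le 1$. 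Therefore $\mathbb{E}[\#\{j\ge 0:\bm B_j<\bm K\}]\le 1+k/d\le 2$, which closes the case $d\ge 1$. The leftover range $0\le d<1$ is easier and needs no blocks: bounding each step's conditional marginal by $OPT_t(1)$ (Definition~\ref{def:strong_ad_subm}(ii) with $k=1$) gives $\overline{OPT}_t(d)\le OPT_t(1)\sum_{i\ge 1}\mathbb{P}[\bm K\ge i]=d\cdot OPT_t(1)=OPT_t(d)$.

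I expect the main obstacle to be controlling the random, adaptively chosen stopping time $\bm K$ while still landing on the clean constant $2$ and on the interpolated quantity $OPT_t(d)$ rather than the looser $OPT_t(\lceil d\rceil)$. Two ideas resolve this: randomizing the block sizes between $\lfloor d\rfloor$ and $\lceil d\rceil$ so that a single block contributes exactly $OPT_t(d)$ in expectation, and exploiting the integrality of cardinalities in the Wald estimate to save the additive unit that would otherwise degrade the renewal bound from $2$ to $2+\Theta(1/d)$. The only structural input beyond monotonicity is strong adaptive submodularity, used exactly once per block to compare the capped continuation of $\pi$ with an optimal fixed-cardinality policy; this is precisely the step that plain adaptive submodularity (property~(i) alone) would not support.
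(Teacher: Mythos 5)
Your proof is correct and follows essentially the same route as the paper's own argument (the appendix version for real $d$): the same block decomposition of the optimal policy's selection sequence into i.i.d.\ blocks of size $\lfloor d\rfloor$ or $\lceil d\rceil$ with mean $d$, one application of strong adaptive submodularity per nonempty block to charge it against $OPT_t(\bm s_l)$, and a bound of $2$ on the expected number of nonempty blocks, which the paper derives by the direct counting inequality $\mathbb{P}[A_i]\leq\mathbb{P}[C_i]$ (a nonempty block forces the previous block to be full) rather than by Wald's identity. The only nit is that your pathwise bound $\bm B_{\bm N}\leq \bm K-1$ fails on the event $\bm K=0$, so the Wald step should be carried out conditionally on $\bm K\geq 1$ (the count is $0$ otherwise); this changes nothing in the conclusion.
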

\begin{proof}[Proof Sketch]
We first assume that $d=b$ for some integer $b\geq 0$. The proof for a generic real number $d\geq 0$ is given in Appendix \ref{appendix:thm1}. 
If $b=0$ the claim trivially follows, as  $\overline{OPT}_t(0)=OPT_t(0)=0$. Thus, we assume that $b\geq 1$. Let $\pi^*_t(b)$ be an arbitrary optimal adaptive policy at round $t$ that selects $b$ items in expectation. 
 For any $i\in [n]$, let ${\bm S_i'}$ denote the (possibly empty) random set of at most $b$ items selected by $\pi^*_t(b)$ after having already selected $b\cdot (i-1)$ items (that is, $\sqcup_{i=1}^n\bm S_i'$ is the set of items selected by the policy $\pi^*_t(b)$), $A_i$ denote the event ``$|{\bm S_i'}|> 0$'' and $\Delta(\pi^*_t(b)|A_i)$ denote the expected increment of $f_t$ gained from the selection of ${\bm S_i'}$ (after the set $\sqcup_{j=1}^{i-1}\bm S_j'$ has already been selected) and conditioned by event $A_i$; furthermore, for $i\geq 2$, let $C_{i}$ denote the event ``$|{\bm S_{i-1}'}|= b$''. We have
 \begin{equation}\label{lem1:ineq0}
 \overline{OPT}_t(b)=\sigma(\pi^*_t(b))=\sum_{i=1}^{n} \mathbb{P}[A_i]\cdot\Delta(\pi^*_t(b)|A_i)\leq \sum_{i=1}^{n} \mathbb{P}[A_i]\cdot OPT_t(b),
 \end{equation}
 where: the second equality is obtained by decomposing the value of policy $\sigma(\pi^*_t(b))$ into a sum of expected increments and by observing that $\bm S_i'$ is non-empty only if event $A_i$ is true; the last inequality holds by the strong adaptive submodularity. Furthermore, we have that\footnote{We use $\#X$ in place of $|X|$ to denote the cardinality of a set $X$. This is done to avoid confusion when  symbol $|$ is used for conditional expectation.}
 \begin{align}
 b&\geq\sum_{i=2}^{n}\mathbb{E}_{\bm \eta_t}[\#{\bm S_{i-1}'}]\nonumber\\
 &\geq \sum_{i=2}^{n} \mathbb{P}[C_i]\cdot \mathbb{E}_{\bm \eta_t}[\#{\bm S_{i-1}'} \ |\ C_i]\nonumber\\
 &=\sum_{i=2}^{n} \mathbb{P}[C_i]\cdot b\label{lem1:ineq1}\\
  &\geq \sum_{i=2}^{n} \mathbb{P}[A_{i}]\cdot b\label{lem1:ineq2}\\
 &\geq -b+\sum_{i=1}^n \mathbb{P}[A_{i}]\cdot b\label{lem1:ineq3},
 \end{align}
 where: \eqref{lem1:ineq1} holds since, by definition of event $C_i$, the cardinality of ${\bm S_{i-1}'}$ is constantly equal to $b$ if event $C_i$ is true; \eqref{lem1:ineq2} holds since $\mathbb{P}[C_i]\geq  \mathbb{P}[A_{i}]$ (as event $A_{i}$ implies event $C_i$). By rearranging \eqref{lem1:ineq3}, we obtain 
 \begin{equation}\label{lem1:ineq3.1}
\sum_{i=1}^n \mathbb{P}[A_{i}]\leq 2,
\end{equation}  
and by applying this inequality to \eqref{lem1:ineq0} we obtain 
\begin{equation}\label{lem1:ineq4}
\overline{OPT}_t(b)\leq \sum_{i=1}^{n} \mathbb{P}[A_i]\cdot OPT_t(b)\leq 2\cdot OPT_t(b),
\end{equation}
that shows the claim.
\end{proof}
\begin{lemma}\label{lem2}
For any vector $\vec{d}=(d_1,\ldots, d_T)$ of non-negative real numbers such that $\sum_{t\in [T]}d_t=B$, there exists a budget vector $\vec{b^*}=(b_1^*,\ldots, b_T^*)$ of non-negative integers with $\sum_{t\in [T]}b^*_t=B$ such that 
\begin{equation*}
\sum_{t\in [T]}OPT_t(d_t)\leq \sum_{t\in [T]}OPT_t(b^*_t).
\end{equation*}
\end{lemma}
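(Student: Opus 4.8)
The plan is to exploit the piecewise-linear structure of each function $OPT_t$ and to round the fractional allocation $\vec{d}$ to an integral one by a mass-transfer (exchange) argument that never decreases the objective $\sum_{t\in[T]}OPT_t(d_t)$. First I would record the key structural fact that, by its very definition via interpolation, $OPT_t$ is affine on every interval $[k,k+1]$ with $k$ a non-negative integer: for $d$ in this interval, $OPT_t(d)=(d-k)\,OPT_t(k+1)+(k+1-d)\,OPT_t(k)$, so its slope there equals the increment $OPT_t(k+1)-OPT_t(k)$. Consequently the objective $G(\vec{d}):=\sum_{t\in[T]}OPT_t(d_t)$ is affine whenever we move mass between two coordinates without letting either of them cross an integer threshold.

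Next I would run the rounding itself. Suppose $\vec{d}$ has a non-integer coordinate. Since $\sum_{t\in[T]}d_t=B$ is an integer, the fractional parts must sum to an integer, so at least two coordinates, say $d_s$ and $d_t$, are non-integer. Consider transferring mass by replacing $(d_s,d_t)$ with $(d_s+\theta,d_t-\theta)$; as long as neither coordinate crosses an integer, $G$ changes by $\theta\big[(OPT_s(\lceil d_s\rceil)-OPT_s(\lfloor d_s\rfloor))-(OPT_t(\lceil d_t\rceil)-OPT_t(\lfloor d_t\rfloor))\big]$, which is linear in $\theta$. I would then pick the sign of $\theta$ for which this change is non-negative and increase $|\theta|$ until either $d_s+\theta$ reaches $\lceil d_s\rceil$ or $d_t-\theta$ reaches $\lfloor d_t\rfloor$. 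This keeps the total equal to $B$, keeps both coordinates non-negative (the decreasing one halts at its floor, which is $\ge 0$), and turns at least one more coordinate into an integer, all without decreasing $G$.

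Iterating this step strictly decreases the number of non-integer coordinates at each application, and whenever a fractional coordinate survives the integrality of $B$ again guarantees a partner to pair it with, so after finitely many steps every coordinate is integral. The resulting vector $\vec{b^*}=(b_1^*,\ldots,b_T^*)$ consists of non-negative integers with $\sum_{t\in[T]}b_t^*=B$ and satisfies $\sum_{t\in[T]}OPT_t(b_t^*)=G(\vec{b^*})\ge G(\vec{d})=\sum_{t\in[T]}OPT_t(d_t)$, which is exactly the claim.

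The argument is essentially bookkeeping, and the only points requiring a line of care are the two invariants driving the induction: that a non-integer coordinate always admits a partner to exchange with (guaranteed by $B$ being an integer) and that each exchange preserves feasibility. I expect the main, and admittedly mild, obstacle to be making the interpolation defining $OPT_t$ explicit enough that $G$ is genuinely affine inside each integer cell; once that is pinned down, selecting the profitable transfer direction and halting at an integer boundary is immediate. Equivalently, one may phrase the whole proof as follows: $G$ is piecewise affine on the polytope $\{\vec{d}\ge 0:\sum_{t\in[T]} d_t=B\}$, each linearity cell is an integer box intersected with the hyperplane $\sum_{t\in[T]} d_t=B$ and hence has only integral vertices, and a piecewise-affine function attains a value at least $G(\vec{d})$ at one such vertex.
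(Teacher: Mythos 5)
Your proposal is correct and follows essentially the same route as the paper's proof: both exploit that the interpolated $OPT_t$ is affine between consecutive integers, pair up two fractional coordinates (which must exist because $B$ is an integer), shift mass linearly in the non-decreasing direction until one coordinate hits an integer boundary, and iterate until the allocation is integral. The closing reformulation via vertices of the linearity cells is a nice extra perspective but not a different argument.
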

\begin{proof}
Let $\vec{d}=(d_1,\ldots, d_T)$ be a vector of non-negative real numbers such that $\sum_{t\in [T]}d_t=B$. For any $t\in [T]$, we have that $OPT_t(d_t)=p_t\cdot OPT_t(\lceil d_t\rceil )+(1-p_t)\cdot OPT_t(\lfloor d_t\rfloor )$, for $p_t=(d_t-\lfloor d_t\rfloor)$. We say that a round $t$ is integral if $d_t$ is integer, while it is fractional if $d_t$ is not. We observe that $p_t=0$ if $t$ is integral and $p_t\in (0,1)$ otherwise. 

If all rounds are integral, we directly can set $\vec{b^*}=\vec{d^*}$ to obtain the claim. Henceforth, we assume that there exists at least a fractional round $t_1\in [T]$. We claim that there must necessarily be another fractional round $t_2\neq t_1$. Indeed, if all rounds $t\neq t_1$ were integral we would have that each $d_t$, with $t\neq t_1$, would be integer. As $\sum_{t\in [T]}d_t=B$, we have that $d_{t_1}=B-\sum_{t\neq t_1}d_t$. Thus, $d_{t_1}$ must be an integer, as it is defined by sums and differences of integer numbers. But this contradicts the assumption that $t_1$ is fractional. We conclude that that there exist at least two fractional rounds $t_1,t_2$. 

Assume w.l.o.g. that $0<d_{t_1}-\lfloor d_{t_1}\rfloor\leq \lceil d_{t_2}\rceil-d_{t_2}<1$ (otherwise, it is sufficient to permute indexes $t_1$ and $t_2$). For any $\beta\in[-(\lceil d_{t_2}\rceil-d_{t_2}), d_{t_1}-\lfloor d_{t_1}\rfloor]$, let $\vec{d}(\beta)$ be the vector obtained by moving an amount $\beta$ (resp. $-\beta$) from $d_{t_1}$ to $d_{t_2}$ if $\beta\geq 0$ (resp. from $d_{t_2}$ to $d_{t_1}$ if $\beta< 0$), i.e., $d_{t_1}(\beta)=d_{t_1}-\beta$, $d_{t_2}(\beta)=d_{t_2}+\beta$ and $d_{t}(\beta)=d_t$ for any $t\notin \{t_1,t_2\}$. Let $\Delta_{\beta}:=\beta (OPT_{t_2}(\lceil d_{t_2}\rceil)-OPT_{t_2}(\lfloor d_{t_2}\rfloor)-OPT_{t_1}(\lceil d_{t_1}\rceil)+OPT_{t_1}(\lfloor d_{t_1}\rfloor))$.
For any $\beta\in[-(\lceil d_{t_2}\rceil-d_{t_2}), d_{t_1}-\lfloor d_{t_1}\rfloor]$, we have 
\begin{align}
&OPT(\vec{d}(\beta))\nonumber\\
&=(d_{t_2}+\beta-\lfloor d_{t_2}+\beta\rfloor)OPT_{t_2}(\lceil d_{t_2}+\beta\rceil)+(\lceil d_{t_2}+\beta\rceil-(d_{t_2}+\beta))OPT_{t_2}(\lfloor d_{t_2}+\beta\rfloor)\nonumber\\
&\quad  + (d_{t_1}-\beta-\lfloor d_{t_1}-\beta\rfloor)OPT_{t_1}(\lceil d_{t_1}-\beta\rceil)+(\lceil d_{t_1}-\beta\rceil-(d_{t_1}-\beta))OPT_{t_1}(\lfloor d_{t_1}-\beta\rfloor)\nonumber\\
&\quad  +\sum_{t\notin\{t_1,t_2\}}OPT_t(d_t)\nonumber\\
&=(d_{t_2}+\beta-\lfloor d_{t_2}\rfloor)OPT_{t_2}(\lceil d_{t_2}\rceil)+(\lceil d_{t_2}\rceil-(d_{t_2}+\beta))OPT_{t_2}(\lfloor d_{t_2}\rfloor)\nonumber\\
&\quad  + (d_{t_1}-\beta-\lfloor d_{t_1}\rfloor)OPT_{t_1}(\lceil d_{t_1}\rceil)+(\lceil d_{t_1}\rceil-(d_{t_1}-\beta))OPT_{t_1}(\lfloor d_{t_1}\rfloor)\nonumber\\
&\quad  +\sum_{t\notin\{t_1,t_2\}}OPT_t(d_t)\label{lem2:ineq10}\\
&=\beta (OPT_{t_2}(\lceil d_{t_2}\rceil)-OPT_{t_2}(\lfloor d_{t_2}\rfloor)-OPT_{t_1}(\lceil d_{t_1}\rceil)+OPT_{t_1}(\lfloor d_{t_1}\rfloor))\nonumber\\
&\quad +(d_{t_2}-\lfloor d_{t_2}\rfloor)OPT_{t_2}(\lceil d_{t_2}\rceil )+(\lceil d_{t_2}\rceil-d_{t_2})OPT_{t_2}(\lfloor d_{t_2}\rfloor )\nonumber\\
&\quad +(d_{t_1}-\lfloor d_{t_1}\rfloor)OPT_{t_1}(\lceil d_{t_1}\rceil )+(\lceil d_{t_1}\rceil-d_{t_1})OPT_{t_1}(\lfloor d_{t_1}\rfloor )\nonumber\\
&\quad +\sum_{t\notin\{t_1,t_2\}}OPT_t(d_t)\nonumber\\
&=\Delta_{\beta}+OPT(\vec{d}),\label{lem2:ineq1}
\end{align}
where \eqref{lem2:ineq10} holds since, by rearranging the assumptions $\beta\in[-(\lceil d_{t_2}\rceil-d_{t_2}), d_{t_1}-\lfloor d_{t_1}\rfloor]$ and $0<d_{t_1}-\lfloor d_{t_1}\rfloor\leq \lceil d_{t_2}\rceil-d_{t_2}<1$, we have $\lfloor d_{t_2}\rfloor\leq d_{t_2}+\beta\leq \lceil d_{t_2}\rceil$ and $\lfloor d_{t_1}\rfloor\leq d_{t_1}-\beta\leq \lceil d_{t_1}\rceil$; furthermore, we trivially have $d_t(\beta)\geq 0$ for any $t\in [T]$ and $\sum_{t\in [T]}d_t(\beta)=B$.
Finally, since $\Delta_{\beta}=0$ for $\beta=0$, we have that the maximum value of $\Delta_{\beta}$ over $\beta\in [-(\lceil d_{t_2}\rceil-d_{t_2}), d_{t_1}-\lfloor d_{t_1}\rfloor]$ is non negative; furthermore, since $\Delta_{\beta}$ is linear in $\beta \in [-(\lceil d_{t_2}\rceil-d_{t_2}), d_{t_1}-\lfloor d_{t_1}\rfloor]$, its maximum value is achieved by some $\beta\in \{-(\lceil d_{t_2}\rceil-d_{t_2}), d_{t_1}-\lfloor d_{t_1}\rfloor\}$. We conclude that there exists $\beta\in \{-(\lceil d_{t_2}\rceil-d_{t_2}), d_{t_1}-\lfloor d_{t_1}\rfloor\}$ such that $\Delta_{\beta}\geq 0$, and by \eqref{lem2:ineq1} we get
\begin{equation}\label{lem2:ineq2}
OPT(\vec{d}(\beta))=\Delta_{\beta}+OPT(\vec{d})\geq OPT(\vec{d});
\end{equation}
furthermore, as $\beta\in \{-(\lceil d_{t_2}\rceil-d_{t_2}), d_{t_1}-\lfloor d_{t_1}\rfloor\}$, we have that at least one round $t\in \{t_1,t_2\}$ becomes integral in $\vec{d}(\beta)$, and the rounds that were previously integral in $\vec{d}$ continue to be integral in $\vec{d}(\beta)$. Then, the number of integral rounds of $\vec{d}(\beta)$ is higher than that related to $\vec{d}$. 

We conclude that, by iteratively applying the above process to $\vec{d}':=\vec{d}(\beta)$, after at most $T$ iterations we obtain a vector $\vec{b}^*$ satisfying the claim. 
\end{proof}
\begin{lemma}\label{lem3}
For any $t\in [T]$ and integer $b\geq 0$, we have $\left(1-\frac{1}{e}\right)OPT_t(b)\leq GR_t(b)$.
\end{lemma}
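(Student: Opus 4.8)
The plan is to reduce Lemma \ref{lem3} directly to the classical $(1-1/e)$-approximation guarantee for the adaptive-greedy algorithm on adaptively monotone and adaptively submodular objectives, established by \cite{golovin2011adaptive}. The lemma concerns a single round $t$ in isolation, and the objective $f_t$, the state distribution $\mathcal{P}_t$, and the cardinality bound $b$ together constitute exactly an instance of the adaptive stochastic maximization model studied there. The whole argument is therefore a careful invocation of their result rather than a new analysis.

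First I would unwind the definition of $GR_t(b)$. When $\delta=\xi=0$ the oracle calls inside ${\sf \SGR}_{\delta,\xi}(I,t,b)$ return exact conditional marginals, so at each step the policy selects an item $v$ maximizing $\Delta_t(v\mid\langle S_t\rangle)$ exactly; this is precisely the adaptive-greedy selection rule of \cite{golovin2011adaptive} run under the cardinality constraint $|S_t|\leq b$. Hence $GR_t(b)$ equals the value of their greedy policy on the single-round instance at round $t$. I would then check the hypotheses of their theorem: by the standing assumptions of Section \ref{subsec: properties}, for every fixed state $f_t$ is monotone, and $f_t$ is adaptive submodular in the sense of Definition \ref{def:ad_subm}; monotonicity for every fixed state yields adaptive monotonicity of the expected objective. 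These are exactly the two conditions required, so their theorem gives that the greedy value is at least $(1-1/e)$ times the optimal adaptive value among all policies selecting \emph{at most} $b$ items.

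It then remains to reconcile this ``at most $b$'' optimum with $OPT_t(b)$, which is defined as the optimal adaptive value when \emph{exactly} $b$ items are selected. If $b=0$ both sides of the claimed inequality vanish and there is nothing to prove, so I would assume $1\leq b\leq n$. Since $f_t(\cdot,\eta_t)$ is monotone for every fixed state, appending further items never decreases the objective; consequently any optimal ``at most $b$'' policy may be taken to select exactly $b$ items without loss of value. Thus the ``at most $b$'' optimum coincides with $OPT_t(b)$, and combining this with the previous paragraph yields $(1-1/e)\,OPT_t(b)\leq GR_t(b)$, as claimed.

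The main obstacle is not a genuinely new argument but making the reduction airtight on two bookkeeping points: confirming that the single-round restriction of SBMSm is faithfully an instance of the Golovin--Krause model (in particular that their observation and partial-state formalism matches the notion of $\langle S_t\rangle$ used here), and justifying the equivalence between the exactly-$b$ optimum defining $OPT_t(b)$ and the at-most-$b$ optimum appearing in their statement, which hinges entirely on monotonicity. Everything beyond these two points is a direct application of the cited $(1-1/e)$ guarantee.
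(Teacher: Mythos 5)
Your proposal is correct and takes essentially the same route as the paper: the paper's proof of Lemma~\ref{lem3} is likewise a direct invocation of the $(1-1/e)$ guarantee of \cite{golovin2011adaptive} for the adaptive greedy policy with exact oracles (with a footnote noting that a self-contained derivation follows from the proof of Lemma~\ref{lem5} by setting $\delta=\xi=0$). Your extra bookkeeping on reconciling the exactly-$b$ optimum $OPT_t(b)$ with the at-most-$b$ optimum is sound --- indeed it is even simpler than you state, since exactly-$b$ policies form a subset of at-most-$b$ policies, so $OPT_t(b)$ is trivially bounded above by the quantity the cited theorem compares against.
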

\begin{proof}[Proof Sketch]
Fix $t\in [T]$ and $b\geq 0$. The greedy single-round policy, applied to budget $b$ and with exact oracles, is equivalent to the adaptive greedy algorithm of \cite{golovin2011adaptive} that greedily selects $b$ items in the single-round instance associated with round $t$. They showed that, under adaptive submodularity, their algorithm achieves a $(1-1/e)$-approximation to the optimum. Thus, by turning their result in our setting, we obtain the claim.\footnote{Despite the analysis of the adaptive greedy algorithm (that is equivalent to the greedy single-round policy) has been already provided by \cite{golovin2011adaptive}, it can be easily derived from the proof of Lemma \ref{lem5} in Appendix \ref{appendix:thm1}. Indeed, in the proof of Lemma \ref{lem5} we provide the analysis of the greedy single-round policy for the case of approximated oracles (i.e., $\delta,\xi>0$), and the simpler analysis for the case of approximated oracles can be obtained by setting $\delta=\xi=0$ in its proof.}
\end{proof}
\begin{lemma}\label{lem4}
For any vector $\vec{b}'$ of non-negative integers such that $\sum_{t\in [t]}b_t'=B$, we have that 
$GR(\vec{b}')\leq GR(\vec{b})$, where $\vec{b}$ is the vector returned by ${\sf BudgetGr}_{0,0}$.
\end{lemma}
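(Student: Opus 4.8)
The plan is to reduce the statement to the classical fact that a greedy procedure maximizes a separable objective with non-increasing (concave) marginals subject to a cardinality budget. The first step is to rewrite the two quantities being compared purely in terms of the marginal increments $\Delta_{t,i}$. Telescoping the single-round value, the greedy single-round policy with exact oracles satisfies $GR_t(b)=\mathbb{E}[f_t(\emptyset,\bm\eta_t)]+\sum_{i=1}^{b}\Delta_{t,i}$; since the constant terms $\mathbb{E}[f_t(\emptyset,\bm\eta_t)]$ appear identically in $GR(\vec b')=\sum_{t\in[T]} GR_t(b'_t)$ and in $GR(\vec b)=\sum_{t\in[T]} GR_t(b_t)$ over all $T$ rounds (and we may anyway assume the $f_t$ are normalized with $f_t(\emptyset,\cdot)=0$), they cancel in the comparison, so it suffices to compare $\sum_{t\in[T]}\sum_{i=1}^{b'_t}\Delta_{t,i}$ with $\sum_{t\in[T]}\sum_{i=1}^{b_t}\Delta_{t,i}$.

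Next I would argue that with exact oracles the greedy marginals $\Delta_{t,i}$ are non-increasing in $i$, so that the prefix-minimum computed by ${\sf BudgetGr}_{0,0}$ leaves them unchanged, i.e.\ $\overline{\Delta}_{t,i}=\Delta_{t,i}$. This is exactly the property asserted in the comment inside Algorithm~\ref{BudgetGr_alg}: fixing a round $t$, if $v_i$ is the item greedily chosen at step $i$ given the observed partial state $\langle S_{i-1}\rangle$, then the step-$(i+1)$ choice $v_{i+1}$ satisfies $\Delta_t(v_{i+1}\mid\langle S_{i-1}\rangle)\le\Delta_t(v_i\mid\langle S_{i-1}\rangle)$ by greedy maximality, while adaptive submodularity gives $\Delta_t(v_{i+1}\mid\langle S_i\rangle)\le\Delta_t(v_{i+1}\mid\langle S_{i-1}\rangle)$; taking expectations over the realizations yields $\Delta_{t,i+1}\le\Delta_{t,i}$. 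Consequently $GR(\vec b')=\sum_{t\in[T]}\sum_{i=1}^{b'_t}\overline{\Delta}_{t,i}$ and likewise for $\vec b$.

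With this reformulation, I would view the allocation problem as choosing $B$ ``slots'' $(t,i)$ of values $\overline{\Delta}_{t,i}$ from $T$ chains, where each chain $t$ is the non-increasing sequence $\overline{\Delta}_{t,1}\ge\overline{\Delta}_{t,2}\ge\cdots$, and a feasible allocation $\vec b'$ corresponds to taking the length-$b'_t$ prefix of chain $t$, for a total of $B$ slots. The procedure ${\sf BudgetGr}_{0,0}$ is precisely the merge that repeatedly extends the chain whose next slot $\overline{\Delta}_{t,b_t+1}$ is largest. Because each chain is non-increasing, the $B$ slots it selects are exactly the $B$ largest slots over all chains (the non-increasing property guarantees that the top-$B$ slots automatically form prefixes, so the prefix constraint is never binding). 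Since every feasible $\vec b'$ selects some set of $B$ distinct prefix-slots, its value $\sum_{t\in[T]}\sum_{i\le b'_t}\overline{\Delta}_{t,i}$ is at most the sum of the $B$ largest slot values, which is achieved by $\vec b$; hence $GR(\vec b')\le GR(\vec b)$. Equivalently, a one-line exchange argument closes the proof: if an optimal allocation differs from $\vec b$, some round has $b'_t>b_t$ and another has $b'_s<b_s$, and moving a unit from $t$ to $s$ changes the value by $\overline{\Delta}_{s,b'_s+1}-\overline{\Delta}_{t,b'_t}\ge 0$, because slot $(s,b'_s+1)$ is selected by greedy whereas $(t,b'_t)$ is not.

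I expect the main obstacle to be the second paragraph: rigorously establishing $\Delta_{t,i}=\overline{\Delta}_{t,i}$, i.e.\ that the greedy single-round marginals are genuinely non-increasing once oracles are exact. The delicacy is that the item $v_{i+1}$ chosen at step $i+1$ is itself random and depends on the partial state observed through step $i$, so the inequality $\Delta_{t,i+1}\le\Delta_{t,i}$ must be obtained after averaging over realizations and must correctly couple the greedy maximality at step $i$ with the adaptive-submodular decay of the marginal of the data-dependent item $v_{i+1}$. Once this monotonicity is in hand, the remaining combinatorial optimality of greedy for separable concave maximization is entirely standard.
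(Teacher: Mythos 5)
Your proposal is correct and follows essentially the same route as the paper: write $GR_t(b)$ as a telescoping sum of the marginals $\Delta_{t,i}$, prove $\Delta_{t,i}$ is non-increasing in $i$ by combining the greedy maximality at step $i$ with adaptive submodularity applied to the data-dependent item chosen at step $i+1$ (this is exactly the paper's chain of inequalities), and then conclude by the standard exchange/top-$B$ argument for greedy allocation over chains with non-increasing marginals. Your explicit handling of the $\mathbb{E}[f_t(\emptyset,\bm\eta_t)]$ constants is a minor refinement the paper glosses over, but it does not change the argument.
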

\begin{proof}
for any $t\in [T]$ and $i\in [n]$, let $\Delta_{t,i}$ denote the expected increment gained from the $i$-th item (in order of selection), when applying the greedy single-round policy at round $t$ that selects $n$ items under exact oracles (i.e., with $\delta=\xi=0$). Let $\bm S_{t,i}$ denote the random set of the first $i$ items selected at round $t$ by the greedy single-round policy, and let $\bm v_{t,i}$ the $i$-th random item selected at round $t$ (that is, the item $v$ maximizing $\Delta_t(v|\langle \bm S_{t,{i-1}}\rangle)$). For any integers $i,j\in [n]$ with $i\geq j$, we have
\begin{align}
\Delta_{t,j}&=\mathbb{E}_{\langle \bm S_{t,{j-1}}\rangle}\left[\max_{v\in V}\Delta_t(v|\langle \bm S_{t,{j-1}}\rangle)\right]\nonumber\\
&=\mathbb{E}_{\langle \bm S_{t,{i-1}}\rangle}\left[\max_{v\in V}\Delta_t(v|\langle \bm S_{t,{j-1}}\rangle)\right]\nonumber\\
&\geq \mathbb{E}_{\langle \bm S_{t,{i-1}}\rangle}\left[\mathbb{E}_{\bm v_{t,i}}\left[\Delta_t(\bm v_{t,i}|\langle \bm S_{t,{j-1}}\rangle)\right]\right]\label{lem4:ineq1}\\
&\geq \mathbb{E}_{\langle \bm S_{t,{i-1}}\rangle}\left[\mathbb{E}_{\bm v_{t,i}}\left[\Delta_t(\bm v_{t,i}|\langle \bm S_{t,{i-1}}\rangle)\right]\right]\label{lem4:ineq2}\\
&=\mathbb{E}_{\langle \bm S_{t,{i-1}}\rangle}\left[\max_{v\in V}\Delta_t(v|\langle \bm S_{t,{i-1}}\rangle)\right]\nonumber\\
&=\Delta_{t,i},\label{lem4:ineq3}
\end{align}
where \eqref{lem4:ineq1} holds by the greedy choice of the greedy single-round policy and \eqref{lem4:ineq2} holds by the adaptive submodularity (as $\langle \bm S_{t,j-1}\rangle\prec \langle \bm S_{t,i-1}\rangle$). Thus, 
we have that, for any $t\in [T]$, $\Delta_{t,i}$ is non-increasing in $i\in [n]$. 

Now, fix a vector $\vec{b}'$ of non-negative integers such that $\sum_{t\in [t]}b_t'=B$. As $GR_t(b)$ can be written as $\sum_{i\in [b]}\Delta_{t,i}$ for any integer $b\geq 0$, we have that
\begin{equation}\label{lem4:ineq4}
GR(\vec{b}')=\sum_{t\in [t]}GR_t(b_t')=\sum_{t\in [T]}\sum_{i\in [b'_t]}\Delta_{t,i}
\end{equation}
and 
\begin{equation}\label{lem4:ineq5}
GR(\vec{b})=\sum_{t\in [t]}GR_t(b_t)=\sum_{t\in [T]}\sum_{i\in [b_t]}\Delta_{t,i}.
\end{equation}
By exploiting the greedy construction of $\vec{b}$, and the fact that each $\Delta_{t,i}$ is non-increasing in $i$, we can show that
\begin{equation}\label{lem4:ineq6}
\sum_{t\in [T]}\sum_{i\in [b'_t]}\Delta_{t,i}\leq \sum_{t\in [T]}\sum_{i\in [b_t]}\Delta_{t,i}.
\end{equation}
Indeed, assume that $\vec{b'}\neq \vec{b}$ (otherwise, inequality \eqref{lem4:ineq6} trivially holds as equality), and let $t_1,t_2\in T$ be two distinct rounds such that $b_{t_1}>b'_{t_1}$ and $b_{t_2}<b'_{t_2}$. Consider the iteration of ${\sf BudgetGr}_{0,0}$ in which $b_{t_1}'$ units of budget have been already assigned to round $t_1$ and such round receives a further unit of budget, and let $i$ be the budget already assigned to round $t_2$ before that iteration. We have $\Delta_{t_1,b_{t_1}'+1}\geq \Delta_{t_2,i+1}\geq \Delta_{t_2,b_{t_2}+1}\geq \Delta_{t_2,b_{t_2}'}$, where the first inequality follows by the greedy choice of ${\sf BudgetGr}_{0,0}$ and the remaining ones from the non-increasing monotonicity of each $\Delta_{t_2,j}$ w.r.t. to $j$. Thus, if move one unit of budget from round $t_2$ to round $t_1$ in vector $\vec{b'}$ we obtain a vector $\vec{b''}$ such that $\sum_{t\in [T]}\sum_{i\in [b'_t]}\Delta_{t,i}\leq \sum_{t\in [T]}\sum_{i\in [b''_t]}\Delta_{t,i}$. By iterating the above process at most $B$ times, we reach a vector equal to $\vec{b}$, thus obtaining a chain of inequalities leading to \eqref{lem4:ineq6}.

To conclude the proof, we combine inequalities \eqref{lem4:ineq4}, \eqref{lem4:ineq5} and \eqref{lem4:ineq6} and we obtain the claim. 
\end{proof}

\section{Budget-adaptivity Gap}

In this section, we introduce and quantify the budget-adaptivity gap, to measure how much a fully adaptive optimal policy is better than an optimal partially adaptive policy that, as in the greedy multi-round policy considered above (see Section \ref{subsec:multi}), allocates the budget in advance.
\begin{definition}\label{def:bud_ad_gp}
Given an instance $I$ of SBMSm, the {\em budget-adaptivity gap} of $I$ is defined as $BGap(I):=\sigma(\pi^*)/\sigma(\pi)$, where $\pi^*$ is an optimal adaptive policy for $I$,  and $\pi$ is a policy that is optimal among all partially adaptive policies in which we first choose the budget $b_t$ to be spent for each round $t$ (i.e., non-adaptively) and then we adaptively select $b_t$ items at each round $t$. Furthermore, the budget-adaptivity gap of  SBMSm is defined as $BGap:=\sup_{I\in \text{SBMSm}}BGap(I)$. 
\end{definition}
In the following theorem, we provide a constant upper bound on the adpativity gap, thus showing that the fully adaptive optimum is, in general, strictly higher than the partially adaptive one, but it can still be approximated well by resorting to partial adaptivity.
\begin{theorem}\label{thmgapupp}
The budget-adaptivity gap of SBMSm is at most $2$.
\end{theorem}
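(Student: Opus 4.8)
## Proof Proposal for Theorem~\ref{thmgapupp}

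\textbf{Overall approach.} The plan is to bound the fully adaptive optimum $\sigma(\pi^*)$ by twice the value of some partially adaptive policy that fixes its per-round budget in advance, since $\sigma(\pi)$ (the best such policy) is at least as large as any specific one we construct. The natural candidate budget vector is the one induced by the expected behaviour of $\pi^*$: for each round $t$, let $d_t^*$ denote the expected number of items that $\pi^*$ selects at round $t$. The key structural facts I would reuse are already packaged in the technical lemmas for Theorem~\ref{thm1}: inequality~\eqref{thm1:eq1}, which says $\sigma(\pi^*)\leq\sum_{t\in[T]}\overline{OPT}_t(d_t^*)$, and Lemma~\ref{lem1}, which says $\overline{OPT}_t(d)\leq 2\cdot OPT_t(d)$ for every round $t$ and every real $d\geq 0$.

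\textbf{Key steps, in order.} First I would record that $\pi^*$, having expected per-round consumption $d_t^*$, cannot do better at round $t$ than the best single-round policy selecting $d_t^*$ items in expectation; this is exactly~\eqref{thm1:eq1}, and it uses that states and objectives across rounds are independent. Second, I would apply Lemma~\ref{lem1} termwise to obtain
\begin{equation}
\sigma(\pi^*)\;\leq\;\sum_{t\in[T]}\overline{OPT}_t(d_t^*)\;\leq\;2\sum_{t\in[T]}OPT_t(d_t^*).
\end{equation}
Third, since $\sum_{t\in[T]}d_t^*$ equals the expected total budget used by $\pi^*$, which is at most $B$, I would invoke Lemma~\ref{lem2} (the derandomization/exchange argument) to replace the fractional vector $\vec{d}^*$ by an integral budget vector $\vec{b}^*$ of non-negative integers with $\sum_t b_t^*=B$ and $\sum_t OPT_t(d_t^*)\leq\sum_t OPT_t(b_t^*)$. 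Finally, I would observe that the policy which non-adaptively commits to budget $b_t^*$ at round $t$ and then runs an \emph{optimal} adaptive single-round selection within each round is itself a partially adaptive policy whose value is exactly $\sum_t OPT_t(b_t^*)$; hence $\sigma(\pi)\geq\sum_t OPT_t(b_t^*)$, and chaining the inequalities gives $\sigma(\pi^*)\leq 2\sigma(\pi)$, i.e.\ $BGap(I)\leq 2$ for every instance $I$, so $BGap\leq 2$.

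\textbf{Main obstacle.} The steps are individually clean because the heavy lifting is delegated to Lemmas~\ref{lem1} and~\ref{lem2}. The one point requiring care is the handling of the expected budget constraint in the third step: $\pi^*$ might use fewer than $B$ items on some realizations, so $\sum_t d_t^*\leq B$ rather than equality, and Lemma~\ref{lem2} is stated for vectors summing to exactly $B$. I would resolve this by noting that each $OPT_t$ is non-decreasing (by monotonicity of $f_t$), so padding $\vec{d}^*$ up to total $B$ only increases the right-hand side; equivalently, one applies Lemma~\ref{lem2} to any completion of $\vec{d}^*$ to sum $B$. A second subtlety is confirming that the constructed commit-then-optimize policy is admissible under the model's no-backtracking constraint: it spends at most $\sum_t b_t^*=B$ total budget and never revisits an abandoned round, so it is a legitimate partially adaptive policy and $\pi$ dominates it by definition. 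With these two points addressed, the bound follows immediately.
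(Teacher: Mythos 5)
Your proposal is correct and follows essentially the same route as the paper: the paper's proof of Theorem~\ref{thmgapupp} likewise chains inequality~\eqref{thm1:eq3} (i.e.\ \eqref{thm1:eq1} plus Lemma~\ref{lem1}) with the derandomization of Lemma~\ref{lem2} to obtain $\frac{1}{2}\cdot OPT\leq\sum_{t\in[T]}OPT_t(b_t^*)=OPT(\vec{b}^*)\leq OPT_P$. Your extra remark about padding $\vec{d}^*$ up to total $B$ (since $\pi^*$ may not exhaust the budget) is a valid point of care that the paper passes over implicitly, and it is handled correctly via the monotonicity of $OPT_t$.
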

\begin{proof}
Let $I$ be an arbitrary input instance of SBMSm and ${\sf \MGR}_{0,0}(I)$ denote the multi-round greedy policy defined in Section \ref{subsec:multi}, assuming that the probabilistic oracles ({\sf Oracle1} and {\sf Oracle2} of Section \ref{sec:sampling}) work with $\delta=\xi=0$, that is, they are able to return with probability $1$ the exact expected increment of the objective functions. Since the proof sketch of Theorem \ref{thm1} given above holds for $\delta=\xi=0$, we can reuse some of its parts to show this theorem. Let $OPT$ denote the optimal fully adaptive value, $OPT_t(b)$ denote the optimal single-round policy at round $t$ with budget $b\geq 0$, $OPT(\vec{b})$ denote the optimal partially adaptive value if the budget is assigned according to a vector $\vec{b}$, and $OPT_P$ denote the optimal value achievable under partially adaptive policies. 
By combining Parts 1 and 2 of the proof sketch of Theorem \ref{thm1}, we have that there exists a vector $\vec{b}^*$ of non-negative integers with $\sum_{t\in [T]}b_t^*=B$ such that 
\begin{equation}\label{thmgapupp:ineq1}
\frac{1}{2}\cdot OPT\leq \sum_{t\in [T]}OPT_t(b_t^*)=OPT(\vec{b}^*)\leq OPT_P
\end{equation}
where the first inequality is obtained by combining \eqref{thm1:eq3} and \eqref{thm1:eq4}, the first equality holds since the optimal policy partially adaptive policy constrained by the budget assignment $\vec{b}^*$ is achieved by optimizing the adaptive selection at each round with the available budget, and the last inequality holds by definition of optimal partially adaptive policies. By \eqref{thmgapupp:ineq1}, the claim follows. 
\end{proof}
In the following theorem, we provide a lower bound that almost matches the above result. 
\begin{theorem}\label{thmgaplow}
The budget-adaptivity gap of SBMSm is at least $e/(e-1)\approx 1.582$.
\end{theorem}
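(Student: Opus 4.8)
The plan is to build a parametric family of instances in which the fully adaptive optimum can, by spending a \emph{random} amount of budget per round, win essentially every round, whereas any policy that fixes the per-round budget in advance suffers a ``coverage'' loss of a $(1-1/e)$ factor. Concretely, fix $\epsilon>0$ and consider an instance with $T$ rounds, $n:=\lceil 2\ln T/p\rceil$ items per round, activation probability $p\to 0$, and budget $B:=\lceil (1+\epsilon)T/p\rceil$; at each round $t$ every item is independently \emph{active} with probability $p$, and $f_t(S_t,\eta_t):=\mathbf 1[S_t\text{ contains an active item}]$. This is a round-homogeneous instance of binary submodular stochastic probing, so $f_t$ is monotone and, being a coverage function with independent items, adaptive submodular; one checks strong adaptive submodularity directly, since after any partial state the conditional expected increment of $k$ further probes is at most $1-(1-p)^k$, which equals the value $\sigma(\pi)$ of the optimal $k$-item single-round policy. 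Polynomial boundedness holds with $\Lambda=1$ and $\lambda=p$ (so $1/\lambda=1/p\le n$), and $B<nT$ holds for large $T$.

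First I would lower bound $\sigma(\pi^*)$ by analyzing the natural policy that, in each round, probes fresh items until it finds an active one (collecting value $1$) and then advances, never returning. Winning round $j$ costs a $\mathrm{Geometric}(p)$ number of probes, capped at $n$, and the cap fails with probability $(1-p)^n\le T^{-2}$, negligible after a union bound over the $T$ rounds. The total cost to win all $T$ rounds is a sum of $T$ i.i.d.\ geometrics with mean $T/p$; since $B=(1+\epsilon)T/p$ exceeds the mean by a constant factor, a Chernoff bound for sums of geometric variables gives that the policy wins all $T$ rounds within budget with probability $1-o_T(1)$. As each round contributes value at most $1$, this yields $T(1-o_T(1))\le \sigma(\pi^*)\le T$.

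Then I would upper bound the best partially adaptive value $OPT_P$. For a fixed budget vector $\vec b=(b_1,\dots,b_T)$, the value of round $t$ is exactly $1-(1-p)^{b_t}$, because all items are exchangeable and the objective is capped at $1$, so the adaptive ordering within the round is irrelevant; hence the policy's value is $\sum_{t\in[T]}\bigl(1-(1-p)^{b_t}\bigr)$. Since $b\mapsto 1-(1-p)^{b}$ is concave, Jensen's inequality under the constraint $\sum_t b_t=B$ gives $OPT_P\le T\bigl(1-(1-p)^{B/T}\bigr)\to T\bigl(1-e^{-(1+\epsilon)}\bigr)$ as $p\to 0$. Combining the two bounds,
\begin{equation*}
BGap\ge \frac{\sigma(\pi^*)}{OPT_P}\ge \frac{1-o_T(1)}{1-(1-p)^{B/T}}\xrightarrow[\ p\to0,\ T\to\infty\ ]{}\frac{1}{1-e^{-(1+\epsilon)}},
\end{equation*}
and letting $\epsilon\to 0$ yields $BGap\ge e/(e-1)$.

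The main obstacle I expect is the concentration step for the fully adaptive policy: at the critical scaling $B\approx T/p$ the sum of per-round geometric costs has relative fluctuations of order $1/\sqrt T$, so the ``win every round'' guarantee only materializes once $T\to\infty$ (and with the slack factor $1+\epsilon$), which forces the order of limits $p\to0$ before $T\to\infty$ before $\epsilon\to0$ to be handled with care. A secondary but necessary technical point is the explicit verification that this coverage instance satisfies strong adaptive submodularity (Definition~\ref{def:strong_ad_subm}), which the theorem's framework requires.
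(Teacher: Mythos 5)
Your construction is essentially the one in the paper: a binary stochastic probing instance whose per-round objective is the indicator of having probed an active item, with the fully adaptive ``probe until success, then advance'' policy pitted against a uniform per-round budget allocation, yielding the ratio $1/(1-1/e)$ in the limit. Two points of execution differ, and both are worth noting because your version is the more airtight. First, the paper places the budget exactly at the critical threshold ($n=B=T^{3/2}$, $p=1/\sqrt T$, so $Bp=T$) and applies Hoeffding with deviation $t=\sqrt T$; since the exponent $2t^2/B=2/\sqrt T$ tends to $0$, the resulting probability bound $1-e^{-2/\sqrt T}$ tends to $0$ rather than to $1$, so the displayed lower bound on $\sigma(\pi')$ is only of order $\sqrt T$ and the final ratio does not converge to $e/(e-1)$ as claimed; the argument is repairable by taking a deviation $t$ with $T^{3/4}\ll t\ll T$ or by using a variance-sensitive tail bound, but as written it has a quantitative gap. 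Your choice $B=\lceil(1+\epsilon)T/p\rceil$, a constant factor above the expected cost of winning every round, turns the concentration step into a routine Chernoff bound for a sum of geometrics (equivalently, a binomial with mean $(1+\epsilon)T$ falling below $T$) and sidesteps this issue entirely, at the price of one additional limit $\epsilon\to0$. Second, since the budget-adaptivity gap is defined against the \emph{optimal} partially adaptive policy, one must rule out non-uniform budget vectors; your concavity-plus-Jensen step (the map $b\mapsto 1-(1-p)^{b}$ is concave, hence $\sum_{t}(1-(1-p)^{b_t})\le T(1-(1-p)^{B/T})$) supplies exactly the justification that the paper leaves implicit when it evaluates only the uniform allocation. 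Your explicit verification of strong adaptive submodularity is not strictly necessary, since the paper's appendix establishes it for all MR-BSS instances, but it is correct and does no harm.
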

\begin{proof}

We equivalently show that, for any $\epsilon>0$, there exists an instance $I(\epsilon)$ of SBMSm parametrized by $\epsilon$ such that $BGap(I(\epsilon))> e/(e-1)-\epsilon$; then, by the arbitrariness of $\epsilon$ we obtain $BGap\geq e/(e-1)$.

For a fixed $\epsilon>0$, consider an instance $I(\epsilon)$ of the multi-round binary stochastic probing problem (see Section \ref{sec:appl}) defined as follows: we have $T$ rounds and a set $V$ of $n:=T^{3/2}$ items, where $T:=T(\epsilon)$ is a sufficiently large perfect square that will be fixed later; the budget is $B:=n=T^{3/2}$; each item $v\in V$ is active with probability $p:=1/\sqrt{T}$, for each round $t\in [T]$; the objective function $g_t$ of each round $t$ is equal to $1$  if the number of active items is at least $1$, and $0$ otherwise.

In the following, we will provide a lower bound on $BGap(I(\epsilon))$, by comparing the value of an optimal partially adaptive policy $\pi$ and a specific fully adaptive policy $\pi'$ defined below. If we show that $\sigma(\pi')/\sigma(\pi)> \frac{e}{e-1}-\epsilon$, this implies that $BGap(I(\epsilon))\geq \sigma(\pi')/\sigma(\pi)> \frac{e}{e-1}-\epsilon$, that is, the desired lower bound.

Let $\pi$ be the partially adaptive policy that non-adaptively selects the first $B/T=\sqrt{T}$ items at each round. Let $\pi'$ be the fully adaptive policy that, at each round, repedeatly select all items until finding an active one, and then move to the subsequent round; the execution ends if all the budget has been spent or if there are no more rounds available. We have that
\begin{align}
\sigma(\pi)&=\sum_{t\in [T]}\mathbb{P}[\text{$\pi$ has selected at least an active item at round $t$}]\nonumber\\
&=\sum_{t\in [T]}\left(1-\mathbb{P}[\text{$\pi$ has selected no active item at round $t$}]\right)\nonumber\\
&=\sum_{t\in [T]}\left(1-\prod_{k=1}^{B/T}\mathbb{P}[\text{the $k$-th item selected at round $t$ by $\pi$ is not active}]\right)\label{thmgaplow_eq1}\\
&=\sum_{t\in [T]}\left(1-\prod_{k=1}^{\sqrt{T}}(1-p)\right)= T\cdot \left(1-\left(1-\frac{1}{\sqrt{T}}\right)^{\sqrt{T}}\right),\label{thmgaplow_eq2}
\end{align}
where \eqref{thmgaplow_eq1} holds by the independence of the items activation process.

Now, we determine a lower bound on $\sigma(\pi')$. For any $k\in [B]$, let $\bm X_k$ be an independent Bernoulli variable with parameter $p$. We observe that $\min\{T,\sum_{k\in [B]}\bm X_k\}$ is distributed as the value of policy $\pi'$. Indeed, $\pi'$ is equivalent, in terms of total revenue, to a single-round policy that repeatedly selects $B$ items, each one independently active with probability $p=1/\sqrt{T}$, counts the number of selected active items and returns the maximum between such number and $T$. 

For $t:=\sqrt{T}$, we have
\begin{align}
\sigma(\pi')&\geq \mathbb{P}[\text{the value of $\pi'$ is at least $T-t$}]\cdot (T-t)=\mathbb{P}[\text{$\pi'$ selects at least $T-t$ active items}]\cdot (T-t)\nonumber\\
&=\mathbb{P}\left[\min\left\{T,\sum_{k\in [B]}\bm X_k\right\}\geq T-t\right]\cdot (T-t)\label{thmgaplow_eq3}\\
&=\mathbb{P}\left[\sum_{k\in [B]}\bm X_k\geq T-t\right]\cdot (T-t)=\mathbb{P}\left[\sum_{k\in [B]}\bm X_k-\mathbb{E}\left[\sum_{k\in [B]}\bm X_k\right]\geq -t\right]\cdot (T-t)\label{thmgaplow_eq4}\\
&\geq \left(1-\frac{1}{e^{2t^2/B}}\right)\cdot (T-t)\label{thmgaplow_eq5}\\
&= \left(1-\frac{1}{e^{2/\sqrt{T}}}\right)\cdot (T-\sqrt{T}),\label{thmgaplow_eq6}
\end{align}
where \eqref{thmgaplow_eq3} holds by the above observations, \eqref{thmgaplow_eq4} follows from $\mathbb{E}\left[\sum_{k\in [B]}\bm X_k\right]=Bp=T^{3/2}\cdot T^{-1/2}=T$, \eqref{thmgaplow_eq5} holds by the Hoeffding's inequality \cite{hoeffding1963probability} and \eqref{thmgaplow_eq6} follows from $t=\sqrt{T}$ and $B=T^{3/2}$.
By combining \eqref{thmgaplow_eq2} and \eqref{thmgaplow_eq6} we obtain
\begin{equation}\label{thmgaplow_eq7}
\frac{\sigma(\pi')}{\sigma(\pi)}\geq \frac{\left(1-\frac{1}{e^{2/\sqrt{T}}}\right)\cdot (T-\sqrt{T})}{ \left(1-\left(1-\frac{1}{\sqrt{T}}\right)^{\sqrt{T}}\right)\cdot T}.
\end{equation}
As the right-hand part of \eqref{thmgaplow_eq7} tends to $\frac{1}{1-\frac{1}{e}}=\frac{e}{e-1}$ for $T$ tending to $\infty$, by taking a sufficiently large perfect square $T:=T(\epsilon)$ in instance $I(\epsilon)$ we obtain
\begin{equation*}
\frac{\sigma(\pi')}{\sigma(\pi)}\geq \frac{\left(1-\frac{1}{e^{2/\sqrt{T}}}\right)\cdot (T-\sqrt{T})}{ \left(1-\left(1-\frac{1}{\sqrt{T}}\right)^{\sqrt{T}}\right)\cdot T}> \frac{e}{e-1}-\epsilon.
\end{equation*} By the initial observations, the above inequality implies that the budget adaptivity gap of $I(\epsilon)$ is higher than $\frac{e}{e-1}-\epsilon$, and by the arbitrariness of $\epsilon$ the claim follows. 
\end{proof}
%
%

\section{Conclusions}
In this paper we presented the first results for the problem of adaptively optimizing the value of a stochastic submodular function over multiple rounds with total budget constraints, when the
state distributions are known. Our analysis aims to provide theoretical guarantees on the approximation
that can be achieved in polynomial time.
We proposed two algorithms: one based on a dynamic programming approach, that is able to compute the optimal multi-round adaptive policy in polynomial time whenever a single-round optimization problem associated with each round can be efficiently solved (e.g., if the number of items is constant and the number of random realizations is polynomial), and a simpler greedy algorithm that, despite being only partially adaptive, still returns a $1/2(1-1/e-\epsilon)$-approximation of the optimum. Furthermore, by introducing the budget-adaptivity gap, we measured how much is lost, in terms of value obtained, if we distribute the budget over all rounds in advance, i.e. in a non-adaptive way. 

A first open problem left by our work is that of closing the gap on the approximation achievable in polynomial time, that is at least $1/2(1-1/e-\epsilon)$ by our results and at most $1-1/e$ because of the hardness result of \cite{Feige98} (holding for simpler non-adaptive single-round maximization problems). Furthermore, another open problem is that of finding tight bounds on the budget-adaptivity gap, which we have shown to be between $e/(e-1)$ and $2$. 

An interesting open question is related to the efficiency of the above greedy algorithm in terms of run-time.  Indeed, even if the proposed approximation algorithm we provided is essentially simple and it is proved to run in polynomial time, it involve some routines, such as Monte Carlo simulations, that are known to limit their practical application. Anyway, we believe that, as occurred
for many similar problems, our algorithm
may foster further research towards more practical approaches, that, e.g., replace the more expensive routines
with heuristics. Not only, but our algorithm will also serve as a benchmark for the evaluation of these heuristics.

A further research direction is that of combining multiple rounds, budget and adaptivity in other classical optimization problems, as we did in submodular maximization problems, and then study computational complexity, approximation and budget-adaptivity gap. 



Finally, we recall that the algorithms designed in this work assume that state distributions are known in advance, even if their realization is unknown. It would be undoubtedly interesting to explore the existence of algorithms with good approximation guarantees even if the state distributions must be learnt or they are provided by a some learning framework (without any precision guarantee).

\newpage
\bibliographystyle{plain}
\bibliography{aaai24}

\begin{thebibliography}{10}

\bibitem{OLADEJI2022100897}
Security constrained optimal placement of renewable energy sources distributed
  generation for modern grid operations.
\newblock {\em Sustainable Energy, Grids and Networks}, 32:100897, 2022.

\bibitem{adamczyk2017sequential}
Marek Adamczyk, Allan Borodin, Diodato Ferraioli, Bart~De Keijzer, and Stefano
  Leonardi.
\newblock Sequential posted-price mechanisms with correlated valuations.
\newblock {\em ACM Transactions on Economics and Computation (TEAC)},
  5(4):1--39, 2017.

\bibitem{AmorusoAACFR20}
Marco Amoruso, Daniele Anello, Vincenzo Auletta, Raffaele Cerulli, Diodato
  Ferraioli, and Andrea Raiconi.
\newblock Contrasting the spread of misinformation in online social networks.
\newblock {\em J. Artif. Intell. Res.}, 69:847--879, 2020.

\bibitem{asadpour2016maximizing}
Arash Asadpour and Hamid Nazerzadeh.
\newblock Maximizing stochastic monotone submodular functions.
\newblock {\em Management Science}, 62(8):2374--2391, 2016.

\bibitem{asadpour2008stochastic}
Arash Asadpour, Hamid Nazerzadeh, and Amin Saberi.
\newblock Stochastic submodular maximization.
\newblock In {\em WINE}, pages 477--489, 2008.

\bibitem{AuerCF02}
Peter Auer, Nicol{\`{o}} Cesa{-}Bianchi, and Paul Fischer.
\newblock Finite-time analysis of the multiarmed bandit problem.
\newblock {\em Mach. Learn.}, 47(2-3):235--256, 2002.

\bibitem{auletta2022augmented}
Vincenzo Auletta, Diodato Ferraioli, and Cosimo Vinci.
\newblock On augmented stochastic submodular optimization: Adaptivity,
  multi-rounds, budgeted, and robustness.
\newblock In {\em CEUR Workshop Proceedings}, volume 3345, pages 1--6, 2022.

\bibitem{blum2016liquid}
Christian Blum and Christina~Isabel Zuber.
\newblock Liquid democracy: Potentials, problems, and perspectives.
\newblock {\em Journal of Political Philosophy}, 24(2):162--182, 2016.

\bibitem{borgs2014maximizing}
Christian Borgs, Michael Brautbar, Jennifer Chayes, and Brendan Lucier.
\newblock Maximizing social influence in nearly optimal time.
\newblock In {\em SODA}, pages 946--957, 2014.

\bibitem{Bradac0Z19}
Domagoj Bradac, Sahil Singla, and Goran Zuzic.
\newblock (near) optimal adaptivity gaps for stochastic multi-value probing.
\newblock In {\em APPROX/RANDOM}, pages 49:1--49:21, 2019.

\bibitem{castiglioni2021election}
Matteo Castiglioni, Diodato Ferraioli, Nicola Gatti, and Giulia Landriani.
\newblock Election manipulation on social networks: Seeding, edge removal, edge
  addition.
\newblock {\em Journal of Artificial Intelligence Research}, 71:1049--1090,
  2021.

\bibitem{chawla2010multi}
Shuchi Chawla, Jason~D Hartline, David~L Malec, and Balasubramanian Sivan.
\newblock Multi-parameter mechanism design and sequential posted pricing.
\newblock In {\em STOC}, pages 311--320, 2010.

\bibitem{chen2009approximability}
Ning Chen.
\newblock On the approximability of influence in social networks.
\newblock {\em SIAM Journal on Discrete Mathematics}, 23(3):1400--1415, 2009.

\bibitem{ChenP19}
Wei Chen and Binghui Peng.
\newblock On adaptivity gaps of influence maximization under the independent
  cascade model with full-adoption feedback.
\newblock In {\em ISAAC}, pages 24:1--24:19, 2019.

\bibitem{ChenPST22}
Wei Chen, Binghui Peng, Grant Schoenebeck, and Biaoshuai Tao.
\newblock Adaptive greedy versus non-adaptive greedy for influence
  maximization.
\newblock {\em J. Artif. Intell. Res.}, 74:303--351, 2022.

\bibitem{chen2016combinatorial}
Wei Chen, Yajun Wang, Yang Yuan, and Qinshi Wang.
\newblock Combinatorial multi-armed bandit and its extension to
  probabilistically triggered arms.
\newblock {\em The Journal of Machine Learning Research}, 17(1):1746--1778,
  2016.

\bibitem{coro2019exploiting}
Federico Cor{\`o}, Emilio Cruciani, Gianlorenzo D'Angelo, and Stefano Ponziani.
\newblock Exploiting social influence to control elections based on scoring
  rules.
\newblock In {\em IJCAI}, pages 201--207, 2019.

\bibitem{d2021improved}
Gianlorenzo D'Angelo, Debashmita Poddar, and Cosimo Vinci.
\newblock Improved approximation factor for adaptive influence maximization via
  simple greedy strategies.
\newblock In {\em ICALP}, pages 59:1--59:19, 2021.

\bibitem{d2021better}
Gianlorenzo D'Angelo, Debashmita Poddar, and Cosimo Vinci.
\newblock Better bounds on the adaptivity gap of influence maximization under
  full-adoption feedback.
\newblock {\em Artif. Intell.}, 318:103895, 2023.

\bibitem{das2022budgeted}
Debojit Das, Shweta Jain, and Sujit Gujar.
\newblock Budgeted combinatorial multi-armed bandits.
\newblock In {\em AAMAS}, pages 345--353, 2022.

\bibitem{dean2005adaptivity}
Brian~C Dean, Michel~X Goemans, and Jan Vondr{\'a}k.
\newblock Adaptivity and approximation for stochastic packing problems.
\newblock In {\em SODA}, pages 395--404, 2005.

\bibitem{dean2008approximating}
Brian~C Dean, Michel~X Goemans, and Jan Vondr{\'a}k.
\newblock Approximating the stochastic knapsack problem: The benefit of
  adaptivity.
\newblock {\em Mathematics of Operations Research}, 33(4):945--964, 2008.

\bibitem{DowneyF99}
Rodney~G. Downey and Michael~R. Fellows.
\newblock {\em Parameterized Complexity}.
\newblock Monographs in Computer Science. Springer, 1999.

\bibitem{Feige98}
Uriel Feige.
\newblock A threshold of ln \emph{n} for approximating set cover.
\newblock {\em J. {ACM}}, 45(4):634--652, 1998.

\bibitem{gabillon2013adaptive}
Victor Gabillon, Branislav Kveton, Zheng Wen, Brian Eriksson, and
  Shanmugavelayutham Muthukrishnan.
\newblock Adaptive submodular maximization in bandit setting.
\newblock {\em NeurIPS}, 26, 2013.

\bibitem{gittins1979bandit}
John~C Gittins.
\newblock Bandit processes and dynamic allocation indices.
\newblock {\em Journal of the Royal Statistical Society: Series B
  (Methodological)}, 41(2):148--164, 1979.

\bibitem{golovin2011adaptive}
Daniel Golovin and Andreas Krause.
\newblock Adaptive submodularity: Theory and applications in active learning
  and stochastic optimization.
\newblock {\em Journal of Artificial Intelligence Research}, 42:427--486, 2011.

\bibitem{gupta2013stochastic}
Anupam Gupta and Viswanath Nagarajan.
\newblock A stochastic probing problem with applications.
\newblock In {\em IPCO}, pages 205--216, 2013.

\bibitem{GuptaNS16}
Anupam Gupta, Viswanath Nagarajan, and Sahil Singla.
\newblock Algorithms and adaptivity gaps for stochastic probing.
\newblock In {\em SODA}, pages 1731--1747, 2016.

\bibitem{GuptaNS17}
Anupam Gupta, Viswanath Nagarajan, and Sahil Singla.
\newblock Adaptivity gaps for stochastic probing: Submodular and {XOS}
  functions.
\newblock In {\em SODA}, pages 1688--1702, 2017.

\bibitem{hoefer2021stochastic}
Martin Hoefer, Kevin Schewior, and Daniel Schmand.
\newblock Stochastic probing with increasing precision.
\newblock In {\em IJCAI}, pages 4069--4075, 2021.

\bibitem{hoeffding1963probability}
Wassily Hoeffding.
\newblock Probability inequalities for sums of bounded random variables.
\newblock {\em Journal of the American Statistical Association},
  58(301):13--30, 1963.

\bibitem{kempe2003maximizing}
David Kempe, Jon Kleinberg, and {\'E}va Tardos.
\newblock Maximizing the spread of influence through a social network.
\newblock In {\em KDD}, pages 137--146, 2003.

\bibitem{DBLP:conf/uai/KrauseG05}
Andreas Krause and Carlos Guestrin.
\newblock Near-optimal nonmyopic value of information in graphical models.
\newblock In {\em {UAI} '05, Proceedings of the 21st Conference in Uncertainty
  in Artificial Intelligence, 2005}, pages 324--331. {AUAI} Press, 2005.

\bibitem{Kuhn2003}
Harold~W. Kuhn.
\newblock {\em Lectures on the Theory of Games (AM-37)}.
\newblock Princeton University Press, stu - student edition edition, 2003.

\bibitem{9543204}
Ifedayo Oladeji, Ramon Zamora, and Tek~Tjing Lie.
\newblock Optimal placement of renewable energy sources distributed generation
  in an unbalanced network for modern grid operations.
\newblock In {\em 2021 International Conference on Smart Energy Systems and
  Technologies (SEST)}, pages 1--6, 2021.

\bibitem{peng2019adaptive}
Binghui Peng and Wei Chen.
\newblock Adaptive influence maximization with myopic feedback.
\newblock {\em NeurIPS}, 32, 2019.

\bibitem{sun2018multi}
Lichao Sun, Weiran Huang, Philip~S Yu, and Wei Chen.
\newblock Multi-round influence maximization.
\newblock In {\em KDD}, pages 2249--2258, 2018.

\bibitem{tong2020time}
Guangmo Tong, Ruiqi Wang, Zheng Dong, and Xiang Li.
\newblock Time-constrained adaptive influence maximization.
\newblock {\em IEEE Transactions on Computational Social Systems}, 8(1):33--44,
  2020.

\bibitem{vaswani2015influence}
Sharan Vaswani, Laks Lakshmanan, Mark Schmidt, et~al.
\newblock Influence maximization with bandits.
\newblock {\em arXiv preprint arXiv:1503.00024}, 2015.

\bibitem{wen2017online}
Zheng Wen, Branislav Kveton, Michal Valko, and Sharan Vaswani.
\newblock Online influence maximization under independent cascade model with
  semi-bandit feedback.
\newblock {\em NeurIPS}, 30, 2017.

\bibitem{wilder2018controlling}
Bryan Wilder and Yevgeniy Vorobeychik.
\newblock Controlling elections through social influence.
\newblock In {\em AAMAS}, pages 265--273, 2018.

\bibitem{yadav2017influence}
Amulya Yadav, Bryan Wilder, Eric Rice, Robin Petering, Jaih Craddock, Amanda
  Yoshioka-Maxwell, Mary Hemler, Laura Onasch-Vera, Milind Tambe, and Darlene
  Woo.
\newblock Influence maximization in the field: The arduous journey from
  emerging to deployed application.
\newblock In {\em AAMAS}, pages 150--158, 2017.

\end{thebibliography}
\appendix
\section{Mapping between SBMSm and Specific Multi-round Problems}\label{apppendix:applications}
\subsection{Multi-round Influence Maximization}
We start by showing how an instance of the MR-IM-IC problem can be mapped to an instance of SBSMm.
Let $M = (T, B, V, E, (p_t)_{t\in [T]}, (w_t)_{t \in [T]})$ be an instance of MR-IM-IC and let $I=(T, B, V, H, (\mathcal{P}_t)_{t\in [T]}, (f_t)_{t \in [T]})$ denote the generic instance of SBSMm.
The mapping works as follows:
\begin{itemize}
 \item $T$ and $B$ are exactly the same as in $M$ (hence, we use exactly the same name).
 \item The set $V$ of items is the set of nodes in the social network.
 \item The set $H(v)$ of possible local states of $v$ corresponds to the possible sets of nodes influenced by an independent cascade starting from node $v$, i.e., $H(v) = \{X \subseteq V \mid x\text{ is reachable from }v\text{ in }G,\ \forall x\in X\}$; we observe that $|H(v)|$ is in general exponential in the input size, despite it can be succinctly represented in polynomial space.
 \item For any global state $\eta_t\in H$ and $v\in V$, $\mathcal{P}_t(\eta_t(v))$ is the probability that an independent cascade influences exactly the $\eta_t(v) \in H(v)$ nodes when started from $v$. We can succinctly define $\mathcal{P}_t$ through the concept of {\em live edges} \cite{kempe2003maximizing}. The {\em live-edge graph} $ G^L_t=(V, E^L)$ at round $t$ is a random subgraph of $G$ obtained by including each edge $(u,v)\in E$ in $ E^L_t$ with probability $p_{t,(u,v)}$, independently of the random inclusion of the other edges; each edge $(u,v)\in G_t^L$ is called {\em live} as it can be used to spread the influence from $u$ to $v$, while each edge $(u,v)\notin G_t^L$ is called {\em dead} as it cannot do it. 
We observe that the probability of a given subgraph $G=(V,E')$ to be the live-edge graph is $\mathbb{P}[ G^L_t=G']=\prod_{(u,v)\in E'}p_{t,(u,v)}\prod_{(u,v)\in E\setminus E'}(1-p_{t,(u,v)})$. Then, $\mathcal{P}_t(\eta_t(v))$ can be equivalently defined as the sum of $\mathbb{P}[ G^L_t=G']$, over all subgraphs $G'=(V,E')$ of $G$ such that the set of nodes reachable from $v$ in $G'$ coincides with $\eta_t(v)$, i.e., $\mathcal{P}_t(\eta_t(v))$ is the probability that the set of nodes reachable from $v$ in $G_t^L$ is exactly $\eta_t(v)$. 
%
%
%
 \item The objective function $f_t$ is defined as $f_t(S, \eta_t) = \sum_{v\in \bigcup_{u\in S}\eta_t(u)} w_t(v)$.
\end{itemize}
The mapping between a multi-round policy $\pi$ for IM-IC and a multi-round policy $\pi'$ for SBMSm is determined as follows. The multi-round policy $\pi$, at a certain step of round $t$, either moves to the subsequent round, or adds a new seed $v\in V$ to the previously selected seeds set $S^t$  and observes the newly activated nodes. For the sake of simplicity, we will assume that $\pi$, after selecting an item $v$ as a seed, is even able to observe the entire set of nodes that $v$ would have activate in graph $G$ if no seed had been previously selected. This assumption is not restrictive, as knowing if a seed $v$ would have or not influence some already active nodes does not give useful information to the policy, and will be used below only to define the mapping with SBMSm. The corresponding multi-round policy $\pi'$ of SBMSm, at a certain step of round $t$, either moves to the subsequent round, or adds a new seed $v\in V$ to the previously selected items $S^t$ and observes the local state $\eta_t(v)$. Considering that $\eta_t(v)$ represents the set of nodes influenced by a cascade starting from node $v$, the mapping between $\pi$ and $\pi'$ easily follows.

According to the mapping between $M$ and $I$ and that between the multi-round policies of such instances, it is immediate to check that $\varsigma(\pi) = \sigma(\pi')$, and thus finding the policy maximizing the latter is equivalent to find a policy maximizing the former.

\paragraph{Properties of $f_t$.}
In order to establish that $I$ is a valid instance for SBMSm we need also to prove that $f_t$ is polynomially computable and representable, it is polynomially bounded, and it satisfies monotonicity and  adaptive submodularity. Polynomial representation and computation immediately follows from the definition above. As for polynomial boundedness, this property is  satisfied if there exist constants $c,C>0$ such that $w_t(v) \leq C$ for every $v$ and every $t$, and there exist a round $t$ and a node $v$ such that $c\leq w_t(v)$; we observe that for standard unweighted influence maximization problems, where the objective function simply counts the number of influenced nodes, we can set $C=c=1$. Given the constants $c,C$ defined above, the values $\Lambda :=Cn$ and $\lambda:=c$ immediately satisfy the polynomial boundedness. 

The properties of monotonicity and adaptive submodularity have been formally shown for the standard single-round adaptive influence maximization problem by \cite{golovin2011adaptive}, and clearly extend to the multi-round setting considered here. For the sake of completeness, we provide the main intuitions on how these properties can be checked. Monotonicity is immediately satisfied since the information diffusion does not decrease if the set of seeds expands. As for the adaptive submodularity, it is not hard to check that this condition is satisfied: intuitively, for each realization of the live-edge graph, as described above, the increment of $f_t$ due to the selection of a new seed $v$ is given by nodes reachable from $v$ in this live graph, but not reachable by previously selected seeds, and it is clearly impossible to expand this subset with more previously selected seeds.

To show strong adaptive submodularity, we focus only on property (ii), as property (i) coincides with standard adaptive submodularity. We observe that the maximum expected increment of the value obtainable through the adaptive selection of \( k \) items, after observing a given partial state, is equivalent to the optimal adaptive value \( OPT(G') \) achievable in the subgraph \( G' \) obtained from \( G \) by removing the previously influenced nodes and their adjacent edges. Thus, the optimal adaptive value \( OPT(G) \) achievable in the whole graph satisfies \( OPT(G) \geq OPT(G') \), and by the above observation, property (ii) of strong adaptive submodularity holds.

\subsection{Multi-round Stochastic Probing}
Let $M = (T, B, V, (p_t)_{t\in [T]},(g_t)_{t\in [T]})$ be an instance of the MR-BSS problem, and let us now show how this instance can be modelled as an SBMSm instance $I=(T, B, V, H, (\mathcal{P}_t)_{t\in [T]}, (f_t)_{t \in T})$: $T$, $B$, and $V$ are exactly the same as in $M$; $H(v) = \{0, 1\}$, with $0$ meaning that item $v$ is inactive, and $1$ meaning that it is active, and thus $H = \{0, 1\}^n$; $\mathcal{P}_t(v) = p_t(v)$; and $f_t(S, \eta_t)=g_t(\{v\in S \colon \eta_t(v)=1\})$, i.e., we evaluates among the set $S$ of probed items, only the ones that happens of being active.

According to this mapping between $M$ and $I$, we have an obvious mapping between a multi-round policy $\pi$ for MR-BSS and a multi-round policy $\pi'$ for SBMSm, thus finding the policy maximizing the latter is equivalent to find a policy maximizing the former.

\paragraph{Properties of $f_t$.}
The polynomial boundedness of $f_t$ holds by using the same values of $\lambda$ and $\Lambda$ defined for $g_t$. The monotonicity and the adaptive submodularity of $f_t$ immediately follow from the monotonicity and the submodularity of $g_t$, respectively.

As for property (ii) of strong adaptive submodularity, let $k\geq 1$ be an integer, $S_t$ be a given subset of items and $\tilde{\bm S}_t$ be an adaptive selection of at most $k+|S_t|$ items constrained by the observation of the binary states of items in set $S_t$. Let $g_t$ be the submodular function that characterizes $f_t$ in the considered instance of MR-BSS. Considering that the binary states $(\bm \phi(v))_{v\in V}$ are independent, we can run the same adaptive selection as $\tilde{\bm S}_t\setminus S_t$, in a sub-instance obtained by restricting the set of items to $V':=V\setminus S_t$ and by taking as new function $g'_t$ that defined as 
$
g'_t(U)=g_t(U\cup S_t)-g_t(S_t)
$. By the above definition, we have that the value of $g'_t(U)$ under the adaptive selection $\tilde{\bm S}_t\setminus S_t$ in the considered sub-instance is equal to the expected increment of $g_t$ under the same adaptive selection in the initial instance,  after (selecting and) observing $S_t$. Furthermore, since inequality $g_t(U\cup S_t)-g_t(S_t)\leq g_t(U)$ holds (by the submodularity of $g_t$), the above expected increment is at most equal to the expected value of $g_t$ achieved by the selection of $\tilde{\bm S}_t\setminus S_t$. Thus, as the number of items selected in $\tilde{\bm S}_t\setminus S_t$ is at most $k$, we have that the expected value of an optimal adaptive policy that selects $k$ items at round $t$ is at least equal to the expected increment of $g_t$, due to the selection of items in $\tilde{\bm S}_t\setminus S_t$ after (selecting and) observing $S_t$. We conclude that property (ii) of strong adaptive submodularity follows. 
\section{A Tree-based Algorithm for the Single-Round Problem $\SRM$}\label{sec_fpt}
\subsection{The Single-round Optimization Problem as a Game.}To design a single-round policy $\pi_t$ that computes $R(t,b)$ for any round $t$ and budget $b$, it is useful to describe the $\SRM$ problem at round $t$ in terms of a suitable two-player extensive form game between two players \cite{Kuhn2003}: (i) the policy $\pi_t$, that decides, at each step, either to interrupt the game and collecting a reward or to continue by selecting another item; (ii) the nature $\mathcal{N}$, that reveals a partial state after each item selection. 

Specifically, the game is modeled as a directed rooted tree $\mathcal{T}_t$ of depth $2n+1$. The policy $\pi_t$ will play at all nodes at even depth, whereas the nature $\mathcal{N}$ plays at all nodes at odd depth. We will denote with $\mathcal{T}_t^o$ the set of nodes in which $\pi_t$ plays, and with $\mathcal{T}_t^e$ the set of nodes in which ${\cal N}$ plays. Each leaf of the tree corresponds to an outcome of the game, in which the policy $\pi_t$ collects a reward depending on the reached leaf. 

In the following, we define the tree $\mathcal{T}_t$, so as the game between $\pi_t$ and ${\cal N}$, starting from the nodes at depth $d\in\{0,1\}$, then at depth $d\in\{2,3\}$, and iteratively at depth $d\in\{2(i-1),2i-1\}$, for any $i\in [n+1]$.

{\bf Base Case ($d\in \{0,1\}$):} The root $u^*=u_0$ of $\mathcal{T}_t$, that belongs to $\mathcal{T}_t^o$ (as its depth is $0$), is the initial configuration of the game. The root has $n+1$ outgoing edges, where each edge $(u_0,u_1)$ has a label $a=L(u_0,u_1)$ corresponding to the action $a$ that $\pi_t$ can perform at step $1$. In particular, either the action/label $a$ is a string STOP that will denote the termination of the single-round policy, or $a$ is an item $v_1 \in V$ that $\pi_t$ can select at step~$1$. Given the action $a$ chosen by $\pi_t$ at step $1$, the new configuration of the game is obtained by following the corresponding labeled edge (i.e., the edge $(u_0,u_1)$ such that $a=L(u_0,u_1)$), and it corresponds to a node $u_1$ at depth $1$. If $a=$STOP then the reached node $u_1$ is a leaf and the game ends. In such a case, the reward of $\pi_t$ is $R(t+1,b)$, as $\pi_t$ did not select any node and the whole budget $b$ can be used in rounds $t+1,\ldots, T$. Otherwise, the reached node $u_1$ has as many outgoing edges as partial states $\langle \{v_1\}\rangle$ that can be observed after selecting the item $v_1$, and each outgoing edge $(u_1,u_2)$ is labeled with a distinct partial state. At this point the nature $\mathcal{N}$ picks a partial state $\langle \{v_1\}\rangle$, drawn according to the probability distribution induced by $\mathcal{P}_t$ on partial states of type $\langle \{v_1\}\rangle$ (i.e., observable after selecting $v_1$).  Then, the new configuration of the game is obtained by following the outgoing edge labeled with the picked partial state (i.e., the edge $(u_1,u_2)$ such that $\langle \{v_1\}\rangle=L(u_1,u_2)$), and it corresponds to a node $u_2$ at depth $2$, that belongs to $\mathcal{T}_t^o$.

{\bf General Case ($d\in \{2(i-1),2i-1\}$):} Iteratively, for any $i\in [n+1]$, each node $u_d\in \mathcal{T}_t$ at depth $d=2(i-1)$ corresponds to a possible outcome of policy $\pi_t$ obtained at the end of step $i-1$, where $i-1$ items $v_1,\ldots, v_{i-1}$ have been already selected and a partial state $\langle \{v_1,\ldots, v_{i-1}\}\rangle$ has been observed. $u_d$ has $n-i+2$ outgoing edges $(u_d,u_{d+1})$, each one labeled with the action $a=L(u_d,u_{d+1})$ that $\pi_t$ can performs at step $i$: either $a$=STOP, or $a$ is equal to an item $v_i\in V\setminus\{v_1,\ldots, v_{i-1}\}$ that $\pi_t$ can select at step $i$; we observe that, for $i=n+1$, there is a unique outgoing edge labeled with action $a=$STOP. Again, given the action $a$ chosen by $\pi_t$ at step $i$, the new configuration of the game is obtained by  following the corresponding labeled edge and reaching a node $u_{d+1}$ at depth $d+1$. If $a=$STOP, then $u_{d+1}$ is a leaf, the game ends and the profit for $\pi_t$ is the expected value of $f_t$ constrained by the observation of partial state $\langle \{v_1,\ldots, v_{i-1}\}\rangle$, plus the optimal expected value $R(t+1,b-i+1)$ achievable in the subsequent rounds by using the residual budget (i.e., $b-i+1$ units).  Otherwise, $u_{d+1}$ has as many outgoing edges as partial states $\langle \{v_1,\ldots, v_{i}\}\rangle$ that can be observed after selecting $v_i$ (constrained by the previous observation of $\langle \{v_1,\ldots, v_{i-1}\}\rangle$), and each outgoing edge is labeled with a distinct partial state. At this point the nature $\mathcal{N}$ picks a partial state $\langle \{v_1,\ldots, v_{i}\}\rangle$ drawn according to the probability distribution induced by $\mathcal{P}_t$ on partial states of type $\langle \{v_1,\ldots,v_i\}\rangle$, conditioned by the observation of $\langle \{v_1,\ldots, v_{i-1}\}\rangle$. The new configuration of the game is obtained by following the outgoing edge labeled with the picked partial state, and it corresponds to a node $u_{d+1}$ at depth $d+1$.

\subsection{From the Game to the Optimal Solution. } Given a node $u\in \mathcal{T}_t^o$, let $\langle S(u)\rangle$ denote the partial state observed when the configuration of the game is determined by $u$. We have that the above surjective map between nodes of $\mathcal{T}_t^o$ and partial states defines an equivalence between the tree-based game and the problem of finding a policy $\pi_t$ that maximizes \eqref{optimum_singleround}. Indeed, let $a^*(u)$ denote the action that maximizes the expected value of the reward collected by $\pi_t$, assuming that the actual configuration of the game is $u$, and let $R(u)$ denote the resulting optimal expected value.  We have that $a^*(u)$ is also the action that $\pi_t$ must follow for any partial state $\langle S(u)\rangle$ in order to maximize \eqref{optimum_singleround}. Then, denoting the root of $\mathcal{T}_t$ by $u^*$, we have that $R(u^*)$ coincides with $R(t,b)$, that is the optimal value of the $\SRM$ problem. 

We observe that, by following standard approaches to find the optimal strategy of a player in extensive-form games \cite{Kuhn2003}, $a^*(u)$ and $R(u)$ can be computed for any $u\in \mathcal{T}_t^o$ by recursively exploring the tree $\mathcal{T}$. By the equivalence between the tree-based game and the single-round maximization problem, the optimal strategy for the former can be easily turned into an optimal solution for the latter.

\subsection{When the Single-round Problem can be Efficiently Solved?}
The tree-based algorithmic procedure to compute an optimal policy for $\SRM$ can be run in linear time w.r.t. the size of the game tree $\mathcal{T}_t$, as it simply requires to recursively visit all the nodes in tree. Thus, $\SRM$ can be solved efficiently whenever the game tree ${\cal T}_t$ contains a number of nodes that is polynomial in the size of the input instance. To this aim, in the proposition below, we relate the size of ${\cal T}_t$ to the number of items and states, and we next show that there may be practical settings in which this is not rare. 
\begin{proposition}\label{prop_tree_nodes}
${\cal T}_t$ has $O(2^n n!|H|)$ nodes. 
\end{proposition}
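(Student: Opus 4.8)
The plan is to bound the total number of nodes by identifying, for each node, the combinatorial data that determines it uniquely, and then summing over all such data. The key observation is that every node of $\mathcal{T}_t$ is specified by its root-to-node path, which alternates the items selected by $\pi_t$ (the action labels on edges leaving policy nodes) with the partial states revealed by $\mathcal{N}$ (the labels on edges leaving nature nodes). Concretely, a policy node at depth $2(i-1)$ is identified by the ordered sequence $(v_1,\ldots,v_{i-1})$ of distinct items selected so far together with the observed partial state $\langle\{v_1,\ldots,v_{i-1}\}\rangle$; a nature node at depth $2i-1$ is identified by the ordered sequence $(v_1,\ldots,v_i)$ together with the partial state $\langle\{v_1,\ldots,v_{i-1}\}\rangle$ observed before $v_i$ is revealed; and each STOP-leaf is identified by the same data as the policy node from which its STOP edge departs. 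Since there are only these three kinds of nodes, it suffices to bound the number of configurations of each kind and sum.

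First I would bound the number of admissible ordered sequences of distinct items. A sequence of length $k$ is a $k$-permutation of $V$, and the number of these is $\binom{n}{k}k!$; summing over all lengths $k\in\llbracket 0,n\rrbracket$ gives
\begin{equation*}
\sum_{k=0}^{n}\binom{n}{k}k!\leq \sum_{k=0}^{n}\binom{n}{k}n!=n!\sum_{k=0}^{n}\binom{n}{k}=2^n n!,
\end{equation*}
where I simply used $k!\leq n!$ together with the binomial identity $\sum_{k}\binom{n}{k}=2^n$. This is exactly where the factor $2^n n!$ in the statement originates.

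Next I would bound the number of partial states attached to any fixed item sequence. A partial state $\langle\{v_1,\ldots,v_j\}\rangle$ is a tuple of local states $(\eta(v_1),\ldots,\eta(v_j))$, so the number of possibilities is $\prod_{\ell=1}^{j}|H(v_\ell)|$. Since $|H|=\prod_{v\in V}|H(v)|$ and each $|H(v)|\geq 1$, this product is at most $|H|$, uniformly over the choice of sequence and over $j$. Combining the two bounds, the number of policy nodes is at most $2^n n!\cdot|H|$, and the same bound holds for nature nodes and for STOP-leaves, each being governed by the same pair consisting of an ordered item sequence and an observed partial state. Adding the three contributions yields $3\cdot 2^n n!\,|H|=O(2^n n!\,|H|)$ nodes, as claimed.

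The only point that requires care is the claim that a node is uniquely determined by the pair (ordered item sequence, observed partial state): I would verify this by induction on the depth, using that the labels on the edges leaving a common parent are pairwise distinct, so that the list of labels along the root-to-node path—which encodes precisely this pair—determines the node. The counting steps themselves are routine once this correspondence is in place; the bound is deliberately loose, since the sharper estimate $\sum_{k=0}^{n}\tfrac{n!}{(n-k)!}\leq e\cdot n!$ would even give $O(n!\,|H|)$, but $O(2^n n!\,|H|)$ already suffices for the statement.
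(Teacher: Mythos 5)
Your proof is correct and takes essentially the same approach as the paper's: both arguments rest on the count of at most $2^n n!$ ordered sequences of distinct items, multiplied by an $|H|$ factor for the observed states. The only difference is bookkeeping --- the paper first notes that the total node count is at most twice the number of leaves (each internal node having exactly one STOP-child) and then counts leaves via a surjection from (item sequence, state) pairs, whereas you enumerate policy nodes, nature nodes and leaves directly via (item sequence, partial state) pairs; both routes are valid and yield the same bound.
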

\begin{proof}
Let $z$ denote the number of leaves of ${\cal T}_t$, and let $|{\cal T}_t|$ denote the number of nodes of ${\cal T}_t$. We first show that 
\begin{equation}\label{eq_tree_nodes1}
|{\cal T}_t|\in O(z).
\end{equation}
By exploiting the structure of $\mathcal{T}_t$, we have that each non-leaf node $u\in \mathcal{T}_t$ has exactly one child node that is a leaf. Then, denoting as $y$ the number of non-leaf nodes of $\mathcal{T}_t$, we have $y\leq z$. Thus, we have $|{\cal T}_t|=y+z\leq 2z$, and then \eqref{eq_tree_nodes1} follows.

Now, we show that 
\begin{equation}\label{eq_tree_nodes2}
z\in O(2^n n! |H|).
\end{equation}
Let $G$ be the function that maps each sequence $s$ of distinct items in $V$ and partial state state $\eta_t\in H$, to the leaf $u=G(s,\eta_t)$ of $\mathcal{T}_t$ associated with the outcome of $\pi_t$ obtained when the realized state is $\eta_t$ and $s$ is the sequence of items selected by $\pi_t$ at round $t$. We observe that, by the structure of $\mathcal{T}_t$, function $G$ is surjective. Thus, as the number of sequences of distinct items in $V$ is at most $2^n n!$ (at most $n!$ orderings for each of the $2^n$ subsets of $V$), we have that the number of leaves is at most $2^n n! |H|$, and this shows \eqref{eq_tree_nodes2}. 

By putting \eqref{eq_tree_nodes1} and \eqref{eq_tree_nodes2} together, the claim of the proposition follows. 
\end{proof}
Since the tree-based algorithm for $\SRM$ can be run in linear time w.r.t. to the size of $\mathcal{T}_t$ and, by Proposition \ref{prop_tree_nodes}, such size is $O(2^n n!|H|)$, we then have the following corollary:
\begin{corollary}\label{coro_opt}
$\SRM$ can be solved in $O(2^n n!|H|)$ time. 
\end{corollary}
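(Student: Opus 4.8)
The plan is to derive Corollary~\ref{coro_opt} as an essentially immediate consequence of Proposition~\ref{prop_tree_nodes}, which already supplies the combinatorial heart of the argument. First I would recall that the tree-based procedure computes the optimal action $a^*(u)$ and optimal value $R(u)$ at every node by a single bottom-up pass over $\mathcal{T}_t$ (the standard backward-induction algorithm for extensive-form games). Concretely, the traversal processes the nodes in order of non-increasing depth: at each leaf the reward is read off directly; at each policy node $u\in\mathcal{T}_t^o$ one sets $R(u)$ to the maximum of the values of its children (and $a^*(u)$ to the maximizing outgoing label); and at each nature node $u\in\mathcal{T}_t^e$ one sets $R(u)$ to the average of the values of its children, weighted by the conditional probabilities of the partial states labelling the outgoing edges (induced by $\mathcal{P}_t$). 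Since each node is visited once and the work at a node is proportional to its out-degree, the total cost of the pass is $O\!\left(\sum_{u}\deg(u)\right)=O(|\mathcal{T}_t|)$, i.e.\ linear in the size of the tree.

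The main (and only modest) obstacle is the bookkeeping needed to certify that this traversal is genuinely linear in $|\mathcal{T}_t|$, i.e.\ that the quantities attached to leaves and edges are available in constant time per leaf/edge. At a STOP-leaf reached after selecting $v_1,\dots,v_{i-1}$ and observing $\langle\{v_1,\dots,v_{i-1}\}\rangle$, the reward is the conditional expectation of $f_t$ given that partial state plus $R(t+1,b-i+1)$; the values $R(t+1,\cdot)$ are part of the input to $\SRM$ (they are furnished by the outer dynamic program), while the conditional expectations of $f_t$ and the edge transition probabilities can be tabulated once from the oracle and charged to the corresponding leaves and edges of $\mathcal{T}_t$. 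Thus each such quantity is computed a single time and attributed to a single element of the tree, so the whole procedure still runs in time $O(|\mathcal{T}_t|)$.

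Finally I would combine the two ingredients: by Proposition~\ref{prop_tree_nodes} the tree has $|\mathcal{T}_t|=O(2^n n!\,|H|)$ nodes, and by the backward-induction analysis above the optimal single-round policy (equivalently, the optimal value $R(t,b)=R(u^*)$ of $\SRM$ at the root $u^*$) is produced in time linear in $|\mathcal{T}_t|$. Hence $\SRM$ is solved in $O(2^n n!\,|H|)$ time, which is exactly the claim of Corollary~\ref{coro_opt}. I do not expect any genuine difficulty beyond this bookkeeping, since the exponential node count from Proposition~\ref{prop_tree_nodes} dominates and the remaining content is the classical linear-time traversal of a game tree.
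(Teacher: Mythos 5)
Your proposal is correct and follows essentially the same route as the paper: combine the fact that backward induction over the game tree $\mathcal{T}_t$ runs in time linear in its size with the node-count bound $O(2^n n!\,|H|)$ from Proposition~\ref{prop_tree_nodes}. The extra bookkeeping you supply (constant-time availability of leaf rewards via the precomputed $R(t+1,\cdot)$ values and the oracle) is a reasonable elaboration of what the paper leaves implicit, but it does not change the argument.
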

Furthermore, by observing that the dynamic programming algorithm for SBMSm executes $\SRM$ once for each pair round-budget $(t,b)\in [T]\times \llbracket 0,B\rrbracket$, i.e., $O(BT)\subseteq O(nT^2)$ times, we have the following further corollary:
\begin{corollary}\label{coro_opt_dyn}
SBMSm can be solved in $O(2^n n!nT^2|H|)$ time.
\end{corollary}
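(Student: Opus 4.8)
The plan is to prove Corollary~\ref{coro_opt_dyn} by a straightforward product bound: multiply the per-call cost of $\SRM$ (given by Corollary~\ref{coro_opt}) by the number of times the dynamic programming algorithm invokes $\SRM$. Concretely, I would first recall the structure of the dynamic program described in the subsection ``A Dynamic Programming Algorithm based on Single-round Optimization'': it tabulates the values $R(t,b)$ by iterating $t$ from $T$ down to $1$, and for each fixed $t$ it computes $R(t,b)$ for every $b\in \llbracket 0,B\rrbracket$ via the recurrence~\eqref{recurrence}. Each such computation of $R(t,b)$ amounts to solving exactly one instance of $\SRM$ at round $t$ with budget $b$, where the required values $R(t+1,b-i)$ for $i\in \llbracket 0,\min\{n,b\}\rrbracket$ have already been tabulated in the previous iteration and can thus be read off in constant time per lookup. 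Hence no $R$-value is recomputed, and the only nontrivial cost per table entry is the single $\SRM$ solve.

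Next I would count the table entries. Since $t$ ranges over $[T]$ and $b$ ranges over $\llbracket 0,B\rrbracket$, there are $T\cdot(B+1)$ pairs $(t,b)$, i.e.\ $O(BT)$ calls to $\SRM$. Invoking the standing assumption $B<nT$ stated in the model section, this is $O(nT\cdot T)=O(nT^2)$. By Corollary~\ref{coro_opt}, each $\SRM$ call runs in $O(2^n n!|H|)$ time, so the total time spent solving single-round subproblems is
\begin{equation*}
O(2^n n!|H|)\cdot O(nT^2)=O(2^n n!\,nT^2|H|).
\end{equation*}
Finally, I would observe that the remaining work—initializing $R(T+1,b)=0$ for all $b$, and reconstructing the optimal multi-round policy achieving $R(1,B)$ by standard backtracking through the filled table—is dominated by this bound (the reconstruction touches at most $O(T)$ table entries along the realized path and reuses the optimal single-round actions $a^*(\cdot)$ already identified during each $\SRM$ solve). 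This yields the claimed $O(2^n n!\,nT^2|H|)$ running time.

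I do not expect a genuine obstacle here, since the argument is a routine multiplication of two previously established bounds; the statement is essentially a bookkeeping corollary. The only points requiring a moment of care are (i) confirming that the precomputation order guarantees every $R(t+1,\cdot)$ value needed by the $\SRM$ solve at round $t$ is already available, so that each of the $O(nT^2)$ calls truly costs only $O(2^n n!|H|)$ rather than triggering recursive recomputation, and (ii) correctly applying $B<nT$ to collapse $O(BT)$ into $O(nT^2)$. Both are immediate from the construction already given, so the proof reduces to stating the counting explicitly and carrying out the product.
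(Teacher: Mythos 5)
Your proposal is correct and matches the paper's argument exactly: the paper likewise derives Corollary~\ref{coro_opt_dyn} by multiplying the $O(2^n n!|H|)$ per-call bound of Corollary~\ref{coro_opt} by the $O(BT)\subseteq O(nT^2)$ round--budget pairs for which $\SRM$ is invoked, using the standing assumption $B<nT$. Your additional remarks on the tabulation order and the backtracking cost are sound but just make explicit what the paper leaves implicit.
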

If $n!$ and $|H|$ are polynomial in the input instance of SBMSm, then $2^n n!|H|$ is polynomial, too. Thus, by Corollaries \ref{coro_opt} and \ref{coro_opt_dyn}, both $\SRM$ and SBMSm can be run in  polynomial time in such case. 
Note that $n!$ is polynomial if, for instance, the number $n$ of items is very small, e.g.,  constant. Such a limitation is not so rare in several scenarios of interest, as it often occurs that in large populations only few items are really relevant (e.g., among the many friends on social networks only few of them are able to affect our opinions, or among the many potential job candidates only few effectively satisfy all the requirements at which one may be interested in). The limitation on the number of states $|H|$ occurs in many settings where the number of observable events is not so high. For instance, it could be the case of certain influence maximization problems where edge probabilities are correlated, and depends on a few possible stochastic events. Furthermore, it often occurs that, if the number of items is constant, the number of states $|H|$ is constant, too. For instance, in both influence maximization and stochastic probing, we have that the number of states, even if it grows exponentially in the number of items, it depends on such quantity only. Therefore, in such a case, if the number of items is constant, the number of states will be constant, too. 

Finally, by exploiting again the complexity bounds provided in Corollaries \ref{coro_opt} and \ref{coro_opt_dyn}, we have the following result on fixed-parameter tractability:
\begin{corollary}\label{corollary_fpt}
Both $\SRM$ and SBMSm are fixed-parameter tractable problems w.r.t. to $n$, if $|H|$ is polynomially bounded in the size of the input instance.
\end{corollary}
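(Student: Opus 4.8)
The plan is to derive the claim directly from the explicit running-time bounds already established in Corollaries \ref{coro_opt} and \ref{coro_opt_dyn}, by isolating the factor that depends only on the parameter $n$ from the factors that are polynomial in the instance size $|I|$. Recall that a problem is fixed-parameter tractable with respect to $n$ precisely when it admits an algorithm running in time $O(F(n)\cdot |I|^{p})$ for some computable function $F$ and constant $p>0$, as fixed in Section \ref{sec:model}. So the whole argument amounts to exhibiting such a decomposition of the two time bounds.

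First I would set $F(n):=2^n n!$, which is manifestly a computable function of the parameter $n$ alone. By Corollary \ref{coro_opt}, the tree-based algorithm solves $\SRM$ in time $O(2^n n! |H|)=O(F(n)\cdot |H|)$. Under the hypothesis that $|H|$ is polynomially bounded in the size of the input instance, there is a constant $c>0$ with $|H|=O(|I|^{c})$, so the running time becomes $O(F(n)\cdot |I|^{c})$, which is exactly of the required form with $p:=c$. This already settles the claim for $\SRM$.

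Next I would treat SBMSm analogously, starting from the bound $O(2^n n! \, n T^2 |H|)$ of Corollary \ref{coro_opt_dyn}. Here the additional factor $nT^2$ must also be absorbed into the polynomial part: since the instance size $|I|$ is assumed polynomial in $n$ and $T$ (in particular $n\leq |I|$ and $T\leq |I|$), we have $nT^2=O(|I|^{3})$, and combined with $|H|=O(|I|^{c})$ this yields a total running time $O(F(n)\cdot |I|^{c+3})$. Hence SBMSm is FPT with respect to $n$ as well, with $p:=c+3$, using the same function $F$.

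The only step deserving care --- though it is not a genuine obstacle --- is confirming that every factor other than $2^n n!$ is truly polynomial in $|I|$, and in particular does not secretly reintroduce a super-polynomial dependence on $n$; this relies on the standing conventions of Section \ref{sec:model} (the instance size is polynomial in $n$ and $T$ jointly) together with the explicit hypothesis on $|H|$. Once these observations are in place, the separation of $F(n)=2^n n!$ from the polynomial remainder is immediate, and the corollary follows at once from Corollaries \ref{coro_opt} and \ref{coro_opt_dyn}.
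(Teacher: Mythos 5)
Your proposal is correct and matches the paper's (largely implicit) argument: the paper derives Corollary \ref{corollary_fpt} directly from the time bounds of Corollaries \ref{coro_opt} and \ref{coro_opt_dyn}, exactly by separating the factor $2^n n!$ as the computable function of the parameter $n$ and absorbing $|H|$ and $nT^2$ into the polynomial part via the hypothesis on $|H|$ and the standing conventions on the instance size. You have simply made explicit the decomposition the paper leaves to the reader.
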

\section{Missing Proofs of Sections \ref{sec:sampling}}\label{appendix:oracle}
\subsection{Proof of Proposition \ref{prop:oracle_property}}
Let $t\in [T]$, $\langle S\rangle$ be a partial state at round $t$ and, for any $v\in V$, let $\bm X_{t,v,\langle S\rangle}$ and $X^1_{t,v,\langle S\rangle},\ldots, X^q_{t,v,\langle S\rangle}$ be the random quantities considered in the definition of {\sf Oracle1}. As $\bm X_{t,v,\langle S\rangle}$ has values in interval $[0,\Lambda]$ (by the polynomial boundedness of the objective function), we have that
\begin{align*}
\mathbb{P}\left[\exists v\in V:\Bigg|\sum_{j=1}^q \frac{X^j_{t,v,\langle S\rangle}}{q}-\mathbb{E}\left[\bm X_{t,v,\langle S\rangle}\right]\Bigg|\geq \delta/2\right]
&\leq \sum_{v\in V}\mathbb{P}\left[\Bigg|\sum_{j=1}^q \frac{X^j_{t,v,\langle S\rangle}}{q}-\mathbb{E}\left[\bm X_{t,v,\langle S\rangle}\right]\Bigg|\geq \delta/2\right]\\
&\leq n\cdot 2 \exp\left(-\frac{2(\delta/2)^2q}{\Lambda^2}\right)\\
&\leq n\cdot 2 \exp\left(-\frac{2(\delta/2)^2q(\delta,\xi)}{\Lambda^2}\right)\\
&=\xi.
\end{align*}
where the second inequality follows by the Chernoff-Hoeffding's inequality, and the last equality holds by construction of $q(\delta,\xi)$. The above inequalities imply that
\begin{equation}
\label{model_hoeff1}
\begin{aligned}
\mathbb{P}\left[|\Delta_t(v|\langle S_t\rangle)-\tilde{\Delta_t}(v|\langle S_t\rangle)|\leq \delta/2\ \forall v\in V\right]
&=\mathbb{P}\left[\Bigg|\mathbb{E}\left[\bm X_{t,v,\langle S\rangle}\right]-\sum_{j=1}^q \frac{X^j_{t,v,\langle S\rangle}}{q}\Bigg|\leq \delta/2\ \forall v\in V\right]\\
&\geq 1-\xi.
\end{aligned}
\end{equation}
Now, fix $v^*\in \arg\max_{v\in V}\Delta_t(v|\langle S_t\rangle)$ and $\tilde{v}\in \arg\max_{v\in V}\tilde{\Delta}_t(v|\langle S_t\rangle)$. For any $q\geq q(\delta,\xi)$, we have that
\begin{align}
\mathbb{P}\left[\max_{v\in V}\Delta_t(v|\langle S_t\rangle)-{\Delta_t}(\tilde{v}|\langle S_t\rangle)\leq \delta\right]
&=\mathbb{P}\left[\Delta_t(v^*|\langle S_t\rangle)-{\Delta_t}(\tilde{v}|\langle S_t\rangle)\leq \delta\right]\nonumber\\
&\geq \mathbb{P}\left[(\Delta_t(v^*|\langle S_t\rangle)-\tilde{\Delta_t}(v^*|\langle S_t\rangle))+(\tilde{\Delta_t}(\tilde{v}|\langle S_t\rangle)-{\Delta_t}(\tilde{v}|\langle S_t\rangle))\leq \delta\right]\label{oracle.eq1}\\
&\geq \mathbb{P}\left[|\Delta_t(v|\langle S_t\rangle)-\tilde{\Delta_t}(v|\langle S_t\rangle)|\leq \delta/2\ \forall v\in \{\tilde{v},v^*\}\right]\nonumber\\
&= \mathbb{P}\left[\left|\mathbb{E}\left[\bm X_{t,v,\langle S_t\rangle}\right]-\sum_{j=1}^q \frac{X^j_{t,v,\langle S\rangle}}{q}\right|\leq \delta/2\ \forall v\in \{\tilde{v},v^*\}\right]\label{oracle.eq1b}\\
&\geq \mathbb{P}\left[\Bigg|\mathbb{E}\left[\bm X_{t,v,\langle S_t\rangle}\right]-\sum_{j=1}^q \frac{X^j_{t,v,\langle S_t\rangle}}{q}\Bigg|\leq \delta/2\ \forall v\in V\right]\nonumber\\
&\geq 1-\xi,\label{oracle.eq2}
\end{align}
where \eqref{oracle.eq1} holds as $-\tilde{\Delta_t}(v^*|\langle S_t\rangle)+\tilde{\Delta_t}(\tilde{v}|\langle S_t\rangle)\geq 0$ (by the optimality of $\tilde{v}$ for $\tilde{\Delta_t}(\tilde{v}|\langle S_t\rangle)$), \eqref{oracle.eq1b}  holds since $\Delta_t(v|\langle S_t\rangle)=\mathbb{E}\left[\bm X_{t,v,\langle S_t\rangle}\right]$ and $\tilde{\Delta_t}(v|\langle S_t\rangle)=\sum_{j=1}^q \frac{X^j_{t,v,\langle S\rangle}}{q}$, and \eqref{oracle.eq2} follows from \eqref{model_hoeff1}. Finally, Property (i) follows from \eqref{model_hoeff1}.

Property (ii) immediately follows from: (a) the polynomial boundedness property, that polynomially relates the quantity $\Lambda$ (appearing in $q(\delta,\xi)$) to the size of the input instance; (b) the existence of an oracle that picks samples drawn from the underlying probability distributions (whose existence has been implicitly assumed in the model definition of Section \ref{sec:model}). 
\subsection{Proof of Proposition \ref{prop:oracle2_property}}
Given $t\in [T]$, $i\in [n]$ and $q\geq 1$, let $\bm Y_{t,i}$ and $Y_{t,i}^1,\ldots, Y_{t,i}^q$ be the random quantities considered in the definition of {\sf Oracle2}. As $\bm Y_{t,i}$ has values in $[0,\Lambda]$ we have that:
\begin{align*}
\mathbb{P}\left[\exists t\in [T],i\in [n]:\Bigg|\sum_{j=1}^q \frac{Y^j_{t,i}}{q}-\mathbb{E}\left[\bm Y_{t,i}\right]\Bigg|\geq \delta\right]
&\leq \sum_{t\in [T]}\sum_{i\in [n]}\mathbb{P}\left[\Bigg|\sum_{j=1}^q \frac{X^j_{t,v,\langle S\rangle}}{q}-\mathbb{E}\left[\bm X_{t,v,\langle S\rangle}\right]\Bigg|\geq \delta\right]\\
&\leq T\cdot n\cdot 2 \exp\left(-\frac{2\delta^2q}{\Lambda^2}\right)\\
&\leq T\cdot n\cdot 2 \exp\left(-\frac{2\delta^2q'(\delta,\xi)}{\Lambda^2}\right)\\
&=\xi.
\end{align*}
where the second inequality holds by the Chernoff-Hoeffding's inequality, and the last equality holds by construction of $q'(\delta,\xi)$. The above inequalities imply that
\begin{align*}
\mathbb{P}\left[|\tilde{\Delta}'_{t,i}-\Delta'_{t,i}|\leq \delta,\ \forall t\in [T],i\in [n]\right]
&=\mathbb{P}\left[\Bigg|\sum_{j=1}^q \frac{Y^j_{t,i}}{q}-\mathbb{E}\left[\bm Y_{t,i}\right]\Bigg|\leq\delta,\ \forall t\in [T],i\in [n]\right]\\
&\geq 1-\xi,
\end{align*}
and this shows property (i).

Property (ii) holds by similar arguments as in Proposition~\ref{prop:oracle_property}.
\section{Missing Proofs from Section \ref{subsec:approx}}\label{appendix:thm1}
\subsection{Full Proof of Lemma \ref{lem1}}
Continuing from the proof sketch given in Section \ref{subsec:approx}, it remains to analyze the case in which $d\geq 0$ is a generic real number, that extends in a non-trivial way the proof arguments given in the proof sketch. 

Let $d\in \mathbb{R}_{\geq 0}\setminus \mathbb{Z}$, $\pi^*_t(d)$ be an arbitrary optimal adaptive policy at round $t$ that selects $d$ items in expectation and $p:=d-\lfloor d\rfloor\in (0,1)$.
For any integer $i\geq 1$, let $\bm b_i$ be an independent random variable that is equal to $\lceil d\rceil$ with probability $p$ and equal to $\lfloor d\rfloor$ with probability $1-p$, let ${\bm S_i'}$ denote the (possibly empty) random set of at most $\bm b_i$ items selected by $\pi^*_t(d)$ after having already selected $\sum_{j=1}^{i-1}\bm b_j$ items (that is, $\sqcup_{i\in \mathbb{N}}\bm S_i'$ is the set of items selected by  $\pi^*_t(d)$ at round $t$), and let $\vec{\bm b}_{i-1}$ denote the vector $(\bm b_1,\ldots, \bm b_{i-1})$. Given $i\geq 2$ and a vector $\vec{b}_{i-1}\in \{\lfloor d\rfloor,\lceil d\rceil\}^{i-1}$, let $A(\vec{b}_{i-1})$ denote the event ``at least $\sum_{j=1}^{i-2}b_j+\lceil d\rceil$ items have been selected by $\pi^*_t(d)$ and $\vec{\bm b}_{i-1}=\vec{b}_{i-1}$'' and, given $b_i\in \{\lfloor d\rfloor,\lceil d\rceil\}$, let $\Delta(\pi^*_t(d)|A_i(\vec{b}_{i-1}),b_i)$ denote the expected increment of $f_t$ gained from the selection of ${\bm S_i'}$ (after the set $\sqcup_{j=1}^{i-1}\bm S_j'$ has already been selected) and conditioned by events $A_i(\vec{b}_{i-1})$ and ``$\bm b_i=b_i$''.

We have\footnote{To shorten the notation in the summations, we implicitly assume that $\vec{b}_{i-1}$ belongs to $\{\lfloor d\rfloor,\lceil d\rceil\}^{i-1}$ and $b_i$ belongs to $\{\lfloor d\rfloor,\lceil d\rceil\}$.}:
 \begin{align}
 &\overline{OPT}_t(d)=\sigma(\pi^*_t(d))\nonumber\\
 &=\sum_{i=1}^{\infty} \sum_{\vec{b}_{i-1}}\sum_{b_i}\mathbb{P}[A(\vec{b}_{i-1})\wedge \bm b_i=b_i]\cdot \Delta(\pi^*_t(d)|A_i(\vec{b}_{i-1}),b_{i})\label{app:lem1:ineq00}\\
  &=\sum_{i=1}^{\infty} \sum_{\vec{b}_{i-1}}\sum_{b_i}\mathbb{P}[A(\vec{b}_{i-1})]\cdot \mathbb{P}[\bm b_i=b_i]\cdot \Delta(\pi^*_t(d)|A_i(\vec{b}_{i-1}),b_{i})\label{app:lem1:ineq01}\\
  &=\sum_{i=1}^{\infty} \sum_{\vec{b}_{i-1}}\mathbb{P}[A(\vec{b}_{i-1})]\cdot \sum_{b_i}\mathbb{P}[\bm b_i=b_i]\cdot \Delta(\pi^*_t(d)|A_i(\vec{b}_{i-1}),b_{i})\nonumber\\
   &\leq \sum_{i=1}^{\infty} \sum_{\vec{b}_{i-1}}\mathbb{P}[A(\vec{b}_{i-1})]\cdot \sum_{b_i}\mathbb{P}[\bm b_i=b_i]\cdot OPT_t(b_{i})\label{app:lem1:ineq02}\\
 &= \sum_{i=1}^{\infty} \sum_{\vec{b}_{i-1}}\mathbb{P}[A(\vec{b}_{i-1})]\cdot(p\cdot OPT_t(\lceil d\rceil)+(1-p)\cdot OPT_t(\lfloor d\rfloor))\nonumber\\
 &=\sum_{i=1}^{\infty} \sum_{\vec{b}_{i-1}}\mathbb{P}[A(\vec{b}_{i-1})]\cdot OPT_t(d),\label{app:lem1:ineq0}
 \end{align}
 where:  \eqref{app:lem1:ineq00} is obtained by decomposing the total value of $\pi^*_t(d)$ in terms of the contribution given by the sets $\bm S'_i$, and by observing that $\bm S'_i>0$ holds only if $A_i(\vec{\bm b}_{i-1})$ is true;  \eqref{app:lem1:ineq01} holds since $\bm b_i$ does not depend on event $A(\vec{b}_{i-1})$;  \eqref{app:lem1:ineq02} holds by the strong adaptive submodularity; \eqref{app:lem1:ineq0} holds by definition of $OPT_t(d)$. 
 
 Furthermore, we have
 \begin{align}
 d&= \sum_{i=2}^{\infty}\mathbb{E}_{\bm \eta_t}[\#{\bm S_{i-1}'}]\nonumber\\
&\geq  \sum_{i=2}^{\infty} \sum_{\vec{b}_{i-1}}\mathbb{P}[A(\vec{b}_{i-1})]\cdot\mathbb{E}_{\bm \eta_t}[\#{\bm S_{i-1}'}|A(\vec{b}_{i-1})]\nonumber\\
&=  \sum_{i=2}^{\infty} \sum_{\vec{b}_{i-1}}\mathbb{P}[A(\vec{b}_{i-1})]\cdot\mathbb{E}_{\bm b_i}[\bm b_i]\label{app:lem1:ineq010}\\
&=  \sum_{i=2}^{\infty} \sum_{\vec{b}_{i-1}}\mathbb{P}[A(\vec{b}_{i-1})]\cdot d\label{app:lem1:ineq20}\\
 &\geq -d+\sum_{i=1}^\infty \sum_{\vec{b}_{i-1}}\mathbb{P}[A_{i}(\vec{b}_{i-1})]\cdot d\label{app:lem1:ineq3},
 \end{align}
where: \eqref{app:lem1:ineq010} holds since, if event $A(\vec{\bm b}_{i-1})$ is true, then at least $\lceil d\rceil$ items are selected in addition to $\sqcup_{j=1}^{i-2}\bm S'_j$ and, consequently, $\#\bm S'_i$ is necessarily equal to the value of random variable $\bm b_i$ (whose probability distribution is not conditioned by $A(\vec{\bm b}_{i-1})$); \eqref{app:lem1:ineq20} holds since $\mathbb{E}_{\bm b_{i}}[\bm b_{i}]=p \lceil d\rceil+(1-p)\lceil d\rceil=d$. By rearranging \eqref{app:lem1:ineq3}, we obtain 
 \begin{equation}\label{app:lem1:ineq3.1}
\sum_{i=1}^\infty \sum_{\vec{b}_{i-1}}\mathbb{P}[A(\vec{b}_{i-1})]\leq 2,
\end{equation}  
and by applying this inequality to \eqref{app:lem1:ineq0} we obtain 
\begin{equation*}
\overline{OPT}_t(d)\leq \sum_{i=1}^{\infty}\sum_{\vec{b}_{i-1}} \mathbb{P}[A(\vec{b}_{i-1})]\cdot OPT_t(d)\leq 2\cdot OPT_t(d),
\end{equation*}
that shows the claim.
\subsection{Full Proof of Theorem \ref{thm1} (for approximated oracles)}
To show the approximation guarantee of Theorem \ref{thm1} for the case of approximated oracles (i.e., with $\epsilon>0$, $\xi:=\delta:=\frac{\lambda c\epsilon}{B(4+3\Lambda)}$ and $0<c\leq 1$), we only need to modify partially the results in Parts 3-4 used for the case of exact oracles. In particular, $OPT$, $\vec{b}^*$ and $OPT_t(b)$ are defined as in the proof sketch of Theorem \ref{thm1}, while the definition of $GR_t(b)$, $\vec{b}$ and $GR_t(\vec{b})$ should be extended to the values $\delta,\xi>0$ adopted for approximated oracles. Inequalities (\ref{thm1:eq1}-\ref{thm1:eq4}) continue to hold in this setting. 

By Lemma \ref{lem5} below (that uses the adaptive submodularity), we show that the approximate greedy single-round policy with budget $b$ achieves an approximation of $(1-1/e))$, up to an addend $(\delta+\Lambda\xi)b$, and this leads to
\begin{equation}\label{app:thm1:eq5}
\begin{split}
&\sum_{t\in [T]}\left(1-\frac{1}{e}\right)OPT_t(b_t^*)-(\delta+\xi\Lambda)B=\sum_{t\in [T]}\left(\left(1-\frac{1}{e}\right)OPT_t(b_t^*)-(\delta+\xi\Lambda)b_t^*\right)\\
&\quad \leq \sum_{t\in [T]}{GR}_t(b^*_t)={GR}(\vec{b}^*).
\end{split}
\end{equation}
By exploiting the greedy assignment of ${\sf BudgetGr}_{\delta,\xi}$ and the adaptive submodularity, we have 
\begin{equation}\label{app:thm1:eq6}
GR(\vec{b}^*)-(3\delta+2\xi\Lambda)B\leq GR(\vec{b})
\end{equation}
(see Lemma \ref{lem6} below).
Furthermore, we have
\begin{equation}\label{app:thm1:eq7}
(4\delta+3\xi\Lambda)B\leq \epsilon\cdot \lambda\leq \epsilon\cdot OPT,
\end{equation}
where the first inequality holds by definition of $\delta$ and $\xi$, and the last one holds by Lemma \ref{lem7} below.
By combining inequalities (\ref{thm1:eq1}-\ref{thm1:eq4}) and (\ref{app:thm1:eq5}-\ref{app:thm1:eq7}) we obtain
\begin{align*}
&\frac{1}{2}\left(1-\frac{1}{e}-\epsilon\right)OPT\leq \left(1-\frac{1}{e}\right)\sum_{t\in [T]}OPT_t(b^*_t)-\epsilon\cdot OPT\leq \left(1-\frac{1}{e}\right)\sum_{t\in [T]}OPT_t(b^*_t)-(4\delta+3\xi\Lambda)B\\
& \quad \leq GR(\vec{b}^*)-(3\delta+2\xi\Lambda)B\leq GR(\vec{b}),
\end{align*}
and this shows the claim. 
\subsection{Technical Lemmas for the Full Proof of Theorem \ref{thm1} under Approximated Oracles}
\begin{lemma}\label{lem5}
For any $t\in [T]$ and integer $b\geq 0$, we have
$$
\left(1-\frac{1}{e}\right)OPT_t(b)-(\delta+\xi\Lambda)b\leq GR_t(b).
$$
\end{lemma}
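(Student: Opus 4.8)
The plan is to adapt the classical $(1-1/e)$ analysis of the adaptive greedy algorithm of \cite{golovin2011adaptive} to the setting of approximate oracles, carefully tracking the additive slack introduced at each greedy step. Write $\sigma_i$ for the expected value of $f_t$ produced by the greedy single-round policy after it has selected its first $i$ items (so $\sigma_0=0$ and $GR_t(b)=\sigma_b$), and let $\pi^*_t(b)$ be an optimal single-round policy selecting exactly $b$ items, with $\sigma(\pi^*_t(b))=OPT_t(b)$. The heart of the proof is a per-step improvement inequality
\begin{equation*}
\sigma_{i+1}-\sigma_i \;\ge\; \frac{1}{b}\bigl(OPT_t(b)-\sigma_i\bigr)-(\delta+\xi\Lambda),\qquad i\in\llbracket 0,b-1\rrbracket,
\end{equation*}
after which the bound follows by a routine telescoping. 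The case $b=0$ is trivial since both sides vanish.

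To establish the per-step inequality I would fix the (random) partial state $\Psi_i:=\langle \bm S_{t,i}\rangle$ reached after $i$ greedy selections and set $M(\Psi_i):=\max_{v\in V}\Delta_t(v\mid\Psi_i)$. Two ingredients are needed. First, adaptive submodularity (Definition~\ref{def:ad_subm}) yields $\mathbb{E}_{\Psi_i}[M(\Psi_i)]\ge \tfrac1b\bigl(OPT_t(b)-\sigma_i\bigr)$: the expected residual gain of running $\pi^*_t(b)$ after observing $\Psi_i$ is at least $OPT_t(b)-\sigma_i$ by monotonicity and a policy-concatenation argument, while adaptive submodularity bounds this residual gain by $b\cdot M(\Psi_i)$, since each of the at most $b$ items added by $\pi^*_t(b)$ under richer observations has marginal at most $M(\Psi_i)$. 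Second, I would account for the oracle error: letting $v^{gr}$ be the item actually chosen (it maximizes the empirical estimate returned by {\sf Oracle1}), Proposition~\ref{prop:oracle_property}(i) gives $M(\Psi_i)-\Delta_t(v^{gr}\mid\Psi_i)\le\delta$ with probability at least $1-\xi$, while in the complementary event the gap is at most $M(\Psi_i)\le\Lambda$ (each marginal is bounded by $\Lambda$ by polynomial boundedness, and $\Delta_t(v^{gr}\mid\Psi_i)\ge 0$). Hence $\mathbb{E}[\Delta_t(v^{gr}\mid\Psi_i)]\ge \mathbb{E}[M(\Psi_i)]-(1-\xi)\delta-\xi\Lambda\ge \mathbb{E}[M(\Psi_i)]-(\delta+\xi\Lambda)$, and since $\sigma_{i+1}-\sigma_i=\mathbb{E}_{\Psi_i}[\Delta_t(v^{gr}\mid\Psi_i)]$, combining the two ingredients gives the per-step inequality.

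For the final step I would set $g_i:=OPT_t(b)-\sigma_i$, so that the per-step inequality reads $g_{i+1}\le(1-\tfrac1b)g_i+(\delta+\xi\Lambda)$. Unrolling this recursion from $g_0=OPT_t(b)$ gives
\begin{equation*}
g_b\le \Bigl(1-\tfrac1b\Bigr)^b OPT_t(b)+(\delta+\xi\Lambda)\sum_{j=0}^{b-1}\Bigl(1-\tfrac1b\Bigr)^j\le \tfrac1e\, OPT_t(b)+b(\delta+\xi\Lambda),
\end{equation*}
using $(1-1/b)^b\le 1/e$ and bounding the geometric sum by $b$. Rearranging yields $GR_t(b)=\sigma_b\ge(1-1/e)OPT_t(b)-(\delta+\xi\Lambda)b$, which is the claim.

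The main obstacle I expect is the first ingredient of the per-step inequality, namely the adaptive-submodularity comparison $\mathbb{E}_{\Psi_i}[M(\Psi_i)]\ge \tfrac1b(OPT_t(b)-\sigma_i)$. This is precisely where the analysis of \cite{golovin2011adaptive} does the real work, requiring care both in the concatenation argument (that appending $\pi^*_t(b)$ to the greedy prefix recovers at least $OPT_t(b)$ in expectation) and in invoking adaptive submodularity to turn the expected benefit of the adaptive optimum into $b$ copies of the per-step maximum marginal. By contrast, the oracle-error bookkeeping is elementary and merely contributes the clean additive term $\delta+\xi\Lambda$ per step; setting $\delta=\xi=0$ throughout recovers the exact $(1-1/e)$ guarantee used in Lemma~\ref{lem3}.
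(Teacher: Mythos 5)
Your proposal is correct and follows essentially the same route as the paper's proof: the same per-step recurrence $GR_t(i)-GR_t(i-1)\geq \tfrac{1}{b}(OPT_t(b)-GR_t(i-1))-(\delta+\xi\Lambda)$, established by the same two ingredients (the concatenation/hybrid-policy argument with adaptive submodularity bounding the optimal residual by $b$ times the expected per-step maximum marginal, and the $(1-\xi)\delta+\xi\Lambda$ accounting for the oracle failure event), followed by the same unrolling with $(1-1/b)^b\leq 1/e$. The paper merely makes the concatenated policy explicit as $GR^*_t(i-1,b)$ and bounds each optimal item's increment $\Delta^*_{t,h}$ individually, but this is the same argument you sketch.
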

\begin{proof}
Fix $t\in [T]$ and an integer $b\geq 0$. For any $i\in [b]$, let $GR^*_t(i-1,b)$ be the value of an adaptive policy $\pi^*_t(i-1,b)$ at round $t$ that first executes the greedy single-round policy ${\sf \SGR}_{\delta,\xi}$ at round $t$ with budget $i-1$, then runs an optimal adaptive policy $\pi^*_t$ at round $t$ with budget $b$, without observing the realizations coming from the greedy one, and finally selects the items previously chosen by both policies. By the monotonicity property, we have
\begin{equation}\label{lem5:ineq1}
OPT_t(b)\leq GR^*_t(i-1,b).
\end{equation}
Furthermore, by exploiting the adaptive submodularity, we can show that inequality
\begin{equation}\label{lem5:ineq2}
GR_t(i)\geq \frac{GR^*_t(i-1,b)}{b}+\left(1-\frac{1}{b}\right)GR_t(i-1)-(\delta+\xi\Lambda)
\end{equation}
holds for any $i\in [b]$. 
Indeed, let $\bm v_{t,i}$ denote the (random) item selected by the greedy single-round policy ${\sf \SGR}_{\delta,\xi}$ at round $t$ and step $i$, $\bm S_{t,i-1}:=\{\bm v_{t,1},\ldots, \bm v_{t,i-1}\}$ denote the set of the first $i-1$ items selected by ${\sf \SGR}_{\delta,\xi}$, and $\Delta_{t,i}:=GR_t(i)-GR_t(i-1)$ denote the expected increment of $f_t$ achieved at step $i$ of the greedy single-round policy. Furthermore, for any partial state $\langle S_{t,i-1}\rangle\sim \langle\bm S_{t,i-1}\rangle$, let $A(\langle S_{t,i-1}\rangle)$ denote the event ``$\max_{v\in V}\Delta_t(v|\langle  S_{t,i-1}\rangle)\leq \Delta_t(\bm v_{t,i}|\langle S_{t,i-1}\rangle)+\delta$''; by the property (i) of Proposition~\ref{prop:oracle_property}, we have that, for any $\langle S_{t,i-1}\rangle \sim \langle\bm S_{t,i-1}\rangle$, $A(\langle S_{t,i-1}\rangle )$ is true with probability at least $1-\xi$, and then, its complementary event $\overline{A}(\langle S_{t,i-1} \rangle )$ is false with probability at most $\xi$. Thus, we have
\begin{align}
&\mathbb{E}_{\langle \bm S_{t,i-1}\rangle}\left[\max_{v\in V}\Delta_t(v|\langle \bm S_{t,i-1}\rangle)\right]\nonumber\\
    &= \mathbb{E}_{\langle \bm S_{t,i-1}\rangle}\left[\mathbb{P}[A(\langle\bm S_{t,i-1}\rangle)]\cdot \mathbb{E}_{\bm v_{t,i}}\left[\max_{v\in V}\Delta_t(v|\langle \bm S_{t,i-1}\rangle)|A(\langle\bm S_{t,i-1}\rangle)\right]\right]\nonumber\\
    &\quad\quad + \mathbb{E}_{\langle \bm S_{t,i-1}\rangle}\left[\mathbb{P}[\overline{A}(\langle\bm S_{t,i-1}\rangle)]\cdot \mathbb{E}_{\bm v_{t,i}}\left[\max_{v\in V}\Delta_t(v|\langle \bm S_{t,i-1}\rangle)|\overline{A}(\langle\bm S_{t,i-1}\rangle)\right]\right]\nonumber\\
    &\leq  \mathbb{E}_{\langle \bm S_{t,i-1}\rangle}\left[\mathbb{P}[A(\langle\bm S_{t,i-1}\rangle)]\cdot \mathbb{E}_{\bm v_{t,i}}\left[\Delta_t(\bm v_{t,i}|\langle \bm S_{t,i-1}\rangle)+\delta|A(\langle\bm S_{t,i-1}\rangle)\right]\right]\nonumber\\
    &\quad\quad + \mathbb{E}_{\langle \bm S_{t,i-1}\rangle}\left[\mathbb{P}[\overline{A}(\langle\bm S_{t,i-1}\rangle)]\cdot \mathbb{E}_{\bm v_{t,i}}\left[\Lambda|\overline{A}(\langle \bm S_{t,i-1}\rangle)\right]\right]\label{lem5:ineq30}\\
    &\leq  \mathbb{E}_{\langle \bm S_{t,i-1}\rangle}\left[\mathbb{E}_{\bm v_{t,i}}\left[\Delta_t(\bm v_{t,i}|\langle \bm S_{t,i-1}\rangle)+\delta|\langle \bm S_{t,i-1}\rangle\right]\right]+ \mathbb{E}_{\langle \bm S_{t,i-1}\rangle}\left[\xi\cdot \Lambda\right]\nonumber\\
        &=  \mathbb{E}_{\langle \bm S_{t,i-1}\rangle}\left[\mathbb{E}_{\bm v_{t,i}}\left[\Delta_t(\bm v_{t,i}|\langle \bm S_{t,i-1}\rangle)\right]|\langle \bm S_{t,i-1}\rangle\right]+\delta+\xi\Lambda\nonumber\\
&=\Delta_{t,i}+\delta+\xi\Lambda,\label{lem5:ineq3}
\end{align}
where \eqref{lem5:ineq30} holds since $\Lambda$, by definition, is an upper bound on the maximum increment achievable from the selection of any item. Furthermore, for any $h\in [b]$, let $\bm v^*_{t,h}$ be the $h$-th (random) item selected by the optimal adaptive policy $\pi^*_t$, let $\bm S^*_{t,h-1}:=\{\bm v^*_{t,1},\ldots, \bm v^*_{t,h-1}\}$ denote the set of the first $h-1$ items selected by $\pi^*_t$ and let $\Delta^*_{t,h}$ be the expected increment of $f_t$ due to the addition of item $\bm v^*_{t,h}$ to set $\bm S_{t,i-1}\cup \bm S^*_{t,h-1}$. 
Given $h\in [b]$, we have
\begin{align}
&\mathbb{E}_{\langle\bm S_{t,i-1}\rangle}\left[\max_{v\in V}\Delta_t(v|\langle\bm S_{t,{i-1}}\rangle)\right]\nonumber\\
&=\mathbb{E}_{\langle\bm S_{t,i-1}\cup \bm S^*_{t,h-1}\rangle}\left[\max_{v\in V}\Delta_t(v|\langle\bm S_{t,{i-1}}\rangle)\right]\nonumber\\
&=\mathbb{E}_{\langle\bm S_{t,i-1}\cup \bm S^*_{t,h-1}\rangle}\left[\mathbb{E}_{\bm v^*_{t,h}}\left[\max_{v\in V}\Delta_t(v|\langle\bm S_{t,{i-1}}\rangle)\right]\right]\nonumber\\
&\geq \mathbb{E}_{\langle\bm S_{t,i-1}\cup \bm S^*_{t,h-1}\rangle}\left[\mathbb{E}_{\bm v^*_{t,h}}\left[\Delta_t(\bm v^*_{t,h}|\langle\bm S_{t,{i-1}}\rangle)\right]\right]\nonumber\\
&\geq \mathbb{E}_{\langle\bm S_{t,i-1}\cup \bm S_{t,h-1}^*\rangle}\left[\mathbb{E}_{\bm v_{t,h}^*}\left[\Delta_t(\bm v_{t,h}^*|\langle\bm S_{t,i-1}\cup \bm S_{t,h-1}^*\rangle)\right]\right]\label{lem5:ineq40}\\
&=\Delta^*_{t,h},\label{lem5:ineq4}
\end{align}
where \eqref{lem5:ineq40} holds by the adaptive submodularity (as $\langle\bm S_{t,i-1}\rangle\prec \langle\bm S_{t,i-1}\cup \bm S^*_{t,h-1}\rangle$). By combining \eqref{lem5:ineq3} and \eqref{lem5:ineq4} we obtain
\begin{equation}\label{lem5:ineq5}
\Delta^*_{t,h}\leq \mathbb{E}_{\bm S_{t,i-1}}\left[\max_{v\in V}\Delta_t(v|\bm S_{t,{i-1}})\right]\leq \Delta_{t,i}+\delta+\xi\Lambda
\end{equation}
for any $h\in [b]$. Thus, by \eqref{lem5:ineq5} we obtain
\begin{equation*}
GR^*_t(i-1,b)-GR_t(i-1)=\sum_{h\in [b]}\Delta^*_{t,h}\leq b(\Delta_{t,i}+\delta+\xi\Lambda)=b(GR_t(i)-GR_t(i-1)+\delta+\xi\Lambda),
\end{equation*}
where the first equality holds since $GR^*_t(i-1,b)-GR_t(i-1)$ is the overall expected increment achieved by adding the $b$ items selected by $\pi^*_t$ to those already selected by ${\sf \SGR}_{\delta,\xi}$. Finally, by rearranging the above inequality, we get inequality \eqref{lem5:ineq2}.

By combining \eqref{lem5:ineq1} and \eqref{lem5:ineq2} we obtain
\begin{equation}\label{lem5:ineq6}
GR_t(i)\geq \frac{GR^*_t(i-1,b)}{b}+\left(1-\frac{1}{b}\right)GR_t(i-1)-(\delta+\xi\Lambda)\geq \frac{OPT_t(b)}{b}+\left(1-\frac{1}{b}\right)GR_t(i-1)-(\delta+\xi\Lambda).
\end{equation}

By applying \eqref{lem5:ineq6} recursively from $i=1$ to $i=b$, with the base step $GR_t(0)=0$, we obtain
\begin{align*}
GR_t(b)&\geq OPT_t(b)\left(\frac{1}{b}\right)\sum_{i=0}^{b-1}\left(1-\frac{1}{b}\right)^{b}-\sum_{i=0}^{b-1}\left(1-\frac{1}{b}\right)^{b}(\delta+\xi\Lambda)\\
&\geq OPT_t(b)\left(\frac{1}{b}\right)\sum_{i=0}^{b-1}\left(1-\frac{1}{b}\right)^{b}-b(\delta+\xi\Lambda)\\
&=OPT_t(b)\left(1-\left(1-\frac{1}{b}\right)^b\right)-b(\delta+\xi\Lambda)\\
&\geq OPT_t(b)\left(1-\frac{1}{e}\right)-b(\delta+\xi\Lambda),
\end{align*}
where the last inequality follows from the known inequality $(1+x)^r< e^{rx}$, holding for any $x\geq -1$ and $r>0$, applied with $x=-1/B$ and $r=B$. By the above inequalities, the claim follows. 
\end{proof}
\begin{lemma}\label{lem6}
For any vector $\vec{b}'=(b_1',\ldots, b_T')$ of non-negative integers such that $\sum_{t\in [T]}b_t'=B$ we have $$GR(\vec{b}')-(3\delta+2\xi\Lambda)B\leq GR(\vec{b}),$$ where $\vec{b}$ is the vector returned by ${\sf BudgetGr}_{\delta,\epsilon}$. 
\end{lemma}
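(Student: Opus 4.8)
The plan is to mirror the exact-oracle argument of Lemma~\ref{lem4}, but to carry the sampling errors through and to absorb the contribution of the low-probability event on which {\sf Oracle2} is inaccurate. Throughout, write $\Delta_{t,i}$ for the true expected increment of the $i$-th item selected by the approximate single-round greedy ${\sf \SGR}_{\delta,\xi}$ at round $t$, $\tilde{\Delta}_{t,i}$ for its {\sf Oracle2} estimate, and $\overline{\Delta}_{t,i}=\min_{j\le i}\tilde{\Delta}_{t,j}$ for the running minimum used by ${\sf BudgetGr}_{\delta,\xi}$. Since $\vec{b}'$ is a fixed vector, $GR(\vec{b}')=\sum_{t}\sum_{i\in[b'_t]}\Delta_{t,i}$ is deterministic, whereas $\vec b$ (hence $GR(\vec b)=\mathbb{E}_{\vec b}[\sum_t\sum_{i\in[b_t]}\Delta_{t,i}]$) is random through the oracle samples.

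First I would establish an approximate monotonicity of $\Delta_{t,i}$ in $i$. Set $M_{t,i}:=\mathbb{E}_{\langle\bm S_{t,i-1}\rangle}[\max_{v\in V}\Delta_t(v\mid\langle\bm S_{t,i-1}\rangle)]$. Since the greedy builds its sets incrementally, $\langle\bm S_{t,j-1}\rangle\prec\langle\bm S_{t,i-1}\rangle$ for $j\le i$, and adaptive submodularity gives $\max_v\Delta_t(v\mid\langle S_{t,j-1}\rangle)\ge\max_v\Delta_t(v\mid\langle S_{t,i-1}\rangle)$ pointwise, so $M_{t,j}\ge M_{t,i}$; that is, $M_{t,i}$ is non-increasing. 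Combining $\Delta_{t,i}\le M_{t,i}$ (the selected item is no better than the true maximizer) with $M_{t,i}\le\Delta_{t,i}+\delta+\xi\Lambda$ — which is exactly inequality~\eqref{lem5:ineq3} from the proof of Lemma~\ref{lem5} — yields, for $j\le i$, the bound $\Delta_{t,i}\le\Delta_{t,j}+\delta+\xi\Lambda$.

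Next I would relate $\overline{\Delta}_{t,i}$ to $\Delta_{t,i}$ on the good event $E=\{|\tilde\Delta_{t,i}-\Delta_{t,i}|\le\delta\ \forall t,i\}$, which by Proposition~\ref{prop:oracle2_property} satisfies $\mathbb{P}[E]\ge1-\xi$. On $E$ the upper bound $\overline{\Delta}_{t,i}\le\tilde\Delta_{t,i}\le\Delta_{t,i}+\delta$ is immediate, while $\overline{\Delta}_{t,i}=\min_{j\le i}\tilde\Delta_{t,j}\ge(\min_{j\le i}\Delta_{t,j})-\delta\ge\Delta_{t,i}-(2\delta+\xi\Lambda)$, using the approximate monotonicity from the previous step. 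Crucially, $\overline{\Delta}_{t,i}$ is non-increasing in $i$ by construction for \emph{every} realization of the samples, so the exchange argument of Lemma~\ref{lem4} applies verbatim to the weights $\overline{\Delta}$: for every realization, $\sum_t\sum_{i\in[b'_t]}\overline{\Delta}_{t,i}\le\sum_t\sum_{i\in[b_t]}\overline{\Delta}_{t,i}$, because ${\sf BudgetGr}$ greedily allocates each unit to the round maximizing $\overline{\Delta}_{t,b_t+1}$.

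Finally I would assemble the bound. On $E$, chaining the two-sided estimates through the always-valid exchange inequality gives, for the realized $\vec b$, $\sum_t\sum_{i\in[b_t]}\Delta_{t,i}\ge\sum_t\sum_{i\in[b_t]}\overline{\Delta}_{t,i}-\delta B\ge\sum_t\sum_{i\in[b'_t]}\overline{\Delta}_{t,i}-\delta B\ge GR(\vec b')-(3\delta+\xi\Lambda)B$, where $\sum_tb_t=\sum_tb'_t=B$ turns each per-term error into a factor $B$. The remaining obstacle — and the source of the extra $\xi\Lambda$ term in the statement — is the bad event. If $GR(\vec b')\le(3\delta+2\xi\Lambda)B$ the claim is trivial since $GR(\vec b)\ge0$; otherwise $GR(\vec b')-(3\delta+\xi\Lambda)B>\xi\Lambda B\ge0$, so dropping the non-negative contribution of $\overline{E}$ I would write $GR(\vec b)\ge\mathbb{E}[(\sum_t\sum_{i\in[b_t]}\Delta_{t,i})\mathbf{1}_E]\ge(GR(\vec b')-(3\delta+\xi\Lambda)B)(1-\xi)$, and then use $0\le GR(\vec b')\le\Lambda B$ to absorb the multiplicative $(1-\xi)$ loss into an additive $\xi\Lambda B$, yielding $GR(\vec b)\ge GR(\vec b')-(3\delta+2\xi\Lambda)B$. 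I expect this last bookkeeping step, rather than the exchange argument itself, to be the most delicate part, precisely because $\vec b$ is random and one must convert a conditional-on-$E$ inequality into an unconditional one without losing more than $\xi\Lambda B$.
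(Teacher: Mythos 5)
Your proposal is correct and follows essentially the same route as the paper's proof: the approximate monotonicity $\Delta_{t,i}\leq\Delta_{t,j}+\delta+\xi\Lambda$ for $j\leq i$ via adaptive submodularity, the two-sided bounds $\Delta_{t,i}-(2\delta+\xi\Lambda)\leq\overline{\Delta}_{t,i}\leq\Delta_{t,i}+\delta$ on the good oracle event, the Lemma~\ref{lem4} exchange argument applied to the always-non-increasing weights $\overline{\Delta}_{t,i}$, and a $\xi\Lambda B$ charge for the bad event. The only (immaterial) difference is that the paper splits the deterministic quantity $GR(\vec{b}')$ into its conditional expectations over $A$ and $\overline{A}$ and bounds the $\overline{A}$ term by $\xi\Lambda B$ directly, whereas you lower-bound $GR(\vec{b})$ by its restriction to $E$ and absorb the multiplicative $(1-\xi)$ using $GR(\vec{b}')\leq\Lambda B$; both yield the same constant $(3\delta+2\xi\Lambda)B$.
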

\begin{proof}
For any $t\in [T], i\in [n]$, let $\Delta_{t,i}$ denote the expected increment of the greedy single-round policy ${\sf \SGR}_{\delta,\xi}$ with budget $n$ at round $t$, achieved when selecting the $i$-th item. 
By exploiting the adaptive submodularity as in \eqref{lem5:ineq5}, we obtain 
\begin{equation}\label{lem6:ineq0}
\Delta_{t,j}+\delta+\xi\Lambda\geq \Delta_{t,i}
\end{equation}
for any $i,j\in [n]$ with $i\leq j$ (that is, the approximate version of inequality \eqref{lem4:ineq3} of Lemma \ref{lem4}). 

Let $\tilde{\bm \Delta}_{t,i}$ be the estimation of $\Delta_{t,i}$ considered in ${\sf BudgetGr}_{\delta,\epsilon}$ and computed by {\sf Oracle2}, so that the event $A:=$``$|\tilde{\bm \Delta}_{t,i}-{\Delta}_{t,i}|\leq \delta$ for any $t\in [T]$ and $i\in [n]$'' is true with probability at least $1-\xi$, and let $\overline{\bm \Delta}_{t,i}:=\min_{i'\in [i]}\tilde{\bm \Delta}_{t,i'}$. 
Given a realization $(\tilde{ \Delta}_{t,i})_{t,i}\sim (\tilde{\bm \Delta}_{t,i})_{t,i}$ such that event $A$ is true, $t\in [T]$, $i\in [n]$ and $i^*\in \arg\min_{i'\in [i]} \tilde{ \Delta}_{t,i'}$ (that is, $\tilde{ \Delta}_{t,i^*}=\overline{ \Delta}_{t,i}$), we have  
\begin{equation}\label{lem6:ineq11}
{\Delta}_{t,i}\leq\Delta_{t,i^*}+\delta+\xi\Lambda\leq \tilde{\Delta}_{t,i^*}+\delta+\delta+\xi\Lambda=\overline{\Delta}_{t,i}+2\delta+\xi\Lambda,
\end{equation}
where the first inequality holds by \eqref{lem6:ineq0} and the second one holds since event $A$ is true by assumption; furthermore, we have 
\begin{equation}\label{lem6:ineq1}
\overline{\Delta}_{t,i}\leq \tilde{\Delta}_{t,i}\leq\Delta_{t,i}+\delta,
\end{equation}
where the second inequality again holds since event $A$ is true by assumption.

%

Now, let $\vec{b}'=(b_1',\ldots, b_T')$ be an arbitrary vector of non-negative integers such that $\sum_{t\in [T]}b_t'$. As $\overline{\Delta}_{t,i}$ is non-decreasing in $i$, by using the same proof arguments as in Lemma \ref{lem4}, we have that the vector $\vec{b}$ returned by ${\sf BudgetGr}_{\delta,\epsilon}$ satisfies 
\begin{equation}\label{lem6:ineq20}
\sum_{t\in [T]}\sum_{i\in [b_t']}\overline{\Delta}_{t,i}\leq \sum_{t\in [T]}\sum_{i\in [b_t]}\overline{\Delta}_{t,i}.
\end{equation}
By exploiting \eqref{lem6:ineq11}, \eqref{lem6:ineq1}, \eqref{lem6:ineq20} and the definition of event $A$, we have\footnote{We observe that $(\Delta_{t,i})_{t,i}$ is a constant that does not depend on the realization of random variable $(\tilde{\bm \Delta}_{t,i})_{t,i}$.}
\begin{align}
GR(\vec{b}')&=\sum_{t\in [T]}\sum_{i\in [b_t']}\Delta_{t,i}\nonumber\\
&=\mathbb{P}[A]\cdot \mathbb{E}_{(\overline{\bm \Delta}_{t,i})_{t,i}}\left[\sum_{t\in [T]}\sum_{i\in [b_t']}\Delta_{t,i}\Bigg|A\right]+(1-\mathbb{P}[A])\cdot \mathbb{E}_{(\tilde{\bm \Delta}_{t,i})_{t,i}}\left[\sum_{t\in [T]}\sum_{i\in [b_t']}\Delta_{t,i}\Bigg|\overline{A}\right]\nonumber\\
&\leq \mathbb{P}[A]\cdot \mathbb{E}_{(\overline{\bm \Delta}_{t,i})_{t,i}}\left[\sum_{t\in [T]}\sum_{i\in [b_t']}(\overline{\bm \Delta}_{t,i}+2\delta+\xi\Lambda)\Bigg|A\right]+\xi\cdot \sum_{t\in [T]}\sum_{i\in [b_t']}\Lambda\label{lem6:ineq2}\\
&= \mathbb{P}[A]\cdot \mathbb{E}_{(\overline{\bm \Delta}_{t,i})_{t,i}}\left[\sum_{t\in [T]}\sum_{i\in [b_t']}(\overline{\bm \Delta}_{t,i})\Bigg|A\right]+B(2\delta+\xi\Lambda)+\xi B\Lambda\nonumber\\
&\leq \mathbb{P}[A]\cdot \mathbb{E}_{(\overline{\bm \Delta}_{t,i})_{t,i}}\left[\sum_{t\in [T]}\sum_{i\in [b_t]}(\overline{\bm \Delta}_{t,i})\Bigg|A\right]+B(2\delta+\xi\Lambda)+\xi B\Lambda\label{lem6:ineq3}\\
&\leq \mathbb{P}[A]\cdot \mathbb{E}_{(\overline{\bm \Delta}_{t,i})_{t,i}}\left[\sum_{t\in [T]}\sum_{i\in [b_t]}(\Delta_{t,i}+\delta)\Bigg|A\right]+B(2\delta+\xi\Lambda)+\xi B\Lambda\label{lem6:ineq4}\\
&= \mathbb{P}[A]\cdot \mathbb{E}_{(\overline{\bm \Delta}_{t,i})_{t,i}}\left[\sum_{t\in [T]}\sum_{i\in [b_t]}\Delta_{t,i}\Bigg|A\right]+B\delta+B(2\delta+\xi\Lambda)+\xi B\Lambda\nonumber\\
&\leq  \sum_{t\in [T]}\sum_{i\in [b_t]}\Delta_{t,i}+B\delta+B(2\delta+\xi\Lambda)+\xi B\Lambda\nonumber\\
&= GR(\vec{b})+B(3\delta+2\xi\Lambda)\label{lem6:ineq5}
\end{align}
where $\overline{A}$ denotes the complementary event of $A$, \eqref{lem6:ineq2} holds by \eqref{lem6:ineq11} and $1-\mathbb{P}[A]\leq \xi$, \eqref{lem6:ineq3} follows from \eqref{lem6:ineq20},  and \eqref{lem6:ineq4} holds by \eqref{lem6:ineq1}. By \eqref{lem6:ineq5}, the claim follows. 
\end{proof}
\begin{lemma}\label{lem7}
We have $\lambda\leq OPT$.
\end{lemma}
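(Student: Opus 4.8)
The plan is to lower-bound $OPT$ by exhibiting a single concrete feasible policy whose expected value is already at least $\lambda$; since $OPT=\sigma(\pi^*)$ is the maximum over all multi-round policies, this yields $\lambda\leq OPT$ at once. The whole lemma is really a one-line feasibility observation, so the proof will be short.

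First I would invoke property (ii) of the polynomial boundedness definition to fix a round $t\in[T]$ and an item $v\in V$ with $\lambda\leq \mathbb{E}_{\bm\eta_t\sim\mathcal{P}_t}[f_t(\{v\},\bm\eta_t)]$. I would then consider the policy $\pi$ that selects the single item $v$ at round $t$ and selects no item at every other round. This policy is non-adaptive (hence in particular a legitimate adaptive policy) and spends exactly one unit of budget, so it is feasible, using here that $B\geq 1$. This must hold: otherwise no item could ever be selected, giving $OPT=0<\lambda$, and moreover the choice $\delta=\xi=\lambda c\epsilon/(B(4+3\Lambda))$ in Theorem~\ref{thm1} already presupposes $B\geq 1$.

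Next I would evaluate $\sigma(\pi)$. Because $\pi$ sets $S_t=\{v\}$ and $S_{t'}=\emptyset$ for all $t'\neq t$, and the aggregated objective is the sum of the per-round objectives, linearity of expectation gives
\begin{equation*}
\sigma(\pi)=\mathbb{E}_{\bm\eta_t\sim\mathcal{P}_t}[f_t(\{v\},\bm\eta_t)]+\sum_{t'\neq t}\mathbb{E}_{\bm\eta_{t'}\sim\mathcal{P}_{t'}}[f_{t'}(\emptyset,\bm\eta_{t'})]\geq \mathbb{E}_{\bm\eta_t\sim\mathcal{P}_t}[f_t(\{v\},\bm\eta_t)]\geq \lambda,
\end{equation*}
where the first inequality drops the remaining rounds using the non-negativity of each $f_{t'}\colon 2^V\times H\to\RP$, and the last inequality is the defining property of the chosen $v$ and $t$. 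Finally, optimality of $\pi^*$ yields $OPT=\sigma(\pi^*)\geq\sigma(\pi)\geq\lambda$, which is the claim.

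There is essentially no hard part here: the entire content is that the single-item policy is feasible and that property (ii) controls its value from below. The only point requiring a moment of care is the feasibility check, i.e.\ ensuring $B\geq 1$ so that at least one item can in fact be selected; as noted above this is guaranteed by the conventions already in force (in particular $\lambda>0$ forces $OPT>0$, hence $B\geq 1$).
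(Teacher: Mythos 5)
Your proof is correct and takes essentially the same route as the paper: both identify the round $t$ and item $v$ guaranteed by property (ii) of polynomial boundedness and observe that the value $\mathbb{E}_{\bm\eta_t\sim\mathcal{P}_t}[f_t(\{v\},\bm\eta_t)]$ is achievable by a feasible policy, hence bounded by $OPT$. You merely spell out the feasibility of the single-item policy (and the implicit $B\geq 1$ assumption) that the paper leaves tacit.
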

\begin{proof}
By definition of $\lambda$, there exist $t\in [T]$ and $v\in V$ such that $\lambda\leq \mathbb{E}_{\bm \eta_t\sim\mathcal{P}_t}[f_t(\{v\},\bm \eta_t)]$. As $OPT$ denotes the optimal adaptive value, we necessarily have $\mathbb{E}_{\bm \eta_t\sim\mathcal{P}_t}[f_t(\{v\},\bm \eta_t)]\leq OPT$. Thus, by combining the obtained inequalities, the claim of the lemma follows. 
\end{proof}
\end{document}